\documentclass[a4paper,11pt]{article}

\usepackage[USenglish]{babel}

\usepackage[T1]{fontenc}
\usepackage[utf8]{inputenc}
\usepackage{url}

\usepackage[a4paper,margin=2cm]{geometry}
\usepackage{microtype}
\usepackage{enumitem}
\usepackage{amsmath,amssymb,amsfonts}
\usepackage{amsthm}
\usepackage{soul}
\usepackage{thmtools}      
\usepackage{thm-restate}   
\usepackage{multirow}
\usepackage{hyperref}
\usepackage[nameinlink,noabbrev]{cleveref} 

\usepackage{graphicx}
\usepackage[ruled,vlined,linesnumbered]{algorithm2e}
\usepackage{xspace}
\usepackage{bbding}

\usepackage{xcolor}
\usepackage[colorinlistoftodos,prependcaption,textsize=tiny]{todonotes}

\usepackage[numbers]{natbib}

\newcommand{\leader}{\texttt{Leader}\xspace}
\newcommand{\nonleader}{\texttt{Non-Leader}\xspace}

\newcommand{\msg}{\texttt{m}\xspace}

\newcommand{\myUB}{\textsf{d}\xspace}

\newcommand{\relay}{{KillACWMessageAndRelay}\xspace}

\newcommand{\actif}{A\xspace}

\newcommand{\lost}{B\xspace}

\newcommand{\cA}{\ensuremath{\mathcal{A}\xspace}}

\newcommand{\N}{\ensuremath{\mathbb{N}\xspace}}

\newcommand{\myID}{\ensuremath{\mathsf{ID}\xspace}}

\newcommand{\id}{\mathsf{ID}\xspace}
\newcommand{\rand}{\mathsf{rand}\xspace}

\newcommand{\idmax}{\id_{\max}}
\newcommand{\idmin}{\id_{\min}}

\newcommand{\cw}{\text{CW}}
\newcommand{\ccw}{\text{CCW}}

\DeclareMathOperator{\len}{len}
\DeclareMathOperator{\bit}{bit}
\DeclareMathOperator{\sol}{sol}
\DeclareMathOperator{\pred}{pred}
\DeclareMathOperator{\pref}{pref}
\DeclareMathOperator{\suc}{succ}

\newtheorem{lemma}{Lemma}
\newtheorem{proposition}{Proposition}

\newtheorem{observation}{Observation}
\newtheorem{claim}{Claim}
\newtheorem{definition}{Definition}

\title{Non-Uniform Content-Oblivious Leader \\ Election on Oriented Asynchronous Rings}

\author{
  \hspace{1cm} Jérémie Chalopin\thanks{Aix Marseille Univ, CNRS, LIS, Marseille, France. \texttt{jeremie.chalopin@lis-lab.fr}. ORCID: \href{https://orcid.org/0000-0002-2988-8969}{0000-0002-2988-8969}.}
  \and Yi-Jun Chang\thanks{National University of Singapore, Singapore. \texttt{cyijun@nus.edu.sg}. ORCID: \href{https://orcid.org/0000-0002-0109-2432}{0000-0002-0109-2432}.}
  \and Lyuting Chen\thanks{National University of Singapore, Singapore. \texttt{e0726582@u.nus.edu.sg}. ORCID: \href{https://orcid.org/0009-0002-8836-6607}{0009-0002-8836-6607}.}  \hspace{1cm}
  \and Giuseppe A. Di Luna\thanks{DIAG, Sapienza University of Rome, Italy. \texttt{diluna@diag.uniroma1.it}.}
  \and Haoran Zhou\thanks{National University of Singapore, Singapore. \texttt{haoranz@u.nus.edu}. ORCID: \href{https://orcid.org/0009-0001-2458-5344}{0009-0001-2458-5344}.}
}

\date{} 



\begin{document}

\maketitle










\begin{abstract}

We study the leader election problem in oriented ring networks under content-oblivious asynchronous message-passing systems, where an adversary may arbitrarily corrupt message contents.

Frei~et~al.~(DISC 2024) presented a uniform terminating leader election algorithm for oriented rings in this setting, with message complexity $O(n \cdot  \myID_{\max})$ on a ring of size $n$, where $\myID_{\max}$ is the largest identifier in the system, this result has been recently extended by Chalopin~et~al.~(DISC 2025) to unoriented rings.  

In this paper, we investigate the message complexity of leader election on ring networks in the content-oblivious model, showing that no uniform algorithm can solve the problem if each process is limited to sending a constant number of messages in one direction.

Interestingly, this limitation hinges on the \emph{uniformity}
assumption. In the \emph{non-uniform} setting, where processes know an
upper bound $U \geq n$ on the ring size, we present an algorithm with
message complexity $O(n \cdot U \cdot  \myID_{\min})$, in which each
process sends $O(U \cdot \myID_{\min})$ messages clockwise and only three
messages counter-clockwise. Here, $\myID_{\min}$ is the smallest
identifier in the system. This dependence on the identifiers compares favorably
with the dependence on $\myID_{\max}$ of Frei~et~al.

We also show a non-uniform algorithm where each process
sends $O(U \cdot \log\myID_{\min})$ messages in one direction and
$O(\log\myID_{\min})$ in the other. The factor $\log \myID_{\min}$ is \emph{optimal}, matching the lower bound of
Frei~et~al.

Finally, in the anonymous setting, where processes do not have identifiers, we propose a randomized algorithm where each process sends only $O(\log^2 U)$ messages, with a success probability of $1 - U^{-c}$.

\end{abstract}

\thispagestyle{empty}
\newpage
\thispagestyle{empty}
\tableofcontents
\newpage
\pagenumbering{arabic}

\section{Introduction}\label{sec1}

The field of distributed computing is characterized by the study of a wide range of failure models, from failures affecting specific processes (such as crash-stop failures and Byzantine failures \cite{cachin2011introduction,raynal2018fault}) to failures affecting communication channels (such as message omission failures, message addition failures or message corruptions \cite{santoro1989time}). 
The recent \cite{censor2023distributed} introduced an interesting failure model for channels: the fully-defective network. In this model, processes do not fail, but all messages in transit may be arbitrarily corrupted by an adversary. While the content of messages may be altered, messages cannot be created or destroyed.

This model is therefore content-oblivious, meaning that messages carry no reliable content, not even the identity of the sender. The only information a message conveys is its existence. 

Surprisingly, \cite{censor2023distributed} showed that if a leader process is initially known and when the topology is two-edge-connected, it is possible to simulate a reliable, uncorrupted message-passing network, even in asynchronous settings, within this fully-defective model. Interestingly, they have also shown that if the network is not two-edge-connected, then non-trivial computations cannot be carried out in the content-oblivious model. 

This result makes studying leader election algorithms for this model particularly compelling. As a matter of fact, if a leader can be elected in a certain topology, with an algorithm that is terminating and where it is possible to distinguish in some way the messages of the leader election algorithm, then on such topology any asynchronous algorithm can be simulated even if all messages are arbitrarily corrupted.  
In this regard \cite{censor2023distributed}
conjectured that having a leader was a necessary requirement and that, on general networks, a leader election algorithm was impossible to build in a content-oblivious network. 

Frei et al.~\cite{frei2024content} disproved the aforementioned conjecture for a notable family of two-edge-connected networks: oriented ring topologies. 
In an oriented ring, processes have a common notion of clockwise and counter-clockwise direction, that is, each edge of a process is locally associated to a port number that is either $0$ or $1$, and such labeling is globally consistent (more details in the System Model Section). 
More specifically,~\cite{frei2024content} proposed a terminating leader election algorithm with a message complexity of $O(n \cdot \myID_{\max})$, where $\myID_{\max}$ is the maximum identifier in the system.  The algorithm leads to \cite[Corollary 4]{frei2024content}, which shows that the presence of unique identifiers is enough to simulate any asynchronous algorithm on a content-oblivious oriented ring. 

This result has been recently extended in \cite{cha2025content} to unoriented rings, that are rings where there is no agreement on clockwise (CW) or counter-clockwise (CCW) direction, showing that, when unique IDs are present, content-oblivious rings are computationally equivalent to non-faulty rings. 

 A natural question is whether the assumption of distinct node $\id$s can be removed. In such an anonymous setting, nodes must rely on randomness to break symmetry. The classic impossibility result of~\cite{Itai90negative} shows that no uniform algorithm can measure the size of a ring when nodes are anonymous. In our setting, this implies that no uniform algorithm can both elect a leader and guarantee explicit termination on anonymous rings, since once a leader is elected, the ring size can be learned using the general algorithm simulation from the work~\cite{censor2023distributed}.

For such a reason \cite{frei2024content} provides a randomized leader election algorithm that uses $n^{O(c^2)}$ messages and succeeds with probability $1 - n^{-c}$, but does not explicitly terminate.

\subsection{Contributions}

Previous works~\cite{frei2024content,cha2025content} focused on feasibility results, and all existing algorithms have a message complexity that grows linearly with the values of process identifiers. Our focus is to study the message complexity of leader election in oriented rings, aiming to design algorithms with improved performance and where the dependence on the identifiers is not linear.

 Regarding necessary complexity, in Section~\ref{sec:imp} we show that a uniform algorithm, even in oriented rings, cannot elect a leader while sending only a constant number of messages in a single direction. More precisely:

\begin{restatable}{theorem}{boundth}
For any uniform leader election algorithm and any bound $b \in \mathbb{N}$ there exists an oriented ring where at least one process sends more than $b$ messages in each direction. 
\label{thm:boundth}
\end{restatable}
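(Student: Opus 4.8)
The plan is to argue by contradiction using an indistinguishability/pumping argument on the asynchronous schedule. Suppose some uniform algorithm $\mathcal{A}$ guarantees that, on every oriented ring, every process sends at most $b$ messages clockwise and at most $b$ messages counter-clockwise. Fix a large ring size $n$ and consider a ring $R_n$ whose processes have pairwise distinct identifiers (so correctness of $\mathcal{A}$ applies). Because $\mathcal{A}$ is uniform, each process runs the same code and only knows its own $\id$ and the port labels; in the content-oblivious model the only information a received message carries is ``a message arrived on port $0$'' or ``a message arrived on port $1$.'' Thus the entire behavior of a process is a function of its $\id$ and the \emph{sequence of port-bits} of the messages it has received so far, together with the (adversarially scheduled) interleaving — but since each port delivers at most $b$ messages under the assumption, each process's local execution is completely determined by an element of a \emph{finite} set: its $\id$ together with the $\le b$ bits it will ever receive on each side.

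The core step is to choose the adversarial schedule so that an honest execution on $R_n$ becomes indistinguishable, from the viewpoint of a contiguous block of processes, from an execution on a \emph{larger} ring $R_{n'}$ with $n' > n$. First I would run $\mathcal{A}$ on $R_n$ under some fair schedule; by assumption it terminates with a unique leader $\ell$ after each process has sent at most $b$ messages in each direction. Now I would build a ring $R_{n'}$ by ``splicing in'' additional processes along one arc, reusing identifiers from a far part of the ring (non-uniformity is not assumed, so nothing forbids the adversary from constructing a ring where IDs need not be globally distinct — but to keep correctness I will instead keep all IDs distinct and rely on a more careful cut-and-paste). Concretely, I pick two processes $u,v$ that are ``far'' from the leader and whose local views (the $\le 2b$ port-bits they receive) are identical in the $R_n$-execution; such a pair exists once $n$ is large enough because the number of distinct possible views is bounded by a function of $b$ only, independent of $n$ — this is the pigeonhole heart of the argument. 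I then form a new ring by duplicating the arc between $u$ and $v$ and stitching the copies together, and I schedule messages on the new ring to mimic, arc by arc, the message pattern of the $R_n$-execution. Since every process along the copied arcs sees exactly the port-bit sequence it saw in $R_n$, and since a content-oblivious process cannot tell which physical neighbor sent a message, every process in $R_{n'}$ behaves exactly as its image in $R_n$. In particular, \emph{two} processes (one in each copy of the duplicated arc) end up in the ``I am leader'' state, contradicting the uniqueness requirement of leader election.

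To finish, I would also rule out the degenerate possibility that the messages simply never let the algorithm reach the contradiction — i.e., I must ensure the spliced execution is a legal complete asynchronous execution (every sent message eventually delivered, no message invented or dropped). This is guaranteed by construction because we copy a terminating execution of $R_n$ and only replicate finite message multisets. The same argument symmetrically handles the counter-clockwise direction, and since it works for every fixed $b$, no uniform algorithm can have a uniform constant bound on messages-per-direction, which is exactly the statement.

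\medskip

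\textbf{Main obstacle.} The delicate point is the splicing/stitching step: I must verify that after duplicating an arc, the port-bit sequences received by every process on \emph{both} copies — including the two ``boundary'' processes $u$ and $v$ where the copies are glued — are identical to what they were in $R_n$, so that the induced executions coincide. This requires the chosen arc to be ``view-homogeneous'' at its endpoints in a strong enough sense that gluing two copies does not change any local view; arranging this, and arguing that such an arc must exist by a counting/pigeonhole bound depending only on $b$ (not on $n$), is the technical crux. A secondary subtlety is maintaining distinct identifiers across the enlarged ring while preserving the copied behavior — since the algorithm's behavior depends on $\id$, the duplicated arcs cannot literally reuse IDs, so the pigeonhole must be applied to the \emph{behavior function} (the map from received port-bits to actions) rather than to raw identifiers, and one must check that two arcs realizing the same behavior function can be glued interchangeably even though their constituent processes have different IDs.
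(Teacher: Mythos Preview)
The proposal has a genuine gap, and the approach diverges from the paper's in a way that does not recover.

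First, your contradiction hypothesis is too strong. You assume every process sends at most $b$ messages in \emph{both} directions; the paper (consistent with the intended reading that a constant one-directional budget suffices for impossibility) only assumes a bound in one direction, say CCW, leaving CW unbounded. This matters: with only CCW bounded, a process can still receive unboundedly many CW messages, so the ``view'' (the sequence of received port-bits) lives in an infinite space, and the finiteness on which your pigeonhole step rests disappears.

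Second, even under your stronger hypothesis, the arc-duplication construction does not work as written. A process's behavior depends on its $\id$, so two processes with different $\id$s need not react identically to the same sequence of received bits. When you duplicate an arc you must either reuse $\id$s (illegal) or assign fresh ones (behavior changes), so the spliced execution is not guaranteed to track the original. You correctly flag this and suggest pigeonholing on behavior \emph{functions} instead --- but notice that if two distinct $\id$s had the same behavior function, then on a ring of size $2$ with those two $\id$s a symmetric schedule already forces both to the same output, contradicting correctness immediately, with no splicing needed. So under your assumption there is a one-line proof you are not taking; and under the correct assumption that one-line proof is unavailable because the behavior-function space is infinite.

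The paper's argument is different in kind. It records, for each $\id$, the \emph{solitude pattern} of its execution on a ring of size $1$: the nondecreasing tuple of CW-receive counts between successive CCW receives. This tuple has bounded length ($\leq b+1$) but unbounded entries, so simple pigeonhole fails; instead a Ramsey-type extraction (Lemma~\ref{lemma:particular-infinite-set}) yields an infinite sequence of $\id$s whose patterns agree on a prefix and then grow strictly at some fixed coordinate. From this sequence the paper builds a specific ring and schedule in which two designated processes are fed exactly the message stream of their own solitary executions --- using the strict growth to guarantee enough CW messages are in transit to supply each of them --- so both terminate as \leader. No duplication of behavior across $\id$s is used; the construction exploits the ordering of the CW counts rather than any coincidence of behavior functions.
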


Interestingly, we show that non-uniform algorithms, where processes know an upper bound $U$ on the network size $n$, can circumvent this impossibility result. In particular, we present in Section~\ref{sec:cdir} an algorithm that sends a constant number of messages in one direction, allowing us to state:

\begin{restatable}{theorem}{algone}
There exists a quiescently terminating  non-uniform leader election algorithm for  oriented rings in which each process sends $O(U \cdot \myID_{\min})$ messages clockwise and three messages counter-clockwise, for a total of $O(n \cdot U \cdot \myID_{\min})$ messages.
\label{thm:alg1}
\end{restatable}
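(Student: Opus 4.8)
The plan is to design a clockwise "wave" mechanism that lets each process broadcast its identifier bit-by-bit in a way that is robust to content corruption, combined with a counter-clockwise "kill" mechanism that uses only a constant number of messages. Since message contents are unreliable, the only usable signal is the \emph{arrival of a message}; so identifiers must be encoded in the \emph{timing/counting pattern} of messages rather than in their payload. Concretely, I would have each still-active process $p$ repeatedly send a block of messages clockwise whose length encodes $\id(p)$ — for instance, by using the binary expansion of $\id(p)$ and sending, for each bit, either one or two messages, separated by a fixed-length "separator" pattern, so that a receiver can parse the incoming stream unambiguously as long as it has not yet been corrupted out of sync. Because an adversary can corrupt the \emph{contents} but the stream of "message exists" events is still delivered in order on each (directed) channel, a process downstream can count arrivals and detect whether the identifier it is hearing is larger or smaller than its own; if it ever detects a strictly larger identifier upstream, it becomes a \nonleader and simply relays, and if it detects a strictly smaller one it keeps competing. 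The factor $U$ in the bound comes from the fact that a process must keep broadcasting for enough rounds — about $U$ of them — to be sure its wave has had time to traverse the whole ring (we do not know $n$, only $U \ge n$), and the factor $\myID_{\min}$ comes from the length of a single identifier block being $\Theta(\myID_{\min})$ for the eventual winner (the minimum-ID process), since in the worst case competing processes must send $\Theta(\text{their own ID})$ messages per block, but the dominating term for the \emph{number of surviving rounds of competition} is governed by the smallest identifier.

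The key steps, in order, are: (1) Specify the encoding of an identifier as a finite clockwise message pattern, and prove a \emph{parsing lemma}: if a process starts listening at a block boundary and the channel is not (yet) corrupted, it correctly recovers the transmitted identifier, or at least correctly decides the comparison "is this ID $<$, $=$, or $>$ mine". (2) Describe the competition dynamics: initially every process is a candidate and sends its block clockwise; upon receiving a block encoding a smaller ID, a process defers (marks itself \nonleader) and henceforth only relays clockwise messages; upon receiving a block encoding a larger ID, it suppresses that relay and keeps sending its own; ties cannot happen since IDs are distinct. (3) Show that after enough rounds exactly one candidate — the one with identifier $\myID_{\min}$ — remains active and "hears itself," i.e., receives its own block back having traveled the whole ring; this is the termination-detection trigger. (4) Use the three counter-clockwise messages for the termination handshake: when the unique survivor detects that its block has circulated, it sends one CCW message around the ring to announce "I am the leader," each process that receives it becomes a \nonleader and stops sending CW messages, and the remaining CCW messages (a bounded number, here three suffices — one to announce, and the others to confirm quiescence and let the leader terminate) clean up so that no message is left in transit, giving \emph{quiescent termination}. (5) Add up the message counts: $O(U \cdot \myID_{\min})$ CW messages per process over all rounds, times $n$ processes, plus $O(n)$ CCW messages, for the claimed $O(n \cdot U \cdot \myID_{\min})$ total.

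The main obstacle I expect is Step~(1)–(2) made \emph{robust to asynchrony and corruption simultaneously}: a downstream process may begin listening in the middle of an upstream block (because of arbitrary message interleavings and because corrupted messages give no "I am a block-start marker" information), so the parsing must be self-synchronizing — the separator pattern has to be chosen so that it cannot occur inside the encoding of any identifier, and a process that detects it has lost sync must be able to re-synchronize at the next block boundary without ever making an \emph{incorrect} comparison (an over-count or under-count that flips $<$ to $>$ would be fatal, potentially eliminating the true minimum or leaving two leaders). Proving this re-synchronization invariant — that every process, after at most one garbled block, locks onto clean block boundaries and thereafter compares correctly — is the crux, and it is what ultimately forces the "$O(U)$ rounds" slack: we need enough repetitions of each block so that even the worst-case late starter gets at least one clean, fully-parsed block from every relevant competitor before the termination phase begins. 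The rest (Steps 3–5) is then a relatively standard ring-election argument: the minimum survives, everyone else defers, the survivor detects completion by hearing its own signal return, and a constant-size CCW handshake yields explicit quiescent termination.
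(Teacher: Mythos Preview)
Your proposal has a fundamental gap: the block-encoding and parsing scheme you describe cannot work in the asynchronous content-oblivious model. You propose encoding each bit as ``one or two messages'' with a ``fixed-length separator pattern,'' but a process receiving on a single port sees only an undifferentiated stream of arrival events---there is no timing (the system is asynchronous) and no content, so the only observable is the raw \emph{count} of messages received. Over what is effectively a unary alphabet there is no self-synchronizing code: a receiver cannot distinguish ``one data pulse then three separator pulses'' from ``two data pulses then two separator pulses'' from any other grouping of four arrivals. The re-synchronization worry you flag as the crux is in fact the point of failure; no separator choice rescues it. Relatedly, your accounting is inconsistent: a bit-by-bit block over the binary expansion would have length $\Theta(\log\myID)$, not $\Theta(\myID_{\min})$, so the claimed $O(U\cdot\myID_{\min})$ does not fall out of your scheme.

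The paper's algorithm exploits precisely the count-only observable, with no encoding or parsing at all. Each process $p$ simply loops ``send one CW message, receive one message'' for $U\cdot\myID_p$ iterations, breaking early if the received message is CCW. A forwarding lemma (if one process has received $k$ CW messages then every other has received at least $k-n$) guarantees that the process with $\myID_{\min}$ is the unique one to exit the loop by exhausting its iteration count: any other process $p_j$ would need $U\cdot\myID_j \geq U\cdot\myID_{\min}+U$ CW receptions, but once $p_0$ finishes and stops relaying, at most $U\cdot\myID_{\min}+n-1 < U\cdot\myID_{\min}+U$ are available. So the factor $\myID_{\min}$ is literally the identifier's \emph{value} used as a loop bound, and $U$ is the gap needed to absorb the $n-1$ slack. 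The three CCW messages then do roughly what your Step~(4) says: one to knock every other process out of its loop, one to drain leftover CW messages (each non-leader kills one), and one to signal termination. The paper's \emph{other} algorithm (Theorem~\ref{thm:alg2}) does compare identifiers bit by bit, but it spends one CCW message \emph{per bit} to mark phase boundaries---exactly the synchronization signal that a unidirectional stream cannot carry on its own.
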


Here $\myID_{min}$ is the minimum identifier. We then present in Section~\ref{sec:logalg} a non-uniform algorithm for oriented rings that has a logarithmic dependency on identifier values: 

\begin{restatable}{theorem}{algtwo}
There exists a quiescently terminating non-uniform leader election algorithm for  oriented rings in which each process sends $O(U \cdot \log \myID_{\min})$ messages clockwise and $O(\log \myID_{\min})$ messages counter-clockwise, for a total of $O(n \cdot U \cdot \log \myID_{\min})$ messages.
\label{thm:alg2}
\end{restatable}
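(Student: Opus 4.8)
The plan is to build on Theorem~\ref{thm:alg1} by replacing the unary ``counting to $\myID_{\min}$'' mechanism with a binary one, thereby shrinking the per-process message load from $O(U\cdot\myID_{\min})$ to $O(U\cdot\log\myID_{\min})$ clockwise and from three to $O(\log\myID_{\min})$ counter-clockwise. First I would recall the structure of the algorithm behind Theorem~\ref{thm:alg1}: processes run a competition in which a candidate's ``strength'' is essentially encoded by how far a token travels, and the eventual leader is the one with the extremal identifier among candidates; the dependence on $\myID_{\min}$ comes from the fact that a candidate with identifier $\id$ must send roughly $\id$ messages (a unary encoding) so that comparisons between candidates are resolved correctly. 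The key idea for the logarithmic version is to transmit identifiers bit by bit rather than in unary: each candidate broadcasts the $O(\log\myID_{\min})$ bits of its identifier (most-significant bit first), and a receiving process compares incoming bit-streams against its own, dropping out of the competition as soon as it detects that it is dominated. Because the content-oblivious model gives no reliable content, a ``bit'' must itself be encoded by the \emph{pattern} or \emph{count} of messages within a phase; so each of the $O(\log\myID_{\min})$ bit-positions is implemented by a constant-size sub-protocol (e.g. ``send one message for a $1$-bit, send none for a $0$-bit, then a separator'') that traverses the ring, contributing an $O(U)$ factor per bit for the clockwise direction and $O(1)$ per bit for the acknowledgement/feedback direction.

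The steps I would carry out, in order, are: (i) describe the bitwise encoding of an identifier as a sequence of phases, together with the separator/synchronization messages that let a process tell one phase from the next despite corruption, arguing that a constant number of message types per phase suffices; (ii) show how a process compares its own bit-string to a neighbour's incoming one, so that upon the first position where it has a $0$ and the competitor has a $1$ it concedes, and prove that exactly one candidate — the one with the smallest identifier (or whatever extremal convention the base algorithm uses) — survives all comparisons; (iii) plug this comparison gadget into the framework of Theorem~\ref{thm:alg1} in place of the unary counter, re-using verbatim that theorem's machinery for using the bound $U$ to detect completion of a full ring traversal and for achieving quiescent termination; (iv) tally the messages: each process participates in $O(\log\myID_{\min})$ phases, each phase costs $O(U)$ clockwise messages (a token making up to one loop of the $n\le U$ ring, repeated or padded to length $U$ as the base algorithm does) and $O(1)$ counter-clockwise messages, giving $O(U\log\myID_{\min})$ and $O(\log\myID_{\min})$ respectively, hence $O(n\cdot U\cdot\log\myID_{\min})$ in total; (v) verify quiescent termination and correctness — in particular that the adversary's corruptions cannot cause a wrong bit to be ``read'', which is handled exactly as in the base algorithm since message \emph{existence} (not content) is all that carries information.

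The main obstacle I expect is step~(ii) combined with the content-oblivious constraint in step~(i): in a reliable model, sending a bit is trivial, but here a ``$0$-bit phase'' in which a candidate sends \emph{no} message is indistinguishable from a candidate that has simply dropped out or is slow, so one needs a careful phase-delimiting scheme — for instance, always sending a fixed nonzero number of ``frame'' messages per phase and using their count (mod a small constant) or their interleaving with a separate ``data'' message to encode the bit — and then proving that this scheme is robust to an adversary who can corrupt content but not create or destroy messages, and that it composes correctly across the $O(\log\myID_{\min})$ phases and around the whole ring without two candidates' bit-streams interfering. A secondary subtlety is ensuring that candidates stay \emph{synchronized} at the phase level as their bit-streams race around the asynchronous ring, so that bit position $k$ of one candidate is compared against bit position $k$ of another; I would address this by having each process gate its advancement to phase $k+1$ on having received the phase-$k$ frame messages from its predecessor, so that the comparison is well-defined locally even though the global schedule is asynchronous. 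Everything else — the use of $U$, quiescence, the overall reduction to a single surviving candidate who then becomes leader — is inherited essentially unchanged from Theorem~\ref{thm:alg1}, so the write-up can be kept short by pointing to that proof for the shared components.
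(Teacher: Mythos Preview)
Your plan is broadly aligned with the paper's approach---per-bit elimination rounds, $O(U)$ clockwise messages per round for synchronization, $O(1)$ counter-clockwise per round---and you correctly isolate the two hard parts (bit encoding without content, and phase synchronization). Two concrete design choices that the paper commits to, and that your proposal leaves open or gets slightly wrong, are worth flagging.

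First, the bit encoding: rather than frame messages or counts modulo a constant, the paper uses \emph{direction} as the information carrier. A process with bit~$0$ at position~$i$ sends one counter-clockwise message and then blocks all clockwise traffic until that message returns; a process with bit~$1$ keeps sending clockwise and drops out the moment a counter-clockwise message arrives. Because a $0$-bit process blocks CW traffic, any $1$-bit process can receive at most $\myUB-1$ further CW messages (this is where the $\myUB$-scatteredness assumption is used), so a $1$-bit process that accumulates $\myUB$ more CW messages knows every active process had bit~$1$. This cleanly resolves the ``$0$-bit is indistinguishable from a slow/dropped process'' issue you raised. Synchronization is done by a cumulative CW-message counter (exit the sync loop when it hits $(2i-1)\myUB$), not by gating on a predecessor's frame.

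Second, termination detection needs a preprocessing step you omit: identifiers are re-encoded as $1^{\ell}0\cdot\myID\cdot 0$, making the set $0$-ended and strongly prefix-free. Prefix-freeness guarantees the minimum-$\myID$ process is the unique survivor after its last bit; the trailing~$0$ guarantees its final round is a Zero-Signaling round, at which point it sends one extra CCW message, and two consecutive CCW messages are the termination signal for inactive processes. Finally, your step~(iii) overstates the reuse: the paper does not plug a gadget into Algorithm~\ref{algo:c1}'s framework but proves a fresh set of round-by-round invariants (Proposition~\ref{prop-IH}); the two algorithms share ideas but not code.
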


We stress that the dependency on $\log(\myID_{\min})$ is optimal. This follows from~\cite{frei2024content}, which shows that a logarithmic dependency on the identifier values is unavoidable for any uniform leader election algorithm in the content-oblivious setting. In Appendix~\ref{sect: lower bound} we show that such lower bound holds also for non-uniform algorithms when the exact network size is known.

The key technical novelty in our non-uniform algorithms is the design of a global synchronization mechanism that leverages the known upper bound on the network size to keep processes partially synchronized across algorithmic phases. This mechanism enables processes to compare their identifiers bit by bit, leading to an algorithm whose complexity depends only on the size of the identifiers (and not on their values).

All the above algorithms require unique identifiers. When identifiers are absent, the non-uniformity assumption still allows us to bypass the impossibility result of~\cite{Itai90negative}. In this case, we present a terminating randomized algorithm in Section~\ref{sect: rand}:

\begin{restatable}{theorem}{mainthmrand}\label{thm:randmaintheorem}
  For any constant $c > 0$, there is a {randomized} 
  leader election algorithm with message complexity
  $O(n\cdot \log^2 U)$ in an anonymous oriented ring $C$ that succeeds
  with probability at least $1-U^{-c}$, where $U \geq n$ is a known
  upper bound on the number of nodes $n$.
\end{restatable}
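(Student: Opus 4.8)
The plan is to build a content-oblivious analogue of the classic anonymous ring leader election idea, but replacing "reliable message with content" by "a counted number of content-free pulses," and using the known upper bound $U$ to know when to stop. First I would have every node independently draw a random label from a range of size $\mathrm{poly}(U)$ bits, say each node picks a uniformly random bit-string $r_v$ of length $\Theta(\log U)$; with the stated failure probability budget $U^{-c}$ we can afford to make these labels collision-free among the $n \le U$ nodes except with probability $U^{-c}$, and in fact we want the \emph{maximum} label to be unique. The goal is then to elect the node holding the (unique) maximum label. The subtlety is that nodes cannot transmit $r_v$ directly — the adversary corrupts contents — so I would transmit each bit as a fixed-width block of pulses in the clockwise direction: to send bit $1$ a node forwards/originates one pattern, to send bit $0$ another, where "pattern" is encoded purely by how many messages are sent (e.g. the bit-by-bit comparison gadget already developed for Theorem~\ref{thm:alg2}, which lets two adjacent segments compare $\Theta(\log U)$-bit values using $O(\log U)$ messages per edge per comparison round).

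The core of the algorithm is an elimination tournament carried out in a fixed number of rounds. In each round, each still-active node sends its label clockwise; the label travels around and each node compares the incoming label against its own; a node survives the round iff its label is $\ge$ every label it saw, and it becomes a \emph{relay} (passive forwarder) otherwise. Because the maximum label is unique (w.h.p.), after one full pass only the node with the maximum label remains active — but detecting "I have gone a full loop and seen no larger label" is exactly where the upper bound $U$ enters: a node cannot tell $n$ from a larger ring, so instead it waits to have processed $U$ label-blocks (or uses the global synchronization mechanism from Section~\ref{sec:logalg} to delimit a phase of length proportional to $U$), after which the unique surviving node declares itself leader and sends a short termination token around the ring to make everyone quiesce. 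Each label block is $O(\log U)$ bits, each bit costs $O(1)$ messages on each edge, and a node participates in $O(1)$ comparison phases, so each node sends $O(\log U)$ messages; the extra $\log U$ factor in the claimed $O(n \log^2 U)$ total (equivalently $O(\log^2 U)$ per node) comes from needing $\Theta(\log U)$ independent elimination rounds — or $\Theta(\log U)$-bit labels combined with a $\Theta(\log U)$-overhead synchronization envelope — to drive the collision probability down to $U^{-c}$ while keeping termination detection sound. Summing over the $n$ nodes gives the stated $O(n \log^2 U)$ message complexity.

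The hard part will be termination and correctness when the random labels are \emph{not} all distinct, which happens with probability up to $U^{-c}$: in that case two nodes could both believe they are maximal and both declare leadership, or, worse, the elimination could livelock. I would handle this by designing the protocol so that a \emph{bad} random outcome leads only to a detectable failure (e.g. two leadership tokens collide on the ring, or a node counts more active-label blocks than the upper bound $U$ allows), and folding all such bad events into the $U^{-c}$ failure probability — the theorem only promises success probability $1-U^{-c}$, so we do not need a safe fallback, only that the union of all failure modes stays below $U^{-c}$, which a Chernoff/union bound over $\binom{n}{2} \le U^2$ label pairs with $\Theta((c{+}2)\log U)$-bit labels gives. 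The second delicate point is asynchrony: since channels are asynchronous and content-oblivious, I must argue that the pulse-counting encoding of a bit cannot be confused with the pulses of the next bit or of a neighbouring node's block; this is where I would invoke the synchronization machinery already established for Theorems~\ref{thm:alg1} and~\ref{thm:alg2}, which guarantees that phase boundaries are respected globally up to the factor-$U$ slack, so that each node's counters are interpreted in the correct phase. Everything else — quiescence after the termination token, the per-node message bound, and the total $O(n\log^2 U)$ bound — then follows by the same accounting as in the earlier non-uniform algorithms.
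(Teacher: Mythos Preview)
Your high-level shape is right --- generate random labels of length $\Theta(\log U)$ and feed them into the bit-by-bit comparison machinery of Section~\ref{sec:logalg} --- and indeed this is exactly what the paper does (Algorithm~\ref{alg: rand oriented ring} is literally ``sample $\rand_v$, pad it to a fixed-length $0$-ended identifier, call $\mathsf{LogarithmicLeaderElection}$''). But your accounting for the $O(\log^2 U)$ bound is hand-waved at precisely the place where the real work lies, and as written your proposal does not achieve the claimed complexity.

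The deterministic subroutine of Section~\ref{sec:logalg} has per-node cost $O(\myUB \cdot \log \idmin)$, where $\myUB$ is the synchronization parameter. You repeatedly invoke that subroutine with phases ``of length proportional to $U$'' or ``factor-$U$ slack''; with $\myUB = U$ and $\log \idmin = \Theta(\log U)$ this gives $O(U\log U)$ per node, not $O(\log^2 U)$. You do float, as an alternative, a ``$\Theta(\log U)$-overhead synchronization envelope,'' but you give no reason why $\myUB = \Theta(\log U)$ would be \emph{sound}. That soundness is the entire technical content of the paper's proof: the deterministic algorithm is correct not only when $\myUB \geq n$, but more generally whenever the identifier distribution is \emph{$\myUB$-scattered} (at every round $i$, among the surviving processes there are never $\myUB$ consecutive ones sharing the losing bit). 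The key lemma (\Cref{lem: run freeness whp}) shows that with uniformly random $\Theta(\log U)$-bit identifiers, the distribution is $\lceil c_2\log U\rceil$-scattered with probability $1-U^{-\Omega(1)}$, because the $i$-th bits of the survivors are still i.i.d.\ uniform conditioned on the first $i-1$ bits, and a run of $\lceil c_2\log U\rceil$ equal coin flips is exponentially unlikely. This is what lets you set $\myUB = \Theta(\log U)$ and get the second $\log U$ factor; your ``$\Theta(\log U)$ independent elimination rounds'' explanation is not what is happening and would not yield the bound.

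A secondary point: your framing as a tournament where each node ``sends its label clockwise'' and compares against all incoming labels, then waits for $U$ label-blocks, is both more complicated and more expensive than necessary. The paper does not circulate whole labels; it reuses Algorithm~\ref{algo:orientedknown} unchanged, so there is exactly one run of the bit-comparison protocol, and the only new proof obligations are uniqueness of identifiers (\Cref{lem: unique max whp}) and $\Theta(\log U)$-scatteredness (\Cref{lem: run freeness whp}), each a short union bound.
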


We emphasize that this randomized algorithm, presented in Section~\ref{sect: rand}, improves upon the one in~\cite{frei2024content}, whose complexity includes an exponential dependence on $c$. In contrast, our randomized algorithm has only a quadratic dependence on $c$, which is hidden in the asymptotic notation.

We also stress that the algorithm of~\cite{frei2024content} is non-terminating, as required by the impossibility results of~\cite{Itai90negative}. In contrast, our algorithm terminates since it operates in the non-uniform setting. A detailed comparison between the two algorithms is provided in Appendix~\ref{sect:adaptedalgo}, where we also explain why a direct adaptation of the approach in~\cite{frei2024content} to the non-uniform setting yields termination but does not remove the exponential dependency on $c$.

A comparison of our algorithm and the previous literature can be found in Table~\ref{tab:results}.


\begin{table}[ht]
\centering
\begin{tabular}{|c|c|c|ll|l|}
\hline
\multirow{2}{*}{\textbf{IDs}} & \multirow{2}{*}{\textbf{Deterministic}} & \multirow{2}{*}{\textbf{Uniform}} & \multicolumn{2}{c|}{\textbf{Messages per process}}                                    & \multicolumn{1}{c|}{\multirow{2}{*}{\textbf{Reference}}} \\ \cline{4-5}
                              &                                         &                                   & \multicolumn{1}{c|}{\textbf{CW}}                    & \multicolumn{1}{c|}{\textbf{CCW}} & \multicolumn{1}{c|}{}                                    \\ \hline
\checkmark                    & \checkmark                              & \checkmark                        & \multicolumn{1}{l|}{$O(\myID_{\max})$}              & $O(\myID_{\max})$                 & \cite{frei2024content}                                   \\ \hline
\checkmark                    & \checkmark                              & -                                 & \multicolumn{1}{l|}{$O(U \cdot \myID_{\min})$}      & $3$                               & Section~\ref{sec:cdir}                                      \\ \hline
\checkmark                    & \checkmark                              & -                                 & \multicolumn{1}{l|}{$O(U \cdot\log(\myID_{\min}))$} & $O(\log(\myID_{\min}))$           & Section~\ref{sec:logalg}                                    \\ \hline
-                             & -                                       & -                                 & \multicolumn{1}{l|}{$O(\log^2(U))$}                 & $O(\log(U))$                      & Section~\ref{sect: rand}                                    \\ \hline
\end{tabular}
\caption{Comparison of leader election algorithms for content-oblivious oriented rings; $n$ is the size of the ring.}
\label{tab:results}
\end{table}

\section{Related Work}

Electing a leader in the classical message-passing setting has been extensively studied \cite{attiya2004distributed,flocchini2004sorting,kutten2020singularly,santoro2007design}; a seminal early result is the necessity of unique identifiers in anonymous symmetric networks, such as rings~\cite{angluin1980local}. In the following we will focus on the works that study leader election in the settings that are more similar to us, that are ring networks of correct processes with unique identifiers. The reader that is interested in leader election with faulty processes or in anonymous networks can refer to \cite{attiya2004distributed,BVelection,boldi2002fibrations, cachin2011introduction, chalopin2012election, raynal2018fault, santoro2007design,YKsolvable}.

When identifiers are present and the ring is asynchronous, leader election requires $\Omega(n  \cdot \log n)$ messages; this bound holds regardless of the knowledge of $n$ or the orientation of the ring~\cite{burns1980formal,korach1984lower}. This bound is tight, as demonstrated by the unoriented, asynchronous, and uniform algorithm of~\cite{hirschberg1980decentralized}. When the system is synchronous, the best achievable message complexity is $O(n)$~\cite{elruby1991linear,frederickson1987electing}.

Computability in content-oblivious networks is still in its infancy, with only three papers published so far \cite{censor2023distributed, frei2024content,cha2025content}. In \cite{censor2023distributed}, an algorithm is presented that simulates any asynchronous algorithm for reliable message-passing in two-edge-connected networks. The two-edge connectivity assumption is crucial, as the paper also shows that two processes connected by a single link cannot compute any non-trivial predicate, including leader election.

In \cite{frei2024content}, apart from a terminating leader election algorithm for oriented rings, they show that in the uniform setting for any algorithm there exists an assignment of identifiers to processes such that the  algorithm has to send $\Omega(n \cdot \log(\frac{\myID_{\max}}{n}))$ messages. Recall that in Appendix \ref{sect: lower bound} we adapt such result for the uniform setting. 

An interesting corollary of this bound is that any content-oblivious leader election algorithm must have a complexity explicitly dependent on the identifiers assigned to processes, a dependence not present in the standard message-passing model.
Moreover, this result shows that the logarithmic factor in the complexity $O(n \cdot U \cdot  \log(\myID_{\min}))$ of our non-uniform algorithm is necessary.\footnote{Note that our dependency on $\myID_{\min}$ is not in contradiction with the presence of $\myID_{\max}$ in the lower bound of \cite{frei2024content}. Please refer to Appendix \ref{sect: lower bound} for more details.}

Our impossibility result complements this lower bound by showing that any uniform leader election algorithm must not only send $\Omega(n  \cdot \log(\frac{\myID_{\max}}{n}))$ messages, but also cannot send just a constant number of messages in one direction.

Finally, the recent \cite{cha2025content} has shown a leader election algorithm for non-uniform general two-edge-connected networks with a message complexity of $O(m \cdot U \cdot \idmin)$, where $m$ is the number of edges in the networks, and a uniform leader election algorithm for oriented rings with a message complexity of $O(n \cdot \idmax)$.  

A related line of research concerns the beeping model~\cite{cornejo2010deploying, CASTEIGTS201920, CASTEIGTS201932,VacusZiccardi2025}, where in each round a process may either emit a beep to its neighbors or listen. This model is inherently synchronous. We emphasize that synchrony enables the use of time as a means of communication, even without explicit messages (see~\cite[Chapter 6]{santoro2007design}), which is not possible in our model.
Therefore, the techniques developed for leader election in the beeping model \cite{dufoulon2018,forster2014deterministic,czumaj2019leader} cannot be applied to our setting.

\section{System Model}\label{sec:model}
We consider a distributed system composed of a set of $n$ processes $P: \{p_0, p_1, \ldots, p_{n-1}\}$ that communicate on a ring network by sending messages to each other. More precisely, a process $p$ has two local communication ports: port $0$ and port $1$. By means of a specific port, a process is able to send messages to and receive messages from one of its neighbors.

\subparagraph{IDs and Anonymity.}
Our deterministic algorithms consider processes that have unique identifiers that are arbitrarily selected from $\N$. In this case, for each $j \in [0,n-1]$, let $\myID_j$ be
the identifier of process $p_j$. 
Our randomized algorithm works on {\em anonymous systems}, where processes do not have identifiers and all start in the same state.

\subparagraph{Asynchronous System.}
We assume an {\em asynchronous system}; that is, each message has an unpredictable delay that is finite but unbounded. A message cannot be lost. Messages are buffered, in the sense that if several messages are delivered to a process at the same time, they are stored in a local buffer and can then be retrieved by the destination process at its discretion. The local computation time at each process may be arbitrary but bounded, and processes do not share a notion of common time. In the analysis of our algorithms we assume that a process performs all its non-blocking action instantaneously (i.e., all actions different from receiving a message); this is not restrictive as the local computation delay can be conflated with the message delays.

\subparagraph{Content-Oblivious Algorithms.}
Since we are interested in {\em content-oblivious algorithms}, we assume that messages do not carry any information apart from their existence. That is, we can imagine that each message is an empty string, or that it is simply a {\em pulse}~\cite{frei2024content}.

We adopt this model to represent an {\em adversary} that is able to corrupt the content of all messages traveling through the network (but we stress that such an adversary is not able to erase messages from the network or inject additional messages). Processes are correct and do not experience any kind of failure.

\subparagraph{Oriented  Rings.}
In this paper, we consider {\em oriented} rings. A ring is oriented if processes share a common notion of {\em clockwise} (CW) and {\em counter-clockwise} (CCW) orientation. More precisely, we consider
a ring $(p_0,p_1, \ldots, p_{n-1})$ and assume that for each
$j \in [0,n-1]$, at process $p_j$, port $0$ leads to $p_{j+1}$  and port $1$ leads to $p_{j-1}$ (where
addition is made modulo $n$). We will say that a message is traveling in the clockwise direction if it is sent on port $0$ and received on port $1$ (i.e., a message goes from $p_i$ to $p_{i+1}$); conversely, a message travels in the counter-clockwise direction if it is sent on port $1$ and received on port $0$.

\subparagraph{Uniform and Non-Uniform Algorithms.}
An algorithm is {\em uniform} if it works for all possible network sizes; i.e., processes do not know the total number of processes $n$ or a bound on it. An algorithm is {\em non-uniform} if processes know the value of $n$. In this paper, we consider as non-uniform a slightly weaker assumption, where processes know an upper bound $U \geq n$.

\subparagraph{Message Complexity.} We measure the performance of our algorithm by analyzing their {\em message complexity}. That is the number of all messages sent. For oriented algorithms, we further discriminate between the numbers of clockwise messages and counter-clockwise messages.  In Section \ref{sec:imp} we will show that it is not possible to devise a uniform algorithm for oriented rings that sends a constant number of messages in one direction.

\subparagraph{Leader Election Problem.} In this problem, processes must elect one them as the {\em leader} and all others are designated as {\em non-leader} processes. We consider {\em terminating leader election}, in the sense that every process must eventually enter either a \leader or \nonleader state, and both of these states are final—meaning no further state changes are possible, and the process terminates the execution of the algorithm.

Moreover, we require that once the last process enters its final state, there are no messages in transit in the network. This is known as {\em quiescent termination}, as defined in~\cite{frei2024content}. In our algorithms, the last process to enter its final state is always the unique \leader.

This property enables easy composability of algorithms: once the leader enters the final state, it knows that the network is ``at rest'' and can begin sending messages for a subsequent algorithm (see \cite[Section 1.1]{frei2024content}).

\section{An Algorithm That Sends \texorpdfstring{$O(n \cdot U \cdot \log(\myID_{\min}))$}{O(n U log IDmin)}
  Messages}\label{sec:logalg}

In this section, we describe a non-uniform algorithm in which each process sends
$O(U \cdot \log \myID_{\min})$ messages. The algorithm pseudocode is in Algorithm \ref{algo:orientedknown}.

The algorithm elects the process with the minimum identifier. We first
transform the identifiers of the processes to ensure they satisfy some
properties (see details below). Then our algorithm proceeds in
elimination rounds: in each round $i$, the set of active processes
compares the $i$-th bit of their (encoded) identifiers, eliminating
all processes that do not have the minimum bit at that position.

The bound $U$ on the network size is used to distinguish different
rounds of the algorithm. In fact, our algorithm takes a parameter
$\myUB$ and assumes that at each round $i$, either all active
processes have the same bit, or there are at most $\myUB-1$ active
processes appearing consecutively (ignoring the inactive processes)
that have the same $i$-th bit.  Observe that if we set $\myUB = U$,
then this property is always true.

At each round, each process sends and receives messages in the
clockwise direction for $\myUB$ steps (see Synchronization Phase at
Lines 4--8, in blue). Then, the processes with bit $0$ send and
receive one message in the counter-clockwise direction, blocking
messages in the clockwise direction (see Zero Signaling Phase at Lines
10--11, in purple). Meanwhile, the processes with bit $1$ attempt to
continue sending and receiving messages in the clockwise direction for
another $\myUB$ steps. However, they become inactive if they receive a
counter-clockwise message, indicating the presence of a process with
bit $0$ (see No-Zero Checking Phase at Lines 17--22, in orange). Note
that if not all processes have bit $1$, then by our assumption, each
process $p$ with bit $1$ can receive (and relay) at most $\myUB-1$
clockwise messages in this phase as there is a process $p'$ with bit
$0$ at distance at most $\myUB-1$ from $p$ in the counter-clockwise
direction and this process $p'$ is blocking all clockwise messages.

An inactive process will relay messages, and while doing so, it
eliminates the first clockwise message it receives. This ensures that,
once all inactive processes have eliminated a message, the total
number of messages in the network equals the number of active
processes (see Inactive Phase at Lines 24--25, in red as well as the
function \relay described in Algorithm~\ref{alg:relay}).

The algorithm terminates after $\len(\myID_{\min}) = O(\log \myID_{\min})$ rounds. At this point, the only remaining active process is the one with the minimum identifier. This process can detect this locally and communicates it to all other processes (which are, by construction, inactive) by sending an additional message in the counter-clockwise direction (see Termination Phase at Lines 13--15, in green). 

All other processes detect termination when they receive $2$ consecutive messages in the counter-clockwise direction: one of these messages was sent during the Zero Signaling Phase by the process with the minimum identifier, the additional message is sent during the Termination Phase.

\begin{algorithm}
\footnotesize
  \DontPrintSemicolon
  \SetKw{Send}{send}
  \SetKw{Continue}{continue}
  \SetKw{Break}{break}
  \SetKw{Receive}{receive}
  \SetKw{UnorientedRelayAndWaitTermination}{UnorientedRelayAndWaitTermination}
  \SetKwData{Status}{status}
  \SetKwFunction{Undecided}{Undecided}
  \SetKwFunction{Leader}{Leader}
  \SetKwFunction{NonLeader}{Non-Leader}
  \SetKwFunction{Relay}{KillACWMessageAndRelay}
  \SetKwFunction{Sync}{Synchronize}
  \SetKwData{Count}{received-CW}
  \SetKwData{Num}{num}
  \SetKwData{True}{true}
  \SetKwData{ID}{ID}
  \SetKwData{SizeNetwork}{d}
  \SetKwData{Bit}{bit}
  \SetKwData{Len}{length}
  \SetKw{Wait}{Wait until}

  $\Count \leftarrow 0$\;
  
  \For{$i \leftarrow 1$ \KwTo $\Len(\ID)$}
  {
    $b \leftarrow \Bit(\ID, i)$\;
    \tcc{Synchronization: All active processes send a message on port $0$ and relay at
      least $\SizeNetwork -1$ messages moving in this direction} 
    \color{blue}
    \Repeat{$\Count = (2i-1) \cdot \SizeNetwork$}
    {
      \Send a message on port $0$\;
      \Receive a message on port $1$\;
      $\Count \leftarrow \Count + 1$\;
    }
    \color{black}
    \eIf{$b=0$}
    {      
      \tcc{Zero Signaling: The bit at position $i$ in $\ID$ is $0$ and
        the process will stay active. To inform all active
        processes that have a bit $1$ at position $i$ in their $\ID$
        that they are eliminated, the process sends a message on
        port $1$ and waits for a message on port $0$.}  
      \color{violet}
        \Send a message on port $1$\;
        \Receive a message on port $0$\;
      \color{black}
      \If{$i = \Len(\ID)$}
      {
        \tcc{Termination: The process has won all the rounds and is the only one
          that is still active}
        \color{teal}
        \Send a message on port $1$\;
        \Receive a message on port $0$\;
        \Return{\Leader}
        \color{black}        

      }

    }
    {
      \tcc{No-Zero Checking: The process has a bit $1$ at position $i$ in its $\ID$ and
        it will advance to the next round. We check that this is the
        case for all active processes using direction $0$ or we are
        eliminated by a message moving in the other direction.} 
      \color{orange}
      \Repeat{$q=0$ or $(\Count = 2i \cdot \SizeNetwork)$}
      {
        \Send a message on port $0$\;
        \Receive a message on port $q$\;
        \If{$q=1$}{
          $\Count \leftarrow \Count +1$\;
        }
      }
      \color{black}
      \If{$q=0$}{
        \tcc{Inactive: There is another process with a bit  $1$ at position $i$
          in its $\ID$. The process forwards this message and becomes
          a relaying process. While relaying, it kills the first
          message received on port $1$ to cancel the message sent on
          port $0$ at Line 22.}
        \color{red}
        \Send a message on port $1$\;
        \Relay()\;
        \Return{\NonLeader}\;
        \color{black}
      }

    }
  }

  \caption{LogarithmicLeaderElection(\myID,\myUB)} 
  \label{algo:orientedknown}
\end{algorithm}

\begin{algorithm}
\footnotesize
  \DontPrintSemicolon
  \SetKw{Send}{send}
  \SetKw{Receive}{receive}  
  \SetKw{Continue}{continue}
  \SetKwData{CountZero}{consecutive-CCW-msgs}
  \SetKwData{Threshold}{threshold}
  \SetKwData{SizeNetwork}{size}
  \SetKwData{Killed}{killed}
  $\CountZero \leftarrow 0$\;
  $\Killed \leftarrow 0$ \;
  \Repeat
  {$|\CountZero| = 2$} 
  {
    \Receive a message on port $q$\;
    \eIf{$q = 1$}
    {
    \If{$\Killed=0$}{
        $\Killed \leftarrow 1$ \;
        \Continue 
    }
    $\CountZero \leftarrow 0$}
    {$\CountZero \leftarrow \CountZero +1$}
    \Send a message on port $1-q$\;
  }
  \caption{{\relay}}\label{alg:relay}
\end{algorithm}

\subsection{Correctness} 

We say that a process that has sent more messages on port $0$ than it
has received on port $1$ is \emph{CW-unbalanced}. Our algorithms are designed in such a way that if $p$
is CW-unbalanced, then it has sent precisely one more message on port $0$
than it has received on port $1$. A process that is not CW-unbalanced
is said to be \emph{CW-balanced}.

Our algorithm assumes that identifiers satisfy certain properties. For each identifier $\myID$, let $\len(\myID)$ be its length, and let $\ell_{\min}$ denote the length of the shortest identifier $\myID_{\min}$.

Given an identifier $\myID$ of length $\ell$ and an integer
$1 \leq i \leq \ell$, let $\pref_i(\myID)$ be the number obtained when
considering only the first bits $i$  of $\myID$. Assuming that the
set of identifiers of the processes on the ring is $S$ and given an
integer $i \leq \ell_{\min}$ , we let $\actif_i$ be the set of
processes that have identifiers in
$\{\myID \in S : \forall \myID'\in S, \pref_i(\myID) \leq
\pref_i(\myID')\}$. For each $i$, we partition $\actif_i$ in two sets:
$\actif_i^0$ (respectively, $\actif_i^1$) is the set of processes in
$\actif_i$ such that the $i$-th bit of their identifier is $0$
(respectively, $1$). Observe that for each
$0 \leq i \leq \ell_{\min}-1$, $\actif_{i+1} = \actif_i^0$ if
$\actif_i^0 \neq \emptyset$ and $\actif_{i+1} = \actif_i^1 = \actif_i$
otherwise.

For each process $p$ in $\actif_i$, we define $\suc_i(p,k)$
(respectively, $\pred_i(p,k)$) to be the $k$-th process
$p' \in \actif_i$ encountered when moving clockwise (respectively,
counter-clockwise) along the ring from $p$. We denote $\suc_i(p,1)$ by
$\suc_i(p)$ and we call it the successor of $p$ at round
$i$. Similarly, we let $\pred_i(p) = \pred_i(p,1)$ and call it the
predecessor of $p$ at round $i$. Note that if there are $n_i$
processes in $\actif_i$, we let
$\pred_{i}(p,k) = \pred_i(p, k \mod n_i)$ and
$\suc_{i}(p,k) = \suc_i(p, k \mod n_i)$ when $k \geq n_i$. In
particular, if $\actif_i = \{p\}$, then $\pred_i(p) = \suc_i(p) = p$.

We say that the distribution of identifiers on the ring is
\emph{$\myUB$-scattered} if for each $i \leq \ell_{\min}$, either
$\actif_i^0$ is empty or for each process $p \in \actif_i^1$, there
exists $1 \leq k,k' < \myUB$ such that
$\pred_i(p,k), \suc_i(p,k') \in \actif_i^0$. Note that in any ring of
size at most $U$, any distribution of identifiers is $U$-scattered.

We say that the distribution of identifiers on the ring is
\emph{$0$-ended} if the last bit of each identifier is $0$.  We say
that the distribution of identifiers on the ring is \emph{strongly
  prefix-free} if for any two identifiers $\myID_1 \leq \myID_2$ of
respective lengths $\ell_1 \leq \ell_2$, we have
$\myID_1 \leq  \pref_{\ell_1}(\myID_2)$. Observe that starting from
an arbitrary distribution of identifiers, one can obtain a $0$-ended
strongly prefix-free distribution of identifiers as follows: we encode
each identifier $\myID$ of length $\ell = \len(\myID)$ as
$1^{\ell}0\cdot\myID\cdot 0$.

In the following, we assume that the distribution of identifiers on
the ring is $\myUB$-scattered, $0$-ended, and strongly prefix-free.

We say that a process is \emph{active} if it is not executing Lines
24-26, and that it is \emph{inactive} otherwise. We say that a process
is active at round $i$ if it enters the loop at Lines 4--8 in the
$i$-th iteration of the main loop of the algorithm.

Note that an active process can be CW-unbalanced (if it is executing
the loop at Lines 4--8 or the loop at Lines 17--22) or CW-balanced
(otherwise). An inactive process is initially CW-unbalanced until it
kills a message at Line 4 of Function \relay when $q=1$ where it
becomes CW-balanced. Observe that a CW-balanced process has sent the
same number of messages on port $0$ as it has
received on port $1$ (recall that we assume that when a process is
activated, it performs all the local computations it can).

We say that a message is in \emph{transit} between $p_i$ and $p_j$ if
it has been sent by some process $p_\ell$ and not yet received by
$p_{\ell+1}$ where $p_\ell$ and $p_{\ell+1}$ are consecutive processes
appearing between $p_i$ and $p_j$ when moving clockwise.  Notice that
the number of messages in transit in the network is precisely the
number of CW-unbalanced processes.

Since each message sent is either received or in transit, the following property holds:  

\begin{observation}\label{obs:diff-recv}
  Given two processes $p_i$ and $p_j$, let $k_i$ be the number of
  clockwise messages received by $p_i$ and $k_j$ be the number of
  clockwise messages received by $p_j$. Then the number of clockwise messages in
  transit between $p_i$ and $p_j$ is exactly $k_i - k_j + u_{ij}$
  where $u_{ij}$ is the number of CW-unbalanced processes between
  $p_i$ and $p_{j-1}$ (when moving clockwise), including $p_i$ and
  $p_{j-1}$.
\end{observation}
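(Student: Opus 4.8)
\textbf{Proof plan for Observation~\ref{obs:diff-recv}.}
The plan is to set up a simple accounting identity based on the fact that every clockwise message that has ever been sent is, at the current moment, in exactly one of two states: either it has been received by its target process, or it is still in transit on some edge. First I would fix the two processes $p_i$ and $p_j$, and consider the clockwise arc of the ring that starts at $p_i$ and ends just before $p_j$, i.e.\ the consecutive processes $p_i, p_{i+1}, \dots, p_{j-1}$ (indices mod $n$), together with the edges joining them. I would then count, in two ways, the total number of clockwise messages that have been \emph{sent} on port $0$ by the processes $p_i, \dots, p_{j-1}$.

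On one hand, a clockwise message sent by $p_\ell$ (for $\ell$ in the arc) is received on port $1$ by $p_{\ell+1}$. If $p_{\ell+1}$ is still inside the arc, i.e.\ $\ell+1 \le j-1$, this message is counted among the clockwise messages received by some process strictly inside the arc; if $p_{\ell+1} = p_j$, it is counted among the clockwise messages received by $p_j$; and if it has not yet been received, it is in transit on the edge from $p_\ell$ to $p_{\ell+1}$, hence in transit between $p_i$ and $p_j$ in the sense of the statement. On the other hand, the number of clockwise messages sent by the arc processes can be re-expressed using the CW-(un)balanced terminology: for a CW-balanced process the number of messages sent on port $0$ equals the number received on port $1$, and for a CW-unbalanced process it is exactly one more (by the structural invariant stated just before the observation). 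So summing over $\ell \in \{i, \dots, j-1\}$, the total sent equals $\bigl(\sum_{\ell=i}^{j-1} (\text{clockwise received by }p_\ell)\bigr) + u_{ij}$, where $u_{ij}$ is the number of CW-unbalanced processes in the arc.

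Now I would combine the two counts. Writing $R$ for the number of clockwise messages received by processes $p_{i+1}, \dots, p_{j-1}$ strictly inside the arc, $k_i$ for those received by $p_i$, and $k_j$ for those received by $p_j$, the first count gives: (messages sent by arc) $=$ (in transit between $p_i$ and $p_j$) $+ k_j + R$ — because every message sent from inside the arc either lands on a process strictly inside the arc, or lands on $p_j$, or is in transit between $p_i$ and $p_j$; and crucially no message sent from inside the arc lands on $p_i$ itself, since $p_i$ receives clockwise messages only from $p_{i-1}$, which is outside the arc. The second count gives (messages sent by arc) $= (k_i + R) + u_{ij}$. Equating the two and cancelling $R$ yields (in transit between $p_i$ and $p_j$) $= k_i - k_j + u_{ij}$, which is exactly the claim.

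The only subtlety I expect to need care with is the boundary bookkeeping: making sure that the message received by $p_i$ from $p_{i-1}$ is correctly excluded from the arc's ``sent'' tally (it is, since $p_{i-1}$ is not among $p_i,\dots,p_{j-1}$), that the messages landing on $p_j$ are counted once and attributed to $k_j$ rather than to the in-transit total, and handling the degenerate cases where the arc is empty (i.e.\ $p_j = p_{i}$ or $p_j=p_{i+1}$) or wraps around the whole ring. I would also note explicitly that the identity ``messages sent on port $0$ minus messages received on port $1$ equals $1$ if CW-unbalanced and $0$ if CW-balanced'' is precisely the definition/invariant recalled in the two paragraphs preceding the observation, so no separate argument is needed there; the observation is really just the telescoping of this per-process identity along the arc, with the in-transit messages accounting for the non-telescoping boundary terms.
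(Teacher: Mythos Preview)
Your proposal is correct and is exactly the detailed version of the paper's approach: the paper itself gives no proof beyond the single sentence ``Since each message sent is either received or in transit, the following property holds,'' and your telescoping accounting argument along the arc $p_i,\dots,p_{j-1}$ is precisely the natural unpacking of that sentence. The boundary bookkeeping you flag (excluding $p_{i-1}$'s contribution, attributing receipts at $p_j$ to $k_j$, and the per-process sent--received difference being $0$ or $1$) is exactly what makes the one-liner rigorous.
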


\begin{lemma}\label{lemma:forwardbound}
  Suppose that each process initially sends a clockwise message (that is a message on port $0$) and
  that each process forwards all clockwise messages it receives  (each time it
  receives a message on port $1$ it sends a message on port $0$).  If a process has received $k$ clockwise
  messages, and $k \geq n$ then each other process has sent at least
  $k-n+1$ clockwise messages and received at least $k-n$ clockwise messages.
\end{lemma}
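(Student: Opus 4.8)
The plan is to track, for an arbitrary process $p$, the number of clockwise messages it has sent and received, using the conservation principle already recorded in Observation~\ref{obs:diff-recv}. The hypothesis is that every process initially injects one clockwise message and forwards every clockwise message it receives; under this hypothesis every process is either CW-balanced or CW-unbalanced by exactly one, so at any point in the execution the number of messages in transit equals the number of CW-unbalanced processes, which is at most $n$. Fix the process $p_i$ that has received $k \geq n$ clockwise messages, and let $p_j$ be any other process. I will apply Observation~\ref{obs:diff-recv} (or rather the identity underlying it) to the pair $(p_j, p_i)$: the number of clockwise messages in transit on the clockwise arc from $p_j$ to $p_i$ equals $k_j - k_i + u_{ji}$, where $k_j$ is the number of clockwise messages received by $p_j$, $k_i = k$, and $u_{ji}$ is the number of CW-unbalanced processes on that arc.

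Since the number of in-transit messages is nonnegative and the number of CW-unbalanced processes on any arc is at most the total number of processes minus one on the complementary side — more precisely, $u_{ji} \le n$, and in fact the strict bound $u_{ji} \le n-1$ holds because $p_i$ itself, having just received a message and forwarded it (local computation is instantaneous), together with the accounting of where the $k$ messages went, forces at least one process off the ``unbalanced'' count — we get
\[
0 \;\le\; k_j - k + u_{ji} \;\le\; k_j - k + (n-1),
\]
hence $k_j \ge k - n + 1 > 0$. So $p_j$ has received at least $k - n$... wait, let me be careful: I want received $\ge k-n$ and sent $\ge k-n+1$. Rearranging the displayed inequality gives $k_j \ge k - u_{ji} \ge k - n$, so $p_j$ has received at least $k-n$ clockwise messages; and since every clockwise message received is forwarded and each process also sent one initially, the number of clockwise messages $p_j$ has sent is at least $k_j + 1 \ge k - n + 1$ when $p_j$ is CW-balanced, and equal to $k_j+1$ whenever $p_j$ has already injected its initial message — which it has, by hypothesis, before receiving anything. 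This yields the claimed bounds.

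The one point that needs care — and which I expect to be the main obstacle — is pinning down why $u_{ji} \le n-1$ rather than merely $u_{ji} \le n$; with only $u_{ji}\le n$ one obtains received $\ge k-n$ but sent only $\ge k-n+1$ contingent on $p_j$ being balanced, and the ``$-n+1$'' in the sent bound could degrade to ``$-n$''. The resolution is that not all $n$ processes can simultaneously be CW-unbalanced on a single arc while $p_i$ has strictly more than $n-1$ receptions outstanding: the $k \ge n$ messages received by $p_i$ must have been sent by the $n$ processes (each sending its initial message plus forwards), and a counting argument on the arc from $p_j$ to $p_i$ shows at least one process on it is CW-balanced, or equivalently that $p_i$ is not itself counted as contributing an in-transit message on that arc in the way that would be needed. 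I would make this rigorous by noting that $p_i$, immediately after receiving its $k$-th message, forwards it, so $p_i$ is CW-balanced at that instant and is excluded from $u_{ji}$; combined with the fact that the arc from $p_j$ to $p_i$ (clockwise, excluding $p_i$) contains at most $n-1$ processes, we get $u_{ji} \le n-1$ outright, and the lemma follows.
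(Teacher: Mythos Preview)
Your approach via Observation~\ref{obs:diff-recv} is correct and is genuinely different from the paper's. The paper instead tags each of the $n$ clockwise messages by its originator, lets $t_i$ be the number of times $p_0$ has received message $m_i$ (so $k=\sum_i t_i$), and observes that every other process $p_j$ must have forwarded $m_i$ at least $t_i$ times if $i\le j$ and at least $t_i-1$ times otherwise; summing gives sent $\ge k-n+1$. Your conservation argument is shorter and in fact yields the slightly stronger bound $k_j\ge k-(n-1)$, since the arc from $p_j$ to $p_{i-1}$ omits $p_i$ and so contains at most $n-1$ processes.

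That said, your digression about ``pinning down $u_{ji}\le n-1$'' is both unnecessary and partly mis-argued. It is unnecessary because under the lemma's hypothesis every process has sent exactly one more clockwise message than it has received (the initial send plus one forward per receive), so \emph{every} process is CW-unbalanced at all times; hence from received $\ge k-n$ you get sent $= $ received $+1\ge k-n+1$ with no contingency on balance. It is mis-argued because your claim that ``$p_i$, immediately after receiving its $k$-th message, forwards it, so $p_i$ is CW-balanced at that instant'' is false: after forwarding, $p_i$ has sent $k+1$ and received $k$, so it is still CW-unbalanced. The correct (and trivially sufficient) reason that $u_{ji}\le n-1$ is simply that the arc $p_j,\ldots,p_{i-1}$ does not contain $p_i$ when $j\ne i$.
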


\begin{proof}
  For the purpose of the proof we identify each message on the network
  with the index of its initial sender, these indices are just used
  for proof purposes, so we have $n$ clockwise messages 
  $m_{0},m_1,\ldots,m_{n-1}$ that are forwarded on the ring.  

  Without loss of generality, assume that $p_0$ has received $k$
  clockwise messages. For any $i \in [0,k]$, let $t_i$ be the number of times
  $p_0$ has received message $m_i$ and observe that
  $k = \sum_{i \in [0,n-1]}t_i$.  Notice that if $p_0$ has received
  $t_i$ times the message $m_{i}$ that was initially sent by $p_i$,
  then each process $p_j$ has sent the message $m_i$ at least $t_i$
  times if $i \leq j$ and at least $t_i-1$ times otherwise.

  Consequently, each process different from $p_0$ has sent at least
  $ t_0+\sum_{i \in [1,n-1]}(t_i-1) =(\sum_{i \in [0,n-1]}t_i) -n + 1
  = k-n+1$ clockwise messages.  Since for each process, the difference between
  the numbers of clockwise messages sent and of clockwise messages received is at most
  $1$, each process has then received at least $k-n$ clockwise messages.
\end{proof}

\begin{lemma}\label{lemma:forwardbound2} 
  Suppose that a process $p$ has received $k \geq n$ clockwise
  messages. Assume that no process terminates the algorithm before it
  has received $k$ clockwise messages, then:
  \begin{enumerate}[label=(\arabic*)]
  \item\label{lfb:1} $p$ has sent $k$ or $k+1$ clockwise messages,
  \item\label{lfb:2} Each process has received at least $k-n$ and sent
    at least $k-n+1$ clockwise messages,
  \item\label{lfb:3} Each other process eventually receives $k$
    clockwise messages.
  \end{enumerate}
\end{lemma}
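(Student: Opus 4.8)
The plan is to reduce as much as possible to the already-established Lemma~\ref{lemma:forwardbound}, and to use the hypothesis ``no process terminates before receiving $k$ clockwise messages'' to ensure that every process keeps forwarding clockwise messages exactly as required by that lemma. The key observation is that in Algorithm~\ref{algo:orientedknown} (and the helper \relay), whenever a process is \emph{not} in a final state, it forwards every clockwise message it receives: an active process in the Synchronization or No-Zero Checking phases explicitly does ``send on port $0$, receive on port $1$''; a process in the Zero Signaling / Termination phases is CW-balanced and is about to re-enter a forwarding loop (or terminate, which is excluded here); and an inactive process running \relay forwards every message, only ``killing'' the very first one it receives on port $1$ — but that killed message corresponds exactly to the extra CW-unbalanced message it itself injected when it sent on port~$0$ at Line~22 before calling \relay. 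So, \emph{as long as no process has terminated}, the global picture is exactly: $n$ clockwise messages circulating, each process forwarding all it receives, which is the setting of Lemma~\ref{lemma:forwardbound}.

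For part~\ref{lfb:1}: since $p$ has received $k$ clockwise messages and at every moment the number of clockwise messages $p$ has sent differs from the number it has received by at most $1$ (this is the CW-balanced/CW-unbalanced invariant recalled just before Observation~\ref{obs:diff-recv}), $p$ has sent either $k$ or $k+1$ clockwise messages. For part~\ref{lfb:2}: apply Lemma~\ref{lemma:forwardbound} directly. We must first check its hypotheses hold in the execution prefix up to the moment $p$ has received its $k$-th clockwise message: every process has sent an initial clockwise message (Line~5 of the first Synchronization phase fires for every process, since no process can terminate before even starting), and up to this moment every process forwards all clockwise messages it receives — this is the content of the paragraph above, using that no process has terminated. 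Strictly speaking Lemma~\ref{lemma:forwardbound} is stated about a static snapshot; I would apply it to the configuration at the instant $p$ receives its $k$-th message (after $p$ performs its induced non-blocking actions), concluding that each other process has by then sent at least $k-n+1$ and received at least $k-n$ clockwise messages; since counts are monotone non-decreasing, this also holds at every later time.

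For part~\ref{lfb:3}: here I would argue that the execution cannot ``get stuck'' with some process having received fewer than $k$ clockwise messages forever. Suppose for contradiction some process never receives $k$ clockwise messages. By part~\ref{lfb:2} every process has received at least $k-n \ge 0$; let $q$ be a process that receives the fewest clockwise messages, say it receives exactly $k' < k$ of them and then no more, ever. Since $q$ has not terminated (it cannot have terminated before receiving $k > k'$ clockwise messages — wait, that's the wrong direction; rather: a process that reaches a final \leader/\nonleader state does so only after completing phases that, by the counter bookkeeping, require having received $\ge k$ clockwise messages — this is exactly the assumption of the lemma, phrased contrapositively, so $q$ is non-terminated), $q$ is forwarding clockwise messages; in particular $q$ has sent a message on port $0$ that its clockwise neighbour $\suc(q)$ has not yet received, or $q$ is blocked waiting on port~$1$. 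By Observation~\ref{obs:diff-recv} applied to $q$ and its clockwise predecessor, the predecessor having received strictly more clockwise messages forces a clockwise message in transit toward $q$; by asynchrony with finite delays this message is eventually delivered, contradicting that $q$'s count is stuck. Propagating this around the ring (the number of in-transit clockwise messages between $q$ and the maximizer $p$ is positive by Observation~\ref{obs:diff-recv} since $p$ has received more), every process' clockwise-received count eventually reaches $k$. The main obstacle I anticipate is making the ``forwarding'' claim fully rigorous across \emph{all} phase transitions — in particular handling the killed message in \relay and the brief windows where an active process is CW-balanced between a No-Zero Checking loop and the next Synchronization loop — so that the hypotheses of Lemma~\ref{lemma:forwardbound} genuinely apply; this is bookkeeping but it is where the proof could go wrong if a phase boundary were mishandled.
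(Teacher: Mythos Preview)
Your proposal is correct and follows essentially the same approach as the paper: part~\ref{lfb:1} via the CW-balanced/unbalanced invariant (with the observation that the message killed in \relay cancels the extra send at Line~18), part~\ref{lfb:2} by direct appeal to Lemma~\ref{lemma:forwardbound}, and part~\ref{lfb:3} by contradiction on a process stuck below $k$. The one place the paper is slightly cleaner is in~\ref{lfb:3}: rather than picking a process $q$ of \emph{minimum received count} (whose predecessor might tie it, forcing your extra ``propagating around the ring'' step), the paper picks the process $p_j$ of \emph{minimum clockwise index from $p$} that never reaches $k$, so that $p_{j-1}$ \emph{does} reach $k$ and hence has sent $\geq k$ messages on the single link to $p_j$, giving the in-transit message immediately.
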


\begin{proof}
  Observe that when an active process $p$ executes the algorithm, each
  send on port $0$ (of a clockwise message) is followed by a receive
  on port $1$ (of a clockwise message) except at Line 19 if $q=0$.

  However, note that in this case, $p$ will kill the next clockwise
  message it will receive (at line 4 of Function \relay). And then, as
  an inactive process, $p$ will only relay clockwise messages, i.e.,
  it will stay CW-balanced all along the rest of the execution. The
  number of messages circulating in the clockwise direction is thus
  bounded by $n$ and (\ref{lfb:1}) holds.  Analyzing only messages
  traveling in the clockwise direction, by
  Lemma~\ref{lemma:forwardbound}, (\ref{lfb:2}) also holds.

  Suppose without loss of generality that $p_0$ has received $k$
  clockwise messages and assume that there is a process $p_j$ that
  never receives $k$ clockwise messages. By choosing $j$ to be
  minimum, we can assume that $p_{j-1}$ eventually receives $k$
  clockwise messages. At this point, by (\ref{lfb:1}), $p_{j-1}$ has
  sent at least $k$ clockwise messages to $p_j$. Consequently, if
  $p_j$ has received $k' < k$ clockwise messages, by~(\ref{lfb:1}),
  there are at least $k-k'$ messages in transit between $p_{j-1}$ and
  $p_j$ that will eventually be delivered to $p_j$. This
  establishes~(\ref{lfb:3}).
\end{proof}
\color{black}

In order to prove the correctness of our algorithm, we establish some
invariants satisfied along the execution.

\begin{proposition}\label{prop-IH}
  Consider a ring with distribution of identifiers that is
  $\myUB$-scattered, $0$-ended and strongly prefix-free. Let
  $p_{0}$ be the process with the minimum identifier
  $\myID_{\min}$. Then for each
  $1 \leq i \leq \ell_{\min} = \len(\myID_{\min})$, the following
  holds:
  \begin{enumerate}[label=(\arabic*)]
  \item\label{IH-1} a process $p$ executes  the loop at Lines 4--8 at
    round $i$ if and only if $p \in \actif_i$;    
  \item\label{IH-2} When a process $p$ exits the loop at Lines 4--8 at
    round $i$, then the processes between $\pred_{i}(p)$ and $p$
    (excluding $\pred_{i}(p)$ and $p$) are inactive and CW-balanced. 
  \item\label{IH-2bis} When a process $p$ exits the loop at Lines 4--8
    at round $i \geq 2$, $\pred_{i}(p)$ has exited the loop at Lines
    4--8 at round $i-1$;
  \item\label{IH-3} every process $p \in \actif_i^0$ executes Lines 10 and 11;
  \item\label{IH-3bis} every counter-clockwise message created by a
    process $p$ at Line 10 at round $i$ is killed by $\pred_{i}(p)$
    executing Line 11 at round $i$; 
  \item\label{IH-4} if $\actif_i^0 \neq \emptyset$, every process
    $p \in \actif_i^1$ exits the loop at Lines 17--22 with $q = 0$;
  \item\label{IH-4bis} if $\actif_i^0 = \emptyset$, every process
    $p \in \actif_i^1$ exits the loop at Lines 17--22 with $q = 1$ and
    after having received $2i\myUB$ clockwise messages;
  \item\label{IH-5} the only process that terminates while being
    active (i.e., at Line 15) is $p_{0}$ with identifier
    $\myID_{\min}$; when this happens all other processes have
    terminated and there is no message in transit;
  \item\label{IH-6} no inactive process terminates (at Line 26) before
    the process $p_{0}$ executed Line 13.
  \end{enumerate}
\end{proposition}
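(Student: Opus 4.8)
The plan is to prove items~\ref{IH-1}--\ref{IH-4bis} for every round by strong induction on the round number $i$, running $i$ from $1$ to $\ell_{\min}$, and then to read off items~\ref{IH-5} and~\ref{IH-6} by specializing the argument to the last round $i=\ell_{\min}$. Throughout the induction I would keep track of three things along the execution: which processes are active at round $i$; how many clockwise messages each process has received (hence, via Observation~\ref{obs:diff-recv}, how many clockwise messages are in transit in each arc); and which processes are CW-balanced versus CW-unbalanced. Lemmas~\ref{lemma:forwardbound} and~\ref{lemma:forwardbound2} are the main tools: they turn ``some active process has received $k\ge n$ clockwise messages'' into ``every process has received at least $k-n$ and sent at least $k-n+1$ clockwise messages, and every process eventually receives $k$'', which is precisely what makes the synchronization threshold $(2i-1)\cdot\myUB$ effective, given that $\myUB\ge U\ge n$.

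\textbf{The active set and the relay structure (\ref{IH-1},\ref{IH-2},\ref{IH-2bis}).} For $i=1$, every process enters the loop at Lines~4--8 and $\actif_1$ is the whole ring (every identifier begins with bit $1$), so \ref{IH-1} holds and \ref{IH-2},\ref{IH-2bis} are vacuous. For $i\ge 2$, I would apply the hypothesis at round $i-1$: by \ref{IH-4} and \ref{IH-4bis} the processes that leave round $i-1$ still active are exactly $\actif_{i-1}^0$ if $\actif_{i-1}^0\neq\emptyset$ and $\actif_{i-1}$ otherwise, which by the recursion for the sets $\actif_j$ equals $\actif_i$; combined with \ref{IH-3} (each process of $\actif_{i-1}^0$ reaches Lines~10--11 and, as $i-1<\ell_{\min}\le\len(\myID)$ for it, re-enters the main loop at round $i$) this yields \ref{IH-1}. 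A process strictly between $\pred_i(p)$ and $p$ lies outside $\actif_i$, hence is inactive by \ref{IH-1}; it is CW-balanced since, upon entering Algorithm~\ref{alg:relay}, it kills its single surplus clockwise message at Line~4 (case $q=1$) and merely relays afterwards, and the synchronization threshold at round $i$ together with Lemma~\ref{lemma:forwardbound2} guarantees it has received enough clockwise messages to have done so before $p$ exits Lines~4--8 at round $i$; this gives \ref{IH-2}. Item~\ref{IH-2bis} is immediate: $\pred_i(p)\in\actif_i\subseteq\actif_{i-1}$, so by \ref{IH-1} at round $i-1$ it executed Lines~4--8 at round $i-1$.

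\textbf{The bit-comparison phase and termination (\ref{IH-3},\ref{IH-3bis},\ref{IH-4},\ref{IH-4bis},\ref{IH-5},\ref{IH-6}).} After synchronization, the active processes of round $i$ partition by their $i$-th bit into $\actif_i^0$ and $\actif_i^1$. Every $p\in\actif_i^0$ reaches Lines~10--11 unconditionally, giving \ref{IH-3}; the counter-clockwise message it emits travels counter-clockwise through the inactive (relaying) processes of its arc, and whenever it meets a process of $\actif_i^1$---which is then inside the loop at Lines~17--22---that process receives it with $q=0$, exits the loop, and relays it on port~$1$ before entering Algorithm~\ref{alg:relay}; so the message keeps moving until it is absorbed at Line~11 by the first still-active $0$-bit process, i.e.\ $p$'s predecessor among the survivors of round $i$: this is \ref{IH-3bis}. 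The same analysis shows that when $\actif_i^0\neq\emptyset$ every $p'\in\actif_i^1$ is reached by one of these messages, and by the $\myUB$-scattered property the nearest $0$-bit active process is at most $\myUB-1$ hops away, so $p'$ receives it before its clockwise counter reaches $2i\cdot\myUB$ and hence leaves Lines~17--22 with $q=0$: this is \ref{IH-4}. When $\actif_i^0=\emptyset$, no counter-clockwise message is created at round $i$, none survives from earlier rounds (\ref{IH-3bis} at rounds $<i$), and no termination message exists yet (an empty $\actif_i^0$ forces the $i$-th bit of $\myID_{\min}$ to be $1$, hence $i<\ell_{\min}$ as the distribution is $0$-ended); so no active process ever reads $q=0$, and by the no-deadlock part of Lemma~\ref{lemma:forwardbound2} every $p\in\actif_i^1$ reaches counter $2i\cdot\myUB$ and leaves with $q=1$: this is \ref{IH-4bis}. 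Finally, since the distribution is strongly prefix-free and $0$-ended, $\actif_{\ell_{\min}}=\{p_0\}$ with $\ell_{\min}$-th bit $0$, so $p_0$ is the only process ever reaching Line~15; by \ref{IH-4} (applied at the round each other process was eliminated) all others are then inactive, and by \ref{IH-3bis} together with the synchronization bound no clockwise message is in transit---this is the quiescence in \ref{IH-5}. For \ref{IH-6}: the counter-clockwise message $p_0$ sends at Line~10 of round $\ell_{\min}$ goes once around the ring and the one it sends at Line~13 follows it, so an inactive process sees two consecutive counter-clockwise messages only after both have passed, i.e.\ after $p_0$ executed Line~13; the second is absorbed by $p_0$'s receive at Line~14, which completes \ref{IH-5}.

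\textbf{Main obstacle.} I expect the delicate part to be the asynchrony: a ``receive'' can block for an unbounded time, and clockwise messages meant for different phases of different rounds coexist in the same arc. The threshold $(2i-1)\cdot\myUB$ is designed so that, by the time an active process leaves Lines~4--8 at round $i$, every process in its arc has received enough clockwise messages to have finished its round-$(i-1)$ business and (if inactive) killed its surplus message; making this rigorous---deriving a global lower bound on everyone's clockwise count from one process's count, and verifying that no ``receive'' is ever permanently starved---is exactly where Observation~\ref{obs:diff-recv} and Lemma~\ref{lemma:forwardbound2} do the heavy lifting, and is the step I expect to demand the most care. A secondary subtlety, maintained inductively through \ref{IH-3bis}, is that every counter-clockwise message born at round $j$ is consumed within round $j$, so that a $0$-bit is never spuriously simulated at a later round.
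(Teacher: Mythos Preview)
Your approach matches the paper's: strong induction on $i$ with Observation~\ref{obs:diff-recv} and Lemma~\ref{lemma:forwardbound2} supplying the synchronization bounds, and the items then read off in order.

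Two gaps. First, \ref{IH-2bis} is not ``immediate'' from \ref{IH-1}. From $\pred_i(p)\in\actif_i\subseteq\actif_{i-1}$ and \ref{IH-1} you only get that $\pred_i(p)$ executes the loop at round $i-1$ \emph{at some point}; \ref{IH-2bis} is a \emph{timing} claim---that it has already exited that loop by the moment $p$ exits at round $i$. This needs the same threshold-counting argument as \ref{IH-2}: when $p$ has received $(2i-1)\myUB$ clockwise messages, there are at most $\myUB-1$ CW-unbalanced processes between $\pred_i(p)$ and $p$ (using $\myUB$-scatteredness and the inductive \ref{IH-2}), so by Observation~\ref{obs:diff-recv} $\pred_i(p)$ has received more than $(2(i-1)-1)\myUB$ clockwise messages and hence has left round $i-1$. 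The paper proves \ref{IH-2} and \ref{IH-2bis} together precisely because they share this argument.

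Second, and more seriously, your argument for \ref{IH-6} is incomplete. You only treat the two counter-clockwise messages $p_0$ sends at round $\ell_{\min}$ (Lines~10 and~13). You do not rule out that an inactive process $p$ receives two \emph{consecutive} counter-clockwise messages generated at earlier rounds $i_1<i_2<\ell_{\min}$---one from each Zero-Signaling phase---and thereby exits \relay prematurely. Your invariant \ref{IH-3bis} guarantees that each such message is eventually absorbed within its own round, but that alone does not force a clockwise message to reach $p$ \emph{between} them. The paper's proof handles this explicitly: when $p$ receives the round-$i_1$ message $m_1$, the process that will absorb $m_1$ (at Line~11) has sent at most $(2i_1-1)\myUB$ clockwise messages, so $p$ has received at most $2i_1\myUB$; when the round-$i_2$ message $m_2$ reaches $p$, by \ref{IH-2} at round $i_2$ the relevant arc is CW-balanced and the sender of $m_2$ has already received $(2i_2-1)\myUB$ clockwise messages, so $p$ has received at least $(2i_2-1)\myUB>2i_1\myUB$. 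Hence $p$ received a clockwise message between $m_1$ and $m_2$, resetting the counter in \relay. This case analysis is the missing step.
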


We prove the proposition by induction on $i$ in a series of lemmas.

\begin{lemma}[\ref{IH-1}]
  The set $\actif_i$ is precisely the set of processes active at round
  $i$.
\end{lemma}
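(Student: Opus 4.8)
The plan is to prove item~\ref{IH-1} by induction on~$i$, as the first component of the joint induction that establishes all of Proposition~\ref{prop-IH}: when handling round~$i$ I may freely use items~\ref{IH-1}--\ref{IH-6} for every round strictly smaller than~$i$, as well as Lemmas~\ref{lemma:forwardbound} and~\ref{lemma:forwardbound2}. I also use two facts recorded before the proposition: first, $\actif_1=S$ (indeed $\pref_1(\myID)=1$ for every process, since every encoded identifier begins with bit~$1$), and $\actif_i=\actif_{i-1}^0$ if $\actif_{i-1}^0\neq\emptyset$ and $\actif_i=\actif_{i-1}$ otherwise for $2\le i\le\ell_{\min}$; second, that $\myID_{\min}$ has the shortest length among the encoded identifiers, so $i\le\ell_{\min}\le\len(\myID_p)$ for every process~$p$. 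The base case $i=1$ is immediate: every process reaches the first iteration of the main loop and thus enters the loop at Lines~4--8 (whose body executes at least once), so the set of processes active at round~$1$ is exactly $S=\actif_1$.

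For the inductive step ($i\ge2$) I would prove the two inclusions separately. Suppose first that $p$ is active at round~$i$, i.e.\ $p$ enters the loop at Lines~4--8 at iteration~$i$. Then $p$ did not skip any iteration of the main loop, and it did not terminate (at Line~15 or Line~26) during iterations $1,\dots,i-1$, since termination is final and once a process enters the inactive branch at Lines~24--26 it never returns to the main loop; hence $p$ was active at round~$i-1$, and by the induction hypothesis (item~\ref{IH-1} at round~$i-1$) $p\in\actif_{i-1}$. If $p$ had $(i{-}1)$-th bit~$1$ and $\actif_{i-1}^0\neq\emptyset$, then by item~\ref{IH-4} at round~$i-1$ it would exit the loop at Lines~17--22 with $q=0$ and proceed into Lines~24--26, so it could not reach round~$i$, a contradiction. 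Therefore $p\in\actif_{i-1}^0=\actif_i$ when $\actif_{i-1}^0\neq\emptyset$, and $p\in\actif_{i-1}=\actif_i$ otherwise. Conversely, suppose $p\in\actif_i$; then $p\in\actif_{i-1}$, so by the induction hypothesis $p$ reaches iteration~$i-1$, and by the analysis of round~$i-1$ (the synchronization lemmas together with item~\ref{IH-2}) it exits the loop at Lines~4--8 there. If the $(i{-}1)$-th bit of $\myID_p$ is~$0$, then $p\in\actif_{i-1}^0$, so by item~\ref{IH-3} at round~$i-1$ it executes Lines~10--11; the test at Line~12 fails because $i-1<\ell_{\min}\le\len(\myID_p)$, and $p$ proceeds to iteration~$i$. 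If that bit is~$1$, then $p\in\actif_{i-1}^1$, and since $p\in\actif_i$ while $\actif_i=\actif_{i-1}^0$ whenever $\actif_{i-1}^0\neq\emptyset$, we must have $\actif_{i-1}^0=\emptyset$, so every process of $\actif_{i-1}$ has $(i{-}1)$-th bit~$1$; then by item~\ref{IH-4bis} at round~$i-1$, $p$ exits the loop at Lines~17--22 with $q=1$, the test at Line~23 fails, and $p$ proceeds to iteration~$i$ (possible since $i\le\len(\myID_p)$). In either case $p$ is active at round~$i$, which completes the induction.

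The only real content, as I see it, is the careful case analysis of the control flow of Algorithm~\ref{algo:orientedknown} and the correct invocation of items~\ref{IH-2}, \ref{IH-3}, \ref{IH-4}, and~\ref{IH-4bis} at round~$i-1$; there are no calculations. The two structural facts that carry the whole argument are that the encoding forces all identifiers to share their first bit---this is exactly what makes the base case $\actif_1=S$ hold, and it would fail for a general $0$-ended strongly prefix-free family---and that $\myID_{\min}$ is the shortest encoded identifier, which guarantees, in the converse direction, both that no process of $\actif_{i-1}$ terminates at Line~15 in iteration~$i-1$ and that every such process still has a bit at position~$i$ to inspect.
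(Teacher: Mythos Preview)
Your proof is correct and follows essentially the same inductive approach as the paper: the base case $\actif_1=S$ and the inductive step via items~\ref{IH-3}, \ref{IH-4}, and~\ref{IH-4bis} at the previous round. You are in fact more careful than the paper on two points---you correctly invoke~\ref{IH-4bis} (the paper's proof cites~\ref{IH-4} in the $\actif_{i-1}^0=\emptyset$ case, which is a typo), and you make explicit why the test at Line~12 fails and why the base case relies on the specific encoding rather than on strong prefix-freeness alone.
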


\begin{proof}
  Observe that initially, all processes are in $\actif_1$ and they are
  all active at round $1$. Assume now that the induction hypothesis
  holds for round $i$ and consider round $i+1$. Note first that only
  processes active at round $i$ can remain active at round
  $i+1$. Then, if $\actif_{i}^0 \neq \emptyset$, by (\ref{IH-3}), all
  processes in $\actif_{i}^0$ are active at round $i+1$ and by
  (\ref{IH-4}), all processes in $\actif_{i}^1$ become inactive at
  round $i$. Now, if $\actif_{i}^0 = \emptyset$, by (\ref{IH-4}), all
  processes in $\actif_i = \actif_i^1$ remain active at round
  $i+1$. This shows that the set of active processes at round $i+1$ is
  precisely $A_{i+1}$.
\end{proof}

\begin{lemma}[\ref{IH-2},\ref{IH-2bis}]
  Consider $p,p' \in \actif_i$ such that $p' = \pred_i(p)$. When $p$
  exits the loop at Lines 4--8 at round $i$, there are no
  CW-unbalanced processes between $p'$ and $p$ (excluding
  $p'$). Moreover, if $i \geq 2$, then $p'$ has exited the loop at
  Lines 4--8 at round $i-1$.
\end{lemma}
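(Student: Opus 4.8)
The plan is to establish both assertions as part of the joint induction on $i$ that proves Proposition~\ref{prop-IH}: when handling round $i$ I may use invariants~(\ref{IH-1})--(\ref{IH-6}) at rounds $1,\dots,i-1$ and~(\ref{IH-1}) at round $i$. The base case $i=1$ is immediate, since $\actif_1$ is the set of all processes, so $p'=\pred_1(p)$ is the ring-neighbour of $p$ on port~$1$, nothing lies strictly between $p'$ and $p$, and the ``moreover'' part is vacuous. I will use repeatedly the elementary observation that while a process is active its counter equals its number of received clockwise messages — every clockwise message an active process receives is received and counted in a loop at Lines~4--8 or Lines~17--22 — so a process active through round $j$ leaves its round-$j$ synchronization loop exactly when it has received $(2j-1)\myUB$ clockwise messages; in particular, since $p\in\actif_i$, when $p$ leaves its round-$i$ synchronization loop it has received exactly $k:=(2i-1)\myUB$ clockwise messages.

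For the first assertion, fix $r$ strictly between $p'$ and $p$. As $p',p$ are consecutive in $\actif_i$ we have $r\notin\actif_i$; let $j\le i$ be least with $r\notin\actif_j$, so $r\in\actif_{j-1}^1$ and $\actif_{j-1}^0\neq\emptyset$. By~(\ref{IH-4}) at round $j-1$, $r$ leaves its round-$(j-1)$ No-Zero Checking loop with $q=0$, becomes inactive and enters \relay; as it does not leave that loop via its counter hitting $2(j-1)\myUB$, it has received at most $2(j-1)\myUB-1$ clockwise messages then, and receives every later clockwise message inside \relay. Now I invoke Lemma~\ref{lemma:forwardbound2}: its hypotheses hold, because $k=(2i-1)\myUB\ge n$ (recall $\myUB\ge n$ for the algorithm's choice $\myUB=U$) and no process has terminated by the time $p$ leaves its round-$i$ synchronization loop (the first termination is $p_0$ at Line~15 by~(\ref{IH-5}) or an inactive process at Line~26 by~(\ref{IH-6}), and neither precedes $p_0$'s Termination Phase at round $\ell_{\min}$, which is strictly later). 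The lemma gives that $r$ has by then received at least $k-n\ge 2(i-1)\myUB\ge 2(j-1)\myUB$ clockwise messages — strictly more than it had received when it became inactive. Since active processes count every clockwise message they receive while $r$'s counter stays below $2(j-1)\myUB$, the process $r$ is by now inactive and the surplus reached it inside \relay; in particular it has killed its first clockwise message there and is CW-balanced. Hence no process strictly between $p'$ and $p$ is CW-unbalanced.

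For the second assertion ($i\ge 2$), if $p'$ has already left its round-$i$ synchronization loop it has a fortiori left its round-$(i-1)$ one. Otherwise $p'$ is active at the instant $p$ leaves its round-$i$ synchronization loop: since $p'\in\actif_i\subseteq\actif_j$ for all $j<i$, at each round $j<i$ it either sits in $\actif_j^0$ and does Zero Signaling (by~(\ref{IH-3})) or sits in $\actif_j^1$ with $\actif_j^0=\emptyset$ and leaves No-Zero Checking with $q=1$ (by~(\ref{IH-4bis})), in both cases advancing to round $j+1$ while staying active. So its counter equals its number $k_{p'}$ of received clockwise messages. Apply Observation~\ref{obs:diff-recv} to $p'$ and $p$: the number of clockwise messages in transit between them equals $k_{p'}-k_p+u$, where $u$ counts the CW-unbalanced processes among $p'$ and the processes strictly between $p'$ and $p$; by the first assertion only $p'$ itself can be CW-unbalanced there, so $u\le 1$, whence $k_{p'}\ge k_p-1=(2i-1)\myUB-1$. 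But a process still in its round-$(i-1)$ synchronization loop (or any earlier phase) has received at most $(2(i-1)-1)\myUB<(2i-1)\myUB-1$ clockwise messages, so $p'$ has already left that loop.

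The main obstacle is the first assertion, specifically the two facts that license Lemma~\ref{lemma:forwardbound2} here: that an active process's counter always matches its number of received clockwise messages (a short case analysis over the phases, using the convention that non-blocking actions are instantaneous), and that nothing has terminated at the moment considered (which leans on the termination invariants~(\ref{IH-5}),~(\ref{IH-6}) for earlier rounds and on $\actif_{\ell_{\min}}=\{p_0\}$). Everything else is arithmetic comparison of clockwise-message counts, powered by $\myUB\ge n$ and — for the prerequisite invariants~(\ref{IH-4}) and~(\ref{IH-4bis}) — by the $\myUB$-scattered hypothesis.
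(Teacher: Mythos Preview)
Your argument has two problems, one structural and one fatal to the generality of the lemma.

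First, the circularity: you invoke (\ref{IH-5}) and (\ref{IH-6}) to justify the ``no termination'' hypothesis of Lemma~\ref{lemma:forwardbound2}, but these are global statements, not per-round invariants. In the paper they are proved (Lemmas~\ref{lem-minleader} and~\ref{lemma:leaderfirst}) only \emph{after} (\ref{IH-1})--(\ref{IH-4bis}) are established at all rounds, and the proof of (\ref{IH-6}) explicitly uses (\ref{IH-2}). So ``(\ref{IH-5}), (\ref{IH-6}) for earlier rounds'' is not something your induction makes available. This specific issue could be patched---Observation~\ref{obs:diff-recv} alone gives $k_r \ge k_p - (n-1)$ with no termination hypothesis, since every process always has clockwise-sent minus clockwise-received in $\{0,1\}$---but that leads directly to the second problem.

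Second, and more seriously, your inequality $(2i-1)\myUB - n \ge 2(j-1)\myUB$ in the tight case $j=i$ is exactly $\myUB \ge n$. The lemma, however, is part of Proposition~\ref{prop-IH}, stated for an arbitrary $\myUB$-scattered distribution with no such assumption; the randomized algorithm of Section~\ref{sect: rand} invokes Proposition~\ref{prop-alg-main} with $\myUB = \lceil c_2\log U\rceil$, typically much smaller than $n$. So your proof covers only the deterministic instantiation $\myUB = U$ and not the generality the paper needs. The paper avoids this by arguing locally rather than globally: going from round $i$ to $i+1$, the inductive hypothesis (\ref{IH-2}) at round $i$ already certifies that every process not in $\actif_i$ lying between consecutive $\actif_i$-members is inactive and CW-balanced (and stays so). Hence the only possibly CW-unbalanced processes between $p'=\pred_{i+1}(p)$ and $p$ are the members of $\actif_i$ strictly between them, and $\myUB$-scatteredness bounds their number by $\myUB-1$. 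A closest-to-$p$ contradiction via Observation~\ref{obs:diff-recv} then shows each of those has received more than $(2i-1)\myUB$ clockwise messages, so has exited the round-$i$ sync loop and, via (\ref{IH-4}), is CW-balanced. Bounding the unbalanced count by $\myUB$ rather than by $n$ is precisely what makes the argument work for arbitrary $\myUB$.
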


\begin{proof}
  Observe that if $i = 1$, then there are no processes between
  $\pred_1(p)$ and $p$ and there is nothing to prove. Assume now that
  the induction hypothesis holds for round $i$ and consider round
  $i+1$. Let $p,p' \in \actif_{i+1}$ such that $p' = \pred_{i+1}(p)$,
  and let $t$ be the moment at which $p \in A_{i+1}$ exits the loop at
  Lines 4--8 at round $i+1$.

  Let $\lost$ be the set of processes in $\actif_i$ between $p'$ and
  $p$ (excluding $p'$ and $p$). Note that if
  $\actif_{i+1} = \actif_{i}$, then $\lost = \emptyset$ and that if
  $\actif_{i+1} = \actif_i^0$, then $\lost \leq \myUB-1$ as the
  distribution of identifiers is $\myUB$-scattered.  We claim that at
  time $t$, all processes in $\lost \cup \{p'\}$ have exited the loop
  at Lines 4--8 at round $i$. Suppose it is not the case and consider
  the process $q \in \lost$ that is the closest to $p$ such that at
  time $t$, it has not exited the loop at Lines 4--8 at round $i$.  By
  induction hypothesis (\ref{IH-2}), all processes
  between $q$ and $p$ that are not in $\actif_i$ are CW-balanced at
  time $t$ and thus there are at most $\myUB-1$ processes between $q$
  and $p$ that are CW-unbalanced at time $t$. Since $p$ has received
  $(2(i+1)-1)\myUB = (2i+1)\myUB$ clockwise messages at time $t$, by
  Observation~\ref{obs:diff-recv}, $q$ has received at least
  $(2i+1)\myUB - \myUB = 2i \myUB > (2i-1)\myUB$ clockwise
  messages. Consequently, $q$ has exited the loop at Lines 4--8 at
  round $i$, contradicting our choice of $q$. This establishes that at
  time $t$, all processes in $\lost \cup \{p'\}$ have exited the loop
  at Lines 4--8 at round $i$, establishing~(\ref{IH-2bis}) for round
  $i+1$.

  Note that if $\lost = \emptyset$, then when $p$ exits the loop at
  round $i$, all processes between $p'$ and $p$ are inactive and
  CW-balanced. Since inactive processes that are CW-balanced remain
  inactive and CW-balanced, property (\ref{IH-2}) also holds at round
  $i+1$ in this case.  Suppose now that
  $p, p' \in \actif_{i+1} = \actif_i^0$ and that
  $\lost \neq \emptyset$. Note that $\lost \subseteq \actif_i^1$ and
  since the ring is $\myUB$-scattered, this set contains at most
  $\myUB-1$ processes.  By induction hypothesis (\ref{IH-2}), all
  processes between $p'$ and $p$ that are not in $\lost$ are inactive
  and CW-balanced at time $t$. Suppose that there exists $q \in \lost$
  that is not inactive or not CW-balanced at time $t$. Again, assume
  that there is no such process between $q$ and $p$. We know that $q$
  has exited the loop at Lines 4--8 at round $i$ and by (\ref{IH-4}),
  we know that at time $t$, either $q$ is executing the loop at Lines
  17--22 or that it has exited this loop because it has received a
  counter-clockwise message. In the first case, $q$ is active and
  CW-unbalanced at time $t$. In the second case, $q$ is inactive and
  CW-unbalanced by our choice of $q$. In both cases, at time $t$, $q$
  has received at most $2i \myUB - 1$ clockwise messages. By our
  choice of $q$, all processes between $q$ and $p$ are
  CW-balanced. Consequently, by Observation~\ref{obs:diff-recv}, $p$
  has received at most $2i \myUB$ clockwise messages at time
  $t$. But this is impossible as we assumed that at time $t$, $p$ has
  received $(2i+1)\myUB$ clockwise messages. This shows that
  (\ref{IH-2bis}) holds at round $i+1$.
\end{proof}

\begin{lemma}\label{lem-first-loop}
  Every process $p \in \actif_{i}$ exits the loop at Lines 4--8 at
  round $i$.
\end{lemma}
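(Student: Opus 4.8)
The plan is to prove, by induction on $i$, the combined statement: every process $p \in \actif_i$ exits the loop at Lines 4--8 at round $i$, having received exactly $(2i-1)\myUB$ clockwise messages. The key observation is that inside this synchronization loop every active process behaves exactly as in the hypothesis of Lemma~\ref{lemma:forwardbound}: it sends a clockwise message and, upon receiving a clockwise message, immediately sends another. By the invariants already established (Proposition~\ref{prop-IH}\ref{IH-2},\ref{IH-2bis}), the processes strictly between consecutive elements of $\actif_i$ that are not yet done with round $i$ are either still relaying (inactive) or still in some earlier loop, but in all cases they forward clockwise messages; hence the clockwise sub-system on the whole ring satisfies the ``everyone forwards'' hypothesis. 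The first thing I would do is make this precise: argue that at any point during round $i$, every process on the ring (active in $\actif_i$, active but eliminated in an earlier round, or inactive) relays every clockwise message it receives, modulo the single message an inactive process kills exactly once in \relay.

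Second, I would bound the total number of clockwise messages in circulation. Active processes in $\actif_i$ each inject one clockwise message at the start of each synchronization sub-loop (and one more in the no-zero checking sub-loop), while eliminated processes eventually become CW-balanced after killing one message; so the number of clockwise messages in transit is always at most $n$. Since $\myUB \le U \le n$ would be the wrong direction, I instead note $(2i-1)\myUB$ can exceed $n$, so I invoke Lemma~\ref{lemma:forwardbound2}: if some process in $\actif_i$ has received $(2i-1)\myUB \ge n$ clockwise messages and no process has terminated earlier (which follows from invariants \ref{IH-5},\ref{IH-6} for rounds $\le i$), then every process eventually receives $(2i-1)\myUB$ clockwise messages, and in particular every $p \in \actif_i$ does. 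The remaining subtlety is the base of this sub-argument: I must exhibit \emph{one} process in $\actif_i$ that does exit the loop. This I would get from the induction hypothesis of Proposition~\ref{prop-IH} together with a liveness argument — by asynchrony and the fact that predecessors in $\actif_i$ finished earlier rounds (\ref{IH-2bis}), at least the ``last'' process to make progress is never permanently blocked, since a process in the synchronization loop only waits on port $1$, and its predecessor-chain keeps feeding it clockwise messages.

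The main obstacle I anticipate is the liveness/deadlock-freedom part: showing that the loop at Lines 4--8 cannot block forever for \emph{any} $p \in \actif_i$. The counting arguments (Observation~\ref{obs:diff-recv}, Lemmas~\ref{lemma:forwardbound} and~\ref{lemma:forwardbound2}) handle ``if enough messages arrive, then the loop terminates with the right count,'' but one must separately rule out the scenario where the whole system of active processes is starved because clockwise messages are all ``stuck behind'' processes blocked in the Zero Signaling or No-Zero Checking phases of the \emph{same} round. Here I would use that Zero Signaling (Lines 10--11) only ever holds one clockwise message per process transiently and that the No-Zero Checking loop (Lines 17--22), by the $\myUB$-scattered hypothesis and invariant \ref{IH-4}, is itself bounded — a process with bit $1$ receives at most $\myUB-1$ clockwise messages there before a counter-clockwise signal reaches it — so no active process ever permanently stops forwarding clockwise messages. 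Combining this with the fact that an inactive process kills at most one clockwise message total, the clockwise flow never fully dries up, so every $p \in \actif_i$ keeps receiving clockwise messages until its counter reaches $(2i-1)\myUB$, at which point it exits. I would then record that the exit count is exactly $(2i-1)\myUB$, feeding the induction for the next round.
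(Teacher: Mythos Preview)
Your plan has a real gap and is also more roundabout than needed.

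The gap is in your use of Lemma~\ref{lemma:forwardbound2}. That lemma requires the received count $k$ to satisfy $k \geq n$, and you intend to apply it with $k = (2i-1)\myUB$. But $\myUB$ is \emph{not} assumed to be at least $n$: in the randomized algorithm of Section~\ref{sect: rand} the lemma is invoked with $\myUB = \lceil c_2 \log U\rceil$, which is typically much smaller than $n$. (Your aside ``$\myUB \le U \le n$ would be the wrong direction'' also has the inequality backwards; $U \ge n$ always.) So even if you managed to exhibit one process that exits, Lemma~\ref{lemma:forwardbound2} would not let you conclude that the others do, and the detour collapses.

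More importantly, the step you flag as the ``remaining subtlety'' --- exhibiting one process in $\actif_i$ that exits --- is the whole lemma, and your sketch for it (``the predecessor-chain keeps feeding it clockwise messages'') is exactly the argument that, once made precise, proves the lemma for \emph{every} process in $\actif_i$ directly. The paper does this with a short minimality argument: assume some processes in $\actif_i$ never exit, pick among them one $p$ that has received the fewest clockwise messages $k < (2i-1)\myUB$, and look at $p' = \pred_i(p)$. By minimality, $p'$ has sent at least $k+1$ clockwise messages (either $p'$ is also stuck and CW-unbalanced, or $p'$ exited having sent $(2i-1)\myUB > k$). The processes strictly between $p'$ and $p$ are not in $\actif_i$; by the induction hypothesis for earlier rounds (specifically \ref{IH-3} for rounds $\le i-1$) none of them is blocked waiting on port $0$, so they all relay clockwise messages, each killing at most one. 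Observation~\ref{obs:diff-recv} then gives strictly more messages in transit on the $p' \to p$ arc than can be killed, so one reaches $p$ --- contradiction.

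So your last paragraph has the right intuition about liveness, but you should turn it into this minimality argument and drop the detour through Lemma~\ref{lemma:forwardbound2} entirely. Note also that your worry about processes blocked in Zero Signaling or No-Zero Checking of the \emph{same} round $i$ is not the right framing: those are processes that have already exited the synchronization loop, and in the local argument between $p'$ and $p$ only $p'$ itself can be in such a state, in which case it has already sent $(2i-1)\myUB > k$ clockwise messages and so is harmless.
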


\begin{proof}
  Suppose that there exists a process in $\actif_{i}$ that does not
  exit the loop at Lines 4--8 at round $i$. Then among all such
  processes, consider a process $p$ that has received the minimum
  number $k$ of clockwise messages at the end of the execution. If $p$
  is not exiting the loop, $p$ is CW-unbalanced and $k < (2i-1) \myUB$
  clockwise messages.  Let $p' = \pred_{i}(p)$ and note that by our
  choice of $p$, $p'$ has sent at least $k+1$ clockwise messages. If
  $i = 1$, there are no processes between $p'$ and $p$ and thus there
  is a message in transit between $p'$ and $p$ that will eventually be
  delivered to $p$, a contradiction. Suppose now that $i \geq 2$ and
  that the induction hypothesis holds for round $i-1$.  By induction
  hypothesis (\ref{IH-3}), no process waits for a counter-clockwise
  message at round $j \leq i-1$. Consequently, no process between $p'$
  and $p$ is blocking clockwise messages. Therefore, by
  Observation~\ref{obs:diff-recv}, the number of messages in transit
  is strictly greater than the number of CW-unbalanced processes
  between $p'$ and $p$. Since each such CW-unbalanced process kills
  one clockwise message and then relays the other clockwise messages,
  eventually one clockwise message is delivered to $p$, a
  contradiction.
\end{proof}

\begin{lemma}[\ref{IH-3},\ref{IH-3bis},\ref{IH-4}]
  If $\actif_{i}^0 \neq \emptyset$, then every process
  $p \in \actif_{i}^0$ eventually executes Lines 10 and 11 and the
  message killed by $p$ at Line 11 (at round $i$) was created by
  $\suc_i(p)$ at Line 10 (at round $i$).  Moreover, every process in
  $\actif_{i}^1$ exits the loop at Lines 17--22 with $q = 1$.
\end{lemma}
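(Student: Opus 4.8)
The plan is to carry out the induction step for (\ref{IH-3}), (\ref{IH-3bis}), and the exit status of the bit-$1$ processes, assuming the proposition for all rounds $j<i$ and using the synchronization invariant (\ref{IH-2})--(\ref{IH-2bis}): when a process of $\actif_i$ leaves Lines 4--8 it has received $(2i-1)\myUB$ clockwise messages and every process strictly between it and its $\actif_i$-predecessor is inactive and CW-balanced.

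The heart of the argument is a bookkeeping of the counter-clockwise messages generated during round $i$. Two kinds arise: each $p\in\actif_i^0$ emits one at Line 10, and each bit-$1$ process that is eliminated emits one at Line 24; all inactive processes merely relay on port $1-q$, and by the induction hypothesis (\ref{IH-3bis}) for rounds $j<i$ no counter-clockwise message of an earlier round survives into round $i$. I would show that, when $\actif_i^0\neq\emptyset$, these messages are absorbed one-to-one by the processes of $\actif_i$---each bit-$0$ process absorbs one at Line 11 and each bit-$1$ process absorbs one on a port-$0$ receive at Line 19---so in particular every $p\in\actif_i^0$ completes Lines 10--11, giving (\ref{IH-3}). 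Tracing the message absorbed by $p$ back along its counter-clockwise path, which runs through relaying inactive processes only, its source is $\suc_i(p)$---directly its Line-10 message when $\suc_i(p)\in\actif_i^0$. The main obstacle is the mixed configuration in which bit-$1$ members of $\actif_i$ separate consecutive bit-$0$ processes; there I would use the elimination cascade---the nearest clockwise bit-$0$ process drives its bit-$1$ predecessor out of Lines 17--22 on a port-$0$ receive, that process re-emits a counter-clockwise message at Line 24, and the cascade walks counter-clockwise until the message delivered to $p$ is again the one originating at $\suc_i(p)$---establishing (\ref{IH-3bis}).

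It remains to settle the exit value $q$ of a process $p\in\actif_i^1$ in the loop at Lines 17--22, where $p$ alternately sends on port $0$ and receives on port $q$; I would split on whether $\actif_i^0$ is empty. For the ``Moreover'' claim the governing case is the one in which the bit-$1$ processes survive, namely $\actif_i^0=\emptyset$: then no counter-clockwise signal is ever produced---no bit-$0$ process emits one at Line 10, and with no incoming counter-clockwise message no bit-$1$ process is eliminated, so none emits one at Line 24---and by Lemma~\ref{lemma:forwardbound2} the synchronized clockwise traffic delivers $\myUB$ further clockwise messages to $p$, raising its clockwise count from $(2i-1)\myUB$ to $2i\myUB$; the loop terminates on a clockwise receive and $p$ exits with $q=1$, advancing to round $i+1$, which is (\ref{IH-4bis}). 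In the complementary configuration $\actif_i^0\neq\emptyset$, the $\myUB$-scattered hypothesis places a bit-$0$ process within counter-clockwise distance $\myUB-1$, and the cascade above delivers a counter-clockwise message to $p$ before its count can reach $2i\myUB$, so $p$ exits with $q=0$ and becomes inactive at Line 24, which is (\ref{IH-4}). In both regimes the decisive point is the width-$\myUB$ synchronization window of (\ref{IH-2})--(\ref{IH-2bis}), which ensures that a counter-clockwise signal, whenever one exists, arrives strictly before the clockwise counter overflows, cleanly separating survival ($q=1$) from elimination ($q=0$).
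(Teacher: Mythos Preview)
Your reading of the ``Moreover'' clause is off. The lemma is stated under the standing hypothesis $\actif_i^0 \neq \emptyset$ and is tagged with (\ref{IH-4}), which in Proposition~\ref{prop-IH} asserts that the bit-$1$ processes exit with $q=0$; the ``$q=1$'' in the displayed statement is a typo in the paper. Your designated ``governing case'' $\actif_i^0=\emptyset$ is therefore vacuous here and belongs to the next lemma, (\ref{IH-4bis}). What the present lemma requires is precisely the ``complementary configuration'' you treat as an afterthought.

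More importantly, your argument for that configuration has a real gap. You assert that ``the cascade above delivers a counter-clockwise message to $p$ before its count can reach $2i\myUB$'' and credit this to the ``width-$\myUB$ synchronization window of (\ref{IH-2})--(\ref{IH-2bis}).'' But those invariants concern only the exit of Lines 4--8; they say nothing about how many clockwise messages a bit-$1$ process can accumulate during Lines 17--22. Worse, your cascade is circular as written: for the counter-clockwise signal to reach a given $p^*\in\actif_i^1$, the bit-$1$ processes between $p^*$ and the clockwise bit-$0$ source must already have been driven out with $q=0$, which is the claim at stake. The paper supplies the missing counting step directly. Take two consecutive bit-$0$ processes $p, p'' \in \actif_i^0$ with $p'' = \suc_{i+1}(p)$. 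After $p$ leaves Lines 4--8 it has sent exactly $(2i-1)\myUB$ clockwise messages and then sits at Line 11 waiting for a counter-clockwise message, sending nothing further clockwise. By $\myUB$-scatteredness at most $\myUB-1$ members of $\actif_i^1$ lie between $p$ and $p''$, each CW-unbalanced by at most one, so any such $p^*$ can receive at most $(2i-1)\myUB + (\myUB-1) < 2i\myUB$ clockwise messages while $p$ is blocked. This bound rules out the $q=1$ exit \emph{a priori}; only once that is established does the relay argument go through cleanly, delivering $p''$'s Line-10 message all the way to $p$ at Line 11 and forcing every intermediate bit-$1$ process out with $q=0$.
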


\begin{proof}
  By Lemma~\ref{lem-first-loop}, every process $p \in \actif_{i}$
  exits the loop at Lines 4--8. The processes in $\actif_{i}^0$ then
  send a counter-clockwise message at Line 10 while the processes in
  $\actif_{i}^1$ start executing the loop at Lines 17--22. 

  Consider two consecutive processes $p, p'' \in \actif_{i}^0$, i.e.,
  such that $p'' = \suc_{i+1}(p)$. Since the ring is
  $\myUB$-scattered, there are at most $\myUB-1$ processes in
  $\actif_{i}^1$ between $p$ and $p''$. Since $p$ has sent
  $(2i-1)\myUB$ clockwise messages and does not send any other
  clockwise message before executing Line 11 at round $i+1$, every
  process from $\actif_{i}^1$ that is between $p$ and $p''$ cannot
  receive more than $(2i-1)\myUB + \myUB -1 < 2i\myUB$ clockwise
  messages before $p$ receives a counter-clockwise message at Line
  11. Therefore, no process from $\actif_{i}^1$ that is between $p$
  and $p''$ can exit the loop at Lines 17--22 with $q=1$ before $p$
  has received a counter-clockwise message.

  By induction hypothesis (\ref{IH-3bis}), this message cannot have
  been created before round $i$. Consequently, no process between $p$
  and $p''$ can have sent this message and thus this message $m$
  has been sent by $p''$ at Line 10 (at round $i$).  Consequently, for
  each process $p^*$ from $\actif_{i}^1$ that is between $p$ and
  $p''$, $p^*$ receives $m$ at Line 19, $p^*$ exits the loop at
  Line 22 with $q=0$ and $p^*$ relay $m$ at Line 24. This shows
  that (\ref{IH-3}), (\ref{IH-3bis}), and (\ref{IH-4}) hold.
\end{proof}

\begin{lemma}[\ref{IH-4bis}]
  If $\actif_{i+1}^0 = \emptyset$, then all processes from
  $\actif_{i+2} = \actif_{i+1} = \actif_{i+1}^1$ exit the loop at
  Lines 17--22 with $q=1$. 
\end{lemma}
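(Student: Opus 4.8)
The plan is to continue the induction that proves Proposition~\ref{prop-IH}: I assume it holds for all rounds $\le i$, together with the parts \ref{IH-1}, \ref{IH-2}, \ref{IH-2bis}, \ref{IH-3}, \ref{IH-3bis}, \ref{IH-4} already established for round $i+1$ by the preceding lemmas, and I prove the statement for round $i+1$. First, two preliminaries. Since the distribution of identifiers is $0$-ended, $p_0$ has bit $0$ in position $\ell_{\min}$, so $\actif_{\ell_{\min}}^0\ni p_0$ is nonempty; hence $\actif_{i+1}^0=\emptyset$ forces $i+1<\ell_{\min}$, and $\actif_{i+2}$ is well-defined with $\actif_{i+2}=\actif_{i+1}=\actif_{i+1}^1$. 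And by Lemma~\ref{lem-first-loop} applied at round $i+1$, every $p\in\actif_{i+1}$ exits the loop at Lines 4--8 at round $i+1$ and, since $b=1$ for each such $p$, then enters the loop at Lines 17--22.

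The core will be the claim that, while $p\in\actif_{i+1}$ executes the loop at Lines 17--22 at round $i+1$, it never receives a counter-clockwise message (so at Line 19 it always reads $q=1$). Granting the claim, $p$ cannot exit the loop with $q=0$ and so exits only after receiving $2(i+1)\myUB$ clockwise messages, i.e.\ with $q=1$, which is exactly the conclusion. To prove the claim I would argue by contradiction: suppose $p$ receives a counter-clockwise message $m$ at Line 19 during round $i+1$. Inactive processes only relay, and every process of $\actif_{i+1}$ has bit $1$, so no process executes Line 10 at round $i+1$; and since $i+1<\ell_{\min}\le\len(\myID)$ for every identifier $\myID$, no process executes Line 13 at any round $\le i+1$. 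Hence $m$ originates, possibly after relaying by a chain of inactive processes, from a Line-10 send by some $c\in\actif_j^0$ at a round $j\le i$. By the induction hypothesis \ref{IH-3bis} at round $j$, $m$ is consumed at Line 11 of round $j$ by $\pred_{j+1}(c)$, the first vertex of $\actif_{j+1}=\actif_j^0$ met when travelling counter-clockwise from $c$, and $m$ is never relayed past $\pred_{j+1}(c)$; so $m$ only ever traverses the arc $\alpha$ swept counter-clockwise from $c$ to $\pred_{j+1}(c)$, and the only vertex that receives $m$ without relaying it is $\pred_{j+1}(c)$, at Line 11 of round $j$. Since $\actif_{i+1}\subseteq\actif_{j+1}$, no vertex of $\actif_{i+1}$ lies strictly inside $\alpha$; hence a vertex $p\in\actif_{i+1}$ that receives $m$ can only be $\pred_{j+1}(c)$, and then it receives $m$ at Line 11 of round $j$, not at Line 19 of round $i+1$. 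Either way this is a contradiction, establishing the claim.

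Next I would rule out that some $p\in\actif_{i+1}$ blocks forever at Line 19. Suppose some do, and among them pick $p$ having received the fewest clockwise messages, say $k<2(i+1)\myUB$; by the claim $p$ is blocked waiting for a clockwise message, hence CW-unbalanced. Put $p'=\pred_{i+1}(p)$. If $p'$ also blocks in Lines 17--22 at round $i+1$, then by minimality of $p$ it has sent at least $k+1$ clockwise messages; otherwise $p'$ has exited that loop (with $q=1$, by the claim), hence received $2(i+1)\myUB>k$ clockwise messages, hence again sent at least $k+1$. By \ref{IH-2} at round $i+1$, every vertex strictly between $p'$ and $p$ is inactive and CW-balanced, so it has already killed its single clockwise message and relays every subsequent one, and (by \ref{IH-6}) has not reached Line 26; and by the claim no counter-clockwise message traverses that arc, so nothing there blocks clockwise messages. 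Observation~\ref{obs:diff-recv} then forces at least one clockwise message in transit between $p'$ and $p$, which is eventually delivered to $p$ --- a contradiction. I conclude that every $p\in\actif_{i+1}^1$ exits the loop at Lines 17--22 at round $i+1$ with $q=1$, which completes the induction.

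The step I expect to be the main obstacle is the claim. It is precisely where the known upper bound $\myUB$ on the ring size --- through $\myUB$-scatteredness and the synchronization counters of Lines 4--8 --- does its work, and making it airtight relies on the ``confinement'' picture for counter-clockwise messages supplied by \ref{IH-3bis}: a counter-clockwise message born in round $j\le i$ never leaves the arc between its creator $c$ and $\pred_{j+1}(c)$, and the monotonicity $\actif_{i+1}\subseteq\actif_{j+1}$ then keeps every round-$(i+1)$ participant outside that arc. A related point of care --- essentially invariant \ref{IH-6}, and again coming from the same picture, since the arcs attached to a fixed round are pairwise disjoint --- is that the relaying vertices between $p'$ and $p$ keep forwarding clockwise messages throughout round $i+1$ rather than detecting termination prematurely.
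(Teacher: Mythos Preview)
Your proof is correct and follows exactly the route the paper indicates; the paper's own proof is the single line ``the same arguments as the proof of Lemma~\ref{lem-first-loop}'', and you have written out both pieces that line compresses: the minimality argument that no process blocks forever at Line~19 (directly parallel to Lemma~\ref{lem-first-loop}), and your claim that no counter-clockwise message can arrive there, via the confinement picture supplied by~\ref{IH-3bis}. Your forward reference to~\ref{IH-6} to keep the inactive relaying processes alive is logically sound, since the proof of~\ref{IH-6} (Lemma~\ref{lemma:leaderfirst}) uses only~\ref{IH-2},~\ref{IH-3bis} and~\ref{IH-4}, not~\ref{IH-4bis}; the paper's proof of Lemma~\ref{lem-first-loop} leaves the same point implicit.

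One small omission in your claim: when you conclude that $m$ originates at a Line-10 send in some round $j \le i$, you have ruled out $j = i+1$ (from $A_{i+1}^0=\emptyset$) and Line~13 at rounds $\le i+1$, but not $j > i+1$. Asynchrony does allow some $c\in A_{i+1}$ to have already completed round $i+1$ and to send a counter-clockwise message at Line~10 of a later round, and the induction hypothesis tells you nothing about such rounds. The gap is easily closed: take $p$ to be the \emph{first} in time to receive counter-clockwise at Line~19 of round $i+1$, so any such $c$ left round $i+1$ with $q=1$; then Observation~\ref{obs:diff-recv} applied on the clockwise arc from $\pred_{i+1}(c)$ to $c$ (which is CW-balanced by~\ref{IH-2} at round $i+1$, except possibly at $\pred_{i+1}(c)$) gives $k_{\pred_{i+1}(c)}\ge (2j-1)\myUB-1\ge 2(i+1)\myUB$ at the moment $c$ sends $m$, so $\pred_{i+1}(c)$ is already past Line~19 of round $i+1$; iterating along $A_{i+1}$ toward $p$ yields the contradiction. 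The paper's one-line proof does not address this case either, so your argument is in any event more complete than the original.
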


\begin{proof}
  The proof uses the same arguments as the proof of
  Lemma~\ref{lem-first-loop}.
\end{proof}

\begin{lemma}[\ref{IH-5}]\label{lem-minleader}
  If a process executes Line 13, then this process is the process
  $p_{0}$ with minimal identifier $\myID_{\min}$, its current round is
  $i_{0} = \len(\myID_{\min})$, and it is the only active process
  reaching this round .
\end{lemma}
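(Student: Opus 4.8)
The plan is to show that if some process $p$ executes Line~13, then $p$ must be in $\actif_{\ell_{\min}}$ and this set is the singleton $\{p_0\}$. The key structural fact is that Line~13 is only reached after the Zero Signaling phase at round $i = \len(\ID)$ with bit $b = 0$, and \emph{only if} the condition $i = \Len(\ID)$ holds. So first I would argue, using the induction hypotheses already available for rounds $\le i$, that the process $p$ executing Line~13 is active at its final round $i = \len(\id_p)$ and has bit $0$ there (which is automatic since the distribution is $0$-ended). By (\ref{IH-1}), being active at round $i$ means $p \in \actif_i$. By the strong prefix-free and $0$-ended assumptions, I would then show $\actif_i$ is a singleton: if there were two distinct processes active at round $i$, their encoded identifiers would agree on the first $i$ bits, but since $i = \len(\id_p) = \ell_{\min}$, strong prefix-freeness forces $\id_p \le \pref_{\ell_{\min}}(\id')$ for every competitor $\id'$, and combined with $p$ surviving all elimination rounds this pins down $\id_p = \myID_{\min}$ and $p = p_0$. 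The precise bookkeeping: surviving to round $i$ means $\id_p$ has the minimum $i$-bit prefix among all identifiers, and a process with a longer identifier cannot have a strictly smaller prefix of length $i$ by strong prefix-freeness, so the surviving set collapses to exactly those processes whose identifier equals $\myID_{\min}$; uniqueness of identifiers then gives $\{p_0\}$.

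Next I would confirm that the round at which this happens is exactly $i_0 = \len(\myID_{\min}) = \ell_{\min}$: the main loop runs from $1$ to $\Len(\ID)$, so $p_0$ reaches Line~13 precisely when its loop counter equals $\len(\id_{p_0}) = \ell_{\min}$, and no process with a longer identifier can still be active at that round because it would have been eliminated at some earlier round $j \le \ell_{\min}$ where its $j$-th prefix strictly exceeds that of $\myID_{\min}$ (again invoking strong prefix-freeness to guarantee such a $j$ exists). The remaining clauses of (\ref{IH-5}) — that all other processes have already terminated and that there is no message in transit when $p_0$ executes Line~13 — I would defer to the accounting established by Lemmas~\ref{lemma:forwardbound}, \ref{lemma:forwardbound2}, together with (\ref{IH-2}), (\ref{IH-3bis}), (\ref{IH-4}): every process in $\actif_i^1$ at each round became inactive and, after killing one clockwise message in \relay, CW-balanced; once $\actif_i$ has shrunk to $\{p_0\}$ all other processes are inactive and CW-balanced, so the only clockwise messages possibly still circulating are bounded in number, and the Zero Signaling message sent by $p_0$ on port~1 together with the message it receives back on port~0 is exactly what drains the last in-transit message. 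I would make the "no message in transit" claim rest on the observation that when $p_0$ is about to execute Line~13 it is CW-balanced and every other process is CW-balanced, hence by the remark after Observation~\ref{obs:diff-recv} the number of clockwise messages in transit is $0$; the single counter-clockwise message from Line~10 at round $i_0$ is the one killed per (\ref{IH-3bis}) by $\pred_{i_0}(p_0) = p_0$ itself (since $\actif_{i_0} = \{p_0\}$), so nothing is left.

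I expect the main obstacle to be the bookkeeping that \emph{exactly one} process reaches Line~13 and that it is $p_0$ — in particular, ruling out the degenerate possibility that a process with a strictly longer identifier somehow survives all $\ell_{\min}$ elimination rounds and then, being at a round $i < \len(\id)$, fails the test $i = \Len(\ID)$ but continues; I must check that such a process is in fact eliminated strictly before round $\ell_{\min}$, which is precisely what strong prefix-freeness buys: any identifier $\id' \ge \myID_{\min}$ with $\len(\id') > \ell_{\min}$ satisfies $\pref_{\ell_{\min}}(\id') \ge \myID_{\min}$, so if $\id' \neq \myID_{\min}$ there is some bit position $j \le \ell_{\min}$ where $\id'$ first exceeds $\myID_{\min}$, and at that round $p'$ lies in $\actif_j^1$ with $\actif_j^0 \ni p_0 \neq \emptyset$, so by (\ref{IH-4}) it exits with $q=0$ and becomes inactive. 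The rest is routine given the earlier lemmas.
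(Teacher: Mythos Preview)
Your approach is correct and matches the paper's: both use (\ref{IH-1}) together with the $0$-ended and strongly prefix-free assumptions to show that every process $p_1 \neq p_0$ first differs from $\myID_{\min}$ at some bit position $i_1 < i_0$ and is therefore eliminated before round $i_0$, so $\actif_{i_0} = \{p_0\}$ and only $p_0$ can reach Line~13. Note that the lemma as stated does not include the quiescence claims (all others terminated, no message in transit); the paper defers those to the proof of Proposition~\ref{prop-alg-main}, so your second and third paragraphs go beyond what is needed here.
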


\begin{proof}
  Recall that we have assumed that the distribution of identifiers is
  $0$-ended and strongly prefix-free.  For each process $p_1$ with
  identifier $\myID_1$, let $i_1$ be the first bit where
  $\myID_{\min}$ and $\myID_1$ differ. By our assumption on the
  identifiers, we have that $p_{0} \in \actif_{i_1+1}$ while
  $p_1 \in \actif_{i_1}\setminus \actif_{i_1+1}$. Since we have also
  assumed that all identifiers ended with bit $0$, we know that
  $i_1 < i_0$. Consequently, $p_1$ is not active at round
  $i_0$ by (\ref{IH-1}).

  Consequently, $\actif_{i_0} = \{p_{0}\}$ and thus,
  $p_{0}$ is the only process that executes Line 13.
\end{proof}

\begin{lemma}[\ref{IH-6}]\label{lemma:leaderfirst}
  If a process terminates (i.e., executes Line 15 or 26), then a
  process executed Line 13.
\end{lemma}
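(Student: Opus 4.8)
The plan is to argue by contradiction, looking at the \emph{first} process to enter a final state. Assume, towards a contradiction, that some process executes Line 15 or Line 26 while, up to that moment, no process has executed Line 13. A process executing Line 15 has just executed Lines 13--14, so by Lemma~\ref{lem-minleader} it is $p_0$ and it did execute Line 13 --- a contradiction. Hence the first process to reach a final state, call it $r$, does so at Line 26, say at time $t$; by the minimality of $r$ and the standing assumption, no process has terminated and no process has executed Line 13 or later strictly before $t$, so parts (\ref{IH-1})--(\ref{IH-5}) of Proposition~\ref{prop-IH} apply to the prefix of the execution ending at $t$. It therefore suffices to show that, as long as no process has executed Line 13, the variable \countzero of every inactive process stays at most $1$; this contradicts $r$ returning from \relay, and hence executing Line 26, at time $t$.

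To establish this I would first note, using (\ref{IH-3}) and (\ref{IH-3bis}), that while no process has executed Line 13 every counter-clockwise message in transit was created at Line 10 by a process of some $\actif_i^0$ and will be killed at Line 11 of the same round by the preceding active process; the sends at Line 24 and the relays inside \relay merely move these messages along. Fix two consecutive active processes $a$ (counter-clockwise) and $b$ (clockwise) and consider the arc of inactive processes lying strictly between them. From (\ref{IH-1}) and the definition of the sets $\actif_i$, in every round $i$ either $\actif_i^0 = \emptyset$ --- and then neither $a$ nor $b$ sends or kills a counter-clockwise message in round $i$ --- or $\actif_i^0 \neq \emptyset$ and, since $a$ and $b$ both remain active at round $i+1$, both lie in $\actif_i^0$, so $b$ injects exactly one counter-clockwise message into the arc at Line 10 of round $i$ while $a$ removes exactly one from it at Line 11 of round $i$. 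Hence injections and removals of counter-clockwise messages into and out of the arc are matched round by round.

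The core claim I would then prove is the following invariant, valid while no process has executed Line 13: the arc between $a$ and $b$ holds at most one counter-clockwise message at any instant, and for every inactive process $r$ in that arc, between any two consecutive counter-clockwise messages $r$ forwards inside \relay it also forwards at least one clockwise message, the first message $r$ ever receives inside \relay being a clockwise one. The idea is that $b$ can inject its next counter-clockwise message only after completing the Synchronization phase of the corresponding round, which forces $b$ to receive a fresh block of at least $\myUB \geq 1$ clockwise messages; since adjacent active processes stay within one round of each other by (\ref{IH-2bis}), comparing the \varCount values of $a$ and $b$ (accounting for the clockwise messages each inactive process kills at Line 4 of \relay) shows that $a$ has already removed the previous counter-clockwise message from the arc before $b$ injects the next one, so at most one is ever present; the same block of clockwise messages must have crossed the whole arc --- in particular passed $r$ --- after the previous counter-clockwise message left it, and the first clockwise message $r$ receives inside \relay precedes the first counter-clockwise message it can receive there. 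Since that first clockwise message is killed (leaving \countzero unchanged), every later clockwise message $r$ forwards resets \countzero to $0$, so \countzero is reset between any two of its increments and never reaches $2$ before Line 13 is executed --- the desired contradiction.

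The step I expect to be the main obstacle is making this invariant precise against a fully asynchronous adversary: since clockwise messages may be delayed arbitrarily and a relaying process retrieves a message from whichever port the adversary chooses, the argument cannot be by real time but only by a counting argument on the \varCount variables of the endpoints $a$ and $b$ together with the number of clockwise messages each inactive process in the arc has already killed, and it must accommodate the single ``free'' clockwise kill inside \relay --- which is exactly why it matters that the first message an inactive process forwards there be the killed clockwise message rather than a counter-clockwise one. The degenerate case $\actif_i = \{p_0\}$, in which the ``arc'' is the whole ring minus $p_0$ and the counter-clockwise message that actually triggers the terminations is the one $p_0$ sends at Line 13, has to be checked separately, and is consistent with the statement since that message is created precisely when Line 13 is executed.
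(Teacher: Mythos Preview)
Your overall strategy matches the paper's exactly: argue by contradiction on the first process $r$ to terminate, rule out Line~15 via Lemma~\ref{lem-minleader}, and then show that while no one has executed Line~13 the variable \countzero of an inactive process cannot reach $2$ because a clockwise message must be received between any two consecutive counter-clockwise ones. Your two key claims --- that the first message $r$ receives inside \relay is clockwise, and that between any two counter-clockwise messages $r$ forwards it forwards a clockwise one --- are in fact both true.

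The gap is in the mechanism you propose for proving them. Fixing ``two consecutive active processes $a$ and $b$'' is not well-posed: the set of active processes shrinks from round to round, so the arc of inactive processes containing $r$ has moving endpoints. In particular the two consecutive counter-clockwise messages $m_1, m_2$ that would make \countzero reach $2$ are created at two \emph{different} rounds $i_1 < i_2$ (this is exactly what (\ref{IH-3bis}) forces), and the creator/killer pairs $(p_1'',p_1')$ and $(p_2'',p_2')$ need not coincide; there is no single pair $(a,b)$ that injects both. For the same reason your invariant ``at most one counter-clockwise message in the arc at any instant'' fails once the arc is taken large enough to contain formerly-active processes that sent their own Line~10 messages in earlier rounds.

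The paper sidesteps all of this by never fixing an arc. It works directly with $m_1,m_2$ and their rounds $i_1<i_2$: at time $t_1$ (receipt of $m_1$), since $p_1'$ has not yet killed $m_1$ it is still in round $i_1$, which caps $r$'s clockwise count at $2i_1\myUB$; at time $t_2$ (receipt of $m_2$), since $p_2''$ has already completed the Synchronization of round $i_2$ and by (\ref{IH-2}) every process between $p_2'$ and $p_2''$ is CW-balanced (in particular $r$ has already performed its kill), $r$'s clockwise count is at least $(2i_2-1)\myUB$. The resulting gap of at least $\myUB$ clockwise messages between $t_1$ and $t_2$ is what forces \countzero to be reset, and the CW-balancedness from (\ref{IH-2}) is precisely what handles the killed-message subtlety you were worried about. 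This is the counting argument on \varCount you anticipated needing; the paper just runs it on the specific pair $m_1,m_2$ rather than on a global arc invariant.
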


\begin{proof}
  Consider the first process $p$ that terminates. If $p$ terminates at
  Line 15, it has executed Line 13 before and we are done. Suppose now
  that $p$ terminates at Line 26. This means that $p$ has received two
  consecutive counter-clockwise messages $m_1$ and $m_2$ that arrived
  on port $0$ while being inactive and executing \relay. Let $t_1$ and
  $t_2$ be the moments where $p$ receives respectively $m_1$ and
  $m_2$. Suppose that $m_1$ and $m_2$ were respectively generated by
  some active processes at rounds $i_1,i_2$ such that $i_1 \leq i_2$.

  By (\ref{IH-3bis}), $p$ can receive at most one counter-clockwise
  message that has been generated by an active process at Line 10 at
  round $i_1$.  Suppose that no process has executed Line 13 before
  $p$ receives $m_2$, then necessarily $i_1 < i_2$. Let
  $p''_1 \in \actif_{i_1}^0$ (respectively,
  $p''_2 \in \actif_{i_2}^0$) be the process that creates message
  $m_1$ (respectively $m_2$) by executing Line 10 at round $i_1$
  (respectively, $i_2$). Let $p'_1 \in \actif_{i_1}^0$ (respectively,
  $p'_2 \in \actif_{i_2}^0$) be the process that kills message $m_1$
  (respectively $m_2$) by executing Line 11 at round $i_1$
  (respectively, $i_2$). By (\ref{IH-3bis}),
  $p'_1 = \pred_{i_1}(p_1'')$ and $p$ appears between $p_1'$ and
  $p_1''$. Similarly, $p'_2 = \pred_{i_2}(p_2'')$ and $p$ appears
  between $p_2'$ and $p_2''$. Moreover, either $p'_1 = p'_2$
  (respectively, $p''_1 = p''_2$) or $p'_1$ (respectively, $p''_1$) is
  between $p'_2$ and $p_2''$. 

  Note that when $p$ receives $m_1$, $p'_1$ has not finished executing
  round $i_1$ and $p_1$ has thus sent at most $(2i_1 - 1)\myUB$
  clockwise messages. Consequently, by (\ref{IH-4}), no process from
  $\actif_{i_1}^1$ has received more than $2i_1\myUB -1$ clockwise
  messages at time $t_1$. Consequently, no process between $p_1'$ and
  $p_1''$ has received more than $2i_1\myUB$ clockwise messages at
  time $t_1$. When $p''_2$ sends $m_2$, it has received
  $(2i_2 - 1)\myUB$ clockwise messages. By (\ref{IH-2}) all processes
  between $p'_2$ and $p''_2$ are CW-balanced. Consequently, every
  process between $p'_2$ and $p''_2$ has received at least
  $(2i_2 - 1)\myUB$ clockwise messages at time $t_2$. Since
  $i_2 > i_1$, we have that $(2i_2 - 1)\myUB > 2i_1
  \myUB$. Consequently, $p$ has necessarily received a clockwise
  message between $m_1$ and $m_2$, but this contradicts the definition
  of $m_1$ and $m_2$.
\end{proof}
\color{black}
We can now prove that Algorithm~\ref{algo:orientedknown} is a correct
leader election algorithm. 

\begin{proposition}\label{prop-alg-main}
  Algorithm~\ref{algo:orientedknown} is a quiescently terminating
  non-uniform leader election algorithm for oriented rings with a
  distribution of identifiers that is $0$-ended, strongly prefix-free
  and $\myUB$-scattered. It elects the process $p_0$ with the minimum identifier
  $\myID_{\min}$ and during its execution, each process sends
  $(2\len(\myID_{\min})-1)\myUB = O (\myUB \log (\myID_{\min}))$
  clockwise messages and at most
  $\len(\myID_{\min})+1 = O(\log (\myID_{\min}))$ counter-clockwise
  messages.
\end{proposition}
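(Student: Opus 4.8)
The proposition is essentially a corollary of the invariant package in Proposition~\ref{prop-IH}, so the plan is to assemble the pieces rather than prove anything substantially new. First I would establish \emph{election}: by Lemma~\ref{lem-minleader} (\ref{IH-5}), the only process that can execute Line~13 is $p_0$ with identifier $\myID_{\min}$, and it does so at round $i_0 = \len(\myID_{\min})$ as the unique member of $\actif_{i_0}$; hence $p_0$ returns \texttt{Leader} at Line~15 and no other process ever does. For every other process $p_1$, I would argue it must eventually execute Line~19 with $q=0$ and thus return \texttt{Non-Leader} at Line~26: if $p_1$ has identifier $\myID_1$ and $i_1$ is the first bit where $\myID_1$ and $\myID_{\min}$ differ, then by the description of $\actif$, $p_1 \in \actif_{i_1}^1$ while $p_1 \notin \actif_{i_1+1}$, so $\actif_{i_1}^0 \neq \emptyset$, and by (\ref{IH-4}) $p_1$ exits the loop at Lines~17--22 with $q=0$, becoming inactive. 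So exactly one leader is elected.

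\textbf{Termination and quiescence.} Next I would verify that every process terminates and that no messages remain. By Lemma~\ref{lemma:leaderfirst} (\ref{IH-6}), no inactive process reaches Line~26 before $p_0$ executes Line~13; and conversely, once $p_0$ executes Lines~13--15 it sends one final counter-clockwise message. I would trace this last message together with the counter-clockwise message $p_0$ sent at Line~10 of round $i_0$ (its Zero Signaling message): by (\ref{IH-3bis})-style reasoning these two messages travel around the ring, and each inactive process running \relay{} sees them as two consecutive port-$0$ receipts, making $|\countzero|$ reach $2$ and triggering Line~26. Here I would need to check, using Observation~\ref{obs:diff-recv} and the CW-balanced/unbalanced bookkeeping already set up, that (i) every inactive process has already killed exactly one clockwise message at Line~4 of \relay{} before the leader's final message arrives — so that when the dust settles the number of clockwise messages in transit equals the number of active processes, which is now just one ($p_0$, who terminated CW-balanced after Line~14) hence zero — and (ii) no clockwise message slips between the two counter-clockwise messages at any inactive process (this is exactly the content of the argument in Lemma~\ref{lemma:leaderfirst}, now run in the ``only one round of termination messages'' regime). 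Concluding quiescence then amounts to observing that the last process to terminate is $p_0$ and at that moment there are no CW-unbalanced processes and both CCW termination messages have been consumed.

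\textbf{Message counts.} Finally the complexity bounds. For clockwise messages: an active process at round $i$ sends messages on port~$0$ during the Synchronization loop (Lines 4--8) until $\varCount = (2i-1)\myUB$, and a bit-$1$ process sends more during the No-Zero Checking loop (Lines 17--22) until $\varCount = 2i\myUB$ (by (\ref{IH-4})/(\ref{IH-4bis}) the loop does terminate, with the counter reaching exactly that value). Since $\varCount$ is monotone and never reset, across all $i_0$ rounds the total number of port-$0$ sends by any process that survives to the last round is $(2\len(\myID_{\min})-1)\myUB$; a process that becomes inactive at round $i$ sent strictly fewer while active, plus one more at Line~24, plus relays in \relay{} — but relays only forward received messages and do not increase the count of \emph{distinct} circulating messages, so the ``messages sent clockwise'' per process is $O(\myUB \log\myID_{\min})$ as claimed (one should state this as: each process executes the body of a port-$0$-sending loop at most $(2\len(\myID_{\min})-1)\myUB$ times before possibly becoming a relay, and relaying sends one port-$0$ message per port-$1$ receipt). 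For counter-clockwise: a process sends on port~$1$ at most once per round at Line~10 (only if its bit is $0$), at most once at Line~14, at most once at Line~19, plus the relays in \relay{}; the non-relay contributions total at most $\len(\myID_{\min})+1$, and again relays do not add new messages, giving $O(\log\myID_{\min})$. \textbf{The main obstacle} I anticipate is the quiescence argument — precisely pinning down that at the instant $p_0$ terminates there are no clockwise messages in transit, which requires carefully combining the CW-unbalanced counting (number in transit $=$ number of CW-unbalanced processes) with the fact that each inactive process kills exactly one clockwise message and $p_0$ ends CW-balanced; everything else is bookkeeping over the already-proven invariants.
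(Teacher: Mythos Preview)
Your plan is structurally right and tracks the paper's proof, but you are working harder than necessary in two places, and the ``main obstacle'' you flag actually dissolves once you apply an invariant you already have.

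\textbf{Quiescence.} You anticipate having to argue carefully that every inactive process has killed its extra clockwise message before $p_0$'s termination messages arrive. But this is exactly what invariant~(\ref{IH-2}) hands you, once you notice that at round $i_0 = \len(\myID_{\min})$ the sole active process is $p_0$, so $\pred_{i_0}(p_0) = p_0$. Invariant~(\ref{IH-2}) then says that when $p_0$ exits the Synchronization loop at round $i_0$, \emph{every other process on the ring} is already inactive and CW-balanced. Combined with~(\ref{IH-3bis}) (no CCW messages in transit), this means the network is clean at the moment $p_0$ sends its Line~10 message $m$; hence $m$ and then the Line~13 message $m'$ each make one full lap with nothing interleaved, every inactive process sees them consecutively and returns \nonleader, and $p_0$ receives $m'$ back and returns \leader with nothing left in transit. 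No separate ``nothing slips between'' argument is needed.

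\textbf{Message counts.} Your per-process accounting gets an $O(\cdot)$ bound, but the proposition states the \emph{exact} clockwise count $(2\len(\myID_{\min})-1)\myUB$ for every process. The paper obtains this more cleanly by a global argument: $p_0$ has sent and received exactly $(2i_0-1)\myUB$ clockwise messages (its last bit is $0$, so it never runs the No-Zero Checking loop at round $i_0$); at termination all processes are CW-balanced and no messages are in transit, so Observation~\ref{obs:diff-recv} forces every process to have received --- and hence sent --- that same number. The CCW count follows similarly from~(\ref{IH-3bis}). This avoids tracking relays altogether. (Minor: your line references are off --- the CCW sends are at Lines~10, 13, and 24, not 14 and 19.)
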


\begin{proof}
  By (\ref{IH-5}) and (\ref{IH-6}), as long as $p_0$ has not executed
  Line 13, no process terminates. Moreover, by (\ref{IH-1}), $p_0$ is
  the only process executing round $i_{0} = \len(\myID_{\min})$ and
  by~(\ref{IH-3}), it eventually executes Lines 10 and 11 at round
  $i_0$. By~(\ref{IH-2}), since $\pred_{i_0}(p_0) = p_0$, when $p_0$ executes
  Line 10, all other processes are inactive and CW-balanced. By
  (\ref{IH-3bis}), at this point, there are no counter-clockwise
  messages in transit. Consequently, when the counter-clockwise
  message $m$ sent by $p_0$ at line 10 at round $i_0$ is received by
  $p_0$ at Line 11, all other processes have relayed $m$. Then $p_0$
  send another counter-clockwise message $m'$ at Line 13 that is also
  relayed by the other processes. When a process $p$ receives the
  message $m'$, the last two messages it received were $m$ and $m'$
  and thus, at this point, $p$ forwards $m'$ and returns
  \nonleader. When $m'$ is received by $p_0$ at Line 14, $p_0$ returns
  $\leader$ and at this point, there are no more messages in transit
  and all other processes have terminated the algorithm, returning
  \nonleader. Consequently, our algorithm is a leader election
  algorithm with quiescent termination.

  During the execution, there are $i_0$ rounds. At the end of the
  execution, $p_0$ has sent and received exactly $(2 i_0 -1)U$
  clockwise messages (recall that the last bit of $\myID_{\min}$ is
  $0$). Since there are no more messages in transit and since all
  processes are CW-balanced at the end of the algorithm, by
  Observation~\ref{obs:diff-recv}, all processes have sent and
  received exactly $(2 i_0 -1)U$ clockwise messages. The number of
  counter-clockwise messages sent by $p_0$ is precisely $1$ plus the
  number of bits in $\myID_{\min}$ that are equal to $0$ (i.e., the
  rounds for which $\bit(\myID_{\min},i)=0$). By~(\ref{IH-3bis}), all
  processes have sent and received this same number of
  counter-clockwise messages.
\end{proof}

The proof of the main theorem follows from the
previous proposition since we can ensure that each distribution of
identifiers is $0$-ended, strongly prefix-free and $U$-scattered.

\algtwo*

\section{Randomized Leader Election} \label{sect: rand}

In this section, we prove our randomized main result.
\mainthmrand*

\begin{algorithm}
\DontPrintSemicolon
\caption{RandomizedLeaderElection($U$)}
\label{alg: rand oriented ring}
$\rand_v \sim \mathsf{UniformSampling}(0,2^{\lceil c_1\log U\rceil}-1)$ \;
$\id_v \leftarrow 2*(2^{\lceil c_1\log U\rceil} + \rand_v)$ \;
$\mathsf{LogarithmicOrientedLeaderElection}(\id_v,{\lceil c_2 \log U\rceil})$\;
\end{algorithm}

In this section, we use $U(a,b)$ (or $\mathsf{UniformSampling}(a,b)$ in the algorithm description) to denote the uniform distribution over the set of integers $\mathbb{Z} \cap [a,b]$. 

In our algorithm, each process first draws a number uniformly at
random within the interval $[0,2^{\lceil c_1\log U\rceil}-1]$. Observe
that all these numbers are encoded in at most
$\lceil c_1\log U\rceil - 1$ bits.  Then, each process adds
$2^{\lceil c_1\log U\rceil}$ to its number to obtain a number that
is encoded on precisely $\lceil c_1\log U\rceil$ bits. Then, they
multiply this number by $2$ to obtain a number of fixed length that
ends with a $0$. Each node then takes this new number as its
identifier.  This ensures that the distribution of identifiers is
strongly prefix-free and $0$-ended. Then, the processes run
Algorithm~\ref{algo:orientedknown} with
$\myUB = \lceil c_2 \log U \rceil$.

We first argue that since each node samples $\rand_v$ from the
discrete uniform distribution $U(0,2^{\lceil c_1\log U\rceil}-1)$,
with high probability, no two processes share the same identifier.


\begin{lemma} [Unique identifiers w.h.p.]
\label{lem: unique max whp}
Assume there exist $n$ nodes. 
With probability at least $1-U^{2-c_1}$, all nodes have
different identifiers .
\end{lemma}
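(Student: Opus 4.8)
The plan is to apply a union bound over all pairs of nodes, using the fact that each node's identifier is determined by its sampled value $\rand_v \in \{0, 1, \ldots, 2^{\lceil c_1 \log U \rceil} - 1\}$ via the injective map $r \mapsto 2(2^{\lceil c_1 \log U\rceil} + r)$. Hence two nodes have the same identifier if and only if they sample the same $\rand_v$.

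First I would observe that for any fixed pair of distinct nodes $u, v$, since $\rand_u$ and $\rand_v$ are independent and each uniform over a set of size $M := 2^{\lceil c_1 \log U \rceil}$, the probability that $\rand_u = \rand_v$ is exactly $1/M$. Since $\lceil c_1 \log U \rceil \geq c_1 \log U$, we have $M \geq U^{c_1}$, so this collision probability is at most $U^{-c_1}$.

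Next I would take a union bound over all $\binom{n}{2} < n^2/2 \leq U^2/2$ pairs of nodes (using $n \leq U$). This gives a total failure probability of at most $\frac{U^2}{2} \cdot U^{-c_1} \leq U^{2 - c_1}$, which is exactly the claimed bound. So with probability at least $1 - U^{2-c_1}$, all identifiers are distinct. (If one wants $c_1 > 2$ this is a genuine high-probability guarantee; the statement itself is vacuous unless $c_1 > 2$, but I would simply record the bound as stated.)

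There is essentially no obstacle here — the argument is a routine birthday-bound computation. The only mild point of care is bookkeeping with the ceiling in the exponent: one must use $2^{\lceil c_1 \log U \rceil} \geq 2^{c_1 \log U} = U^{c_1}$ rather than an equality, and confirm that the map from $\rand_v$ to $\id_v$ is injective so that an identifier collision is equivalent to a $\rand_v$ collision. I would state these two facts explicitly and then the bound follows immediately.
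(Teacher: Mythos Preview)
Your proposal is correct and matches the paper's own proof essentially line for line: both observe that an identifier collision is equivalent to a $\rand$ collision, bound the single-pair collision probability by $2^{-\lceil c_1\log U\rceil} \leq U^{-c_1}$, and apply a union bound over at most $U^2$ (ordered or unordered) pairs. There is nothing to add.
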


\begin{proof}
  Note that two processes $u,v$ have the same identifier if and only
  if $\rand_u = \rand_v$.
    \begin{align*}
                      \Pr[\text{No $u,v$ hold identical $\id$}]
                    &\geq 1 - \sum_{u\neq v} \Pr[\rand_u = \rand_v] \\
                    &\geq 1- \frac{U^2}{2^{\lceil c_1\log U\rceil}} \\
                    &\geq 1-U^{2-c_1}.\qedhere
    \end{align*}
\end{proof}
\color{black}

Now we proceed to show that our choice of
$\myUB = \lceil c_2\log U \rceil$ is justified, i.e., we need to show
that with high probability, the distribution of identifiers is
$\myUB$-scattered.  To that end, we need to show that with high
probability, at each round $i$ in the execution, among the active
processes $A_i$ at round $i$, either $A_i = A_i^1$ (i.e., all
processes in $A_i$ have a bit $1$ at position $i$), or there are at
most $\myUB-1$ consecutive nodes in $A_i^1$. Note that all
identifiers have the same length $\ell = \lceil c_1\log U\rceil+1$,
they all start with $1$ and end with $0$. Thus, we only need to
consider the bits that are different from the first and the last bits.



\begin{lemma}[$\lceil c_2\log U\rceil$-scatteredness
  w.h.p.]\label{lem: run freeness whp}
  For each bit $2 \leq i \leq \ell -1 = \lceil c_1\log U\rceil$, with
  probability at least $1 - U^{1-c_2}$, among $A_i$, there exists at
  most $\myUB -1$ consecutive nodes in $A_i^1$.
\end{lemma}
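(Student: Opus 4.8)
The plan is to analyze the structure of the active set $A_i$ at each round and bound the probability that there is a long run of consecutive nodes all having bit $1$ at position $i$. First I would observe that, by the characterization of $A_i$ given before Proposition~\ref{prop-IH}, a process $p$ belongs to $A_i$ if and only if its identifier $\myID$ satisfies $\pref_i(\myID) = \min_{\myID' \in S}\pref_i(\myID')$; equivalently, $A_i$ is the set of processes whose identifier agrees, on the first $i$ bits, with the minimum $i$-bit prefix among all processes. Fix a round $i$ with $2 \le i \le \ell-1$ and condition on the value $v^* := \pref_{i-1}(\myID_{\min})$ of this common $(i-1)$-bit prefix restricted to $A_{i-1}$ (note $A_i \subseteq A_{i-1}$, and a node of $A_{i-1}$ lies in $A_i$ precisely when its $i$-th bit equals the minimum $i$-th bit among $A_{i-1}$). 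The key point is that, conditioned on the event that a fixed set of nodes forms $A_{i-1}$ with common prefix $v^*$, the $i$-th bits of the nodes in $A_{i-1}$ are still independent and uniform in $\{0,1\}$ (because the random draws are independent and the encoding $\rand_v \mapsto 2(2^{\lceil c_1 \log U\rceil} + \rand_v)$ makes bit $i$ of the identifier, for $2 \le i \le \ell-1$, a fixed bit of the uniformly random string $\rand_v$, untouched by conditioning on earlier bits).

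Next I would argue that a run of $\myUB$ consecutive nodes in $A_i^1$ forces $\myUB$ consecutive nodes of $A_{i-1}$ to all have $i$-th bit equal to $1$ — indeed, the nodes strictly between two consecutive elements of $A_{i-1}$ are not in $A_i$ at all, and among the elements of $A_{i-1}$, those in $A_i^1$ are exactly the ones whose $i$-th bit is $1$ (provided $A_i = A_{i-1}$, i.e. no node of $A_{i-1}$ has bit $0$; if some node has bit $0$ then $A_i = A_{i-1}^0$ and there are no nodes of $A_i$ with bit $1$, so the claim is vacuous). Hence it suffices to bound $\Pr[\text{some } \myUB \text{ consecutive nodes of } A_{i-1} \text{ all have } i\text{-th bit } 1]$. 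Conditioning on the identity of $A_{i-1}$ (say $|A_{i-1}| = m \le n \le U$), by a union bound over the at most $m \le U$ possible starting positions of the run along the cyclic order of $A_{i-1}$ and independence/uniformity of the $i$-th bits, this probability is at most $m \cdot 2^{-\myUB} \le U \cdot 2^{-\lceil c_2 \log U\rceil} \le U \cdot U^{-c_2} = U^{1-c_2}$. Since this bound holds for every realization of $A_{i-1}$ and $v^*$, it holds unconditionally.

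The main obstacle — and the step that needs to be handled carefully rather than waved away — is justifying the conditional independence and uniformity of the $i$-th bits of $A_{i-1}$ given the event ``$A_{i-1}$ is exactly this set.'' The subtlety is that the event defining $A_{i-1}$ depends on comparisons among the first $i-1$ bits of \emph{all} $n$ identifiers, so one must check that it is measurable with respect to the first $i-1$ bits only, and that the remaining bits stay independent uniform. This follows because the encoded identifier has the form $1^{?}\dots$ — more precisely each identifier is $2(2^{\lceil c_1 \log U\rceil} + \rand_v)$, so its bit in position $i$ (for $2 \le i \le \ell - 1$) is a specific coordinate of the i.i.d.\ uniform bit-string of $\rand_v$, and the event ``$A_{i-1}$ equals a given set with prefix $v^*$'' depends only on coordinates strictly before position $i$; since distinct coordinates of $\rand_v$ are independent and the $\rand_v$ across nodes are independent, conditioning does not disturb the distribution of position-$i$ bits. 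I would also note the minor bookkeeping point that runs are taken cyclically on the ring but only nodes of $A_{i-1}$ count, which is exactly the setup of $\myUB$-scatteredness, so no mismatch arises. Summing the per-round failure probability $U^{1-c_2}$ over the $\ell - 2 = O(\log U)$ relevant rounds (done later, outside this lemma) gives the desired whp guarantee after choosing $c_2$ large enough relative to $c_1$ and $c$.
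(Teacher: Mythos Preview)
Your proof is correct and follows the same approach as the paper: condition on the active set (which is determined by bits $1,\ldots,i-1$), observe that the $i$-th bits of its members remain i.i.d.\ uniform, and union-bound over the at most $U$ possible starting positions of a length-$\myUB$ run of ones to get failure probability at most $U\cdot 2^{-\lceil c_2\log U\rceil}\le U^{1-c_2}$. Your more detailed justification of the conditional-independence step and the detour through $A_{i-1}$ (which arises from an indexing ambiguity in the paper's formal definition of $A_i$ via $\pref_i$; the paper's own proof simply treats $A_i$ as depending only on the first $i-1$ bits) add rigor but do not change the argument.
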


\begin{proof}
  Since for each process $v$, $\rand_v$ is a number drawn uniformly at
  random between $0$ and $2^{\lceil c_1\log U\rceil}-1$, it can be
  seen as equivalent to sampling $\lceil c_1\log U\rceil$ bits
  independently from $U(0,1)$. 
%
%
  By the independence of choices, revealing, or conditioning on the
  first $i-1$ bits does not change the distribution of the $i$-th
  bit. This implies that the $i$-th bits of all processes in $A_i$ are
  sampled independently from $U(0,1)$.

  In the below analysis, we assume that $A_i$ has at least
  $\lceil c_2\log U\rceil$ nodes.  Otherwise the lemma is trivially
  true for $A_i$.  Now we define an event $R_i(v)$ for every
  $v \in A_i$ as ``In $A_i$, starting from $v$, there are
  $\lceil c_2\log U\rceil$ consecutive nodes from $A_i^1$ when moving
  in the clockwise direction''. Immediately,
  $\Pr[R_i(v)] = 2^{-\lceil c_2\log U\rceil}$, given that each label
  is independently sampled from $U(0,1)$. On the other hand, the lemma
  holds for $A_i$ if and only if $R_i(v)$ does not happen for any
  $v$. We hence make use of the union bound.
    \begin{align*}
        \Pr[\text{$A_i$ contains at most $\myUB -1$ consecutive nodes}] 
                &\geq 1-\sum_{v \in A_i} \Pr[R_i(v)] \\
                &\geq 1-\frac{U}{2^{\lceil c_2\log U\rceil}} \\
                &\geq 1-U^{1-c_2}\qedhere
    \end{align*}
\end{proof}

We are now ready to prove the main theorem.

\mainthmrand*

\begin{proof}
  By Proposition~\ref{prop-alg-main}, the randomized algorithm is
  successful provided each node has a unique identifier, and the
  distribution of identifiers is $0$-ended, strongly prefix-free and
  $\lceil c_2\log U\rceil$-scattered. Note that the distribution of
  identifiers is always $0$-ended and strongly prefix-free. The other
  two requirements only hold w.h.p.: we want to bound the small
  failure probability that (1) the identifiers are not unique and (2)
  at some round $2 \leq i \leq \lceil c_1\log U\rceil$, in $A_i$ there
  are $\lceil c_2\log U\rceil$ consecutive processes in $A_i^1$.
   

  By \Cref{lem: unique max whp}, with probability at most $U^{2-c_1}$,
  there exists two processes $u,v$ that have the same identifier. For
  each $2 \leq i \leq \lceil c_1\log U\rceil$, by \Cref{lem: run
    freeness whp}, with probability at most $U^{1-c_2}$, $A_i$
  contains $\lceil c_2\log U\rceil$ consecutive processes in $A_i^1$.
  Over $\lceil c_1 \log U\rceil -1$ such rounds $i$, the probability
  that some single bit ring does not meet the requirement is capped at
  $\lceil c_1 \log U\rceil \cdot U^{1-c_2} = O(U^{2-c_2})$. By union
  bound, the total failure probability is capped by
  $O(U^{2-c_2}) + U^{2-c_1} = O(U^{2-\min(c_1,c_2)})$. For a success
  probability of $1-U^{-c}$, we only need to choose $c_1 \geq c+2$ and
  $c_2 \geq c+2$.

  By Proposition~\ref{prop-alg-main}, in the execution of the
  algorithm, each process sends
  $O(\myUB \log \myID_{\min}) = O( c_2 \log U \cdot c_1 \log U)$
  messages. Consequently, the total message complexity is
  $O(n \cdot \log^2 U)$.
\end{proof}

\subparagraph{Discussion.}  An interesting question is whether the
message complexity can be further improved by employing a different
$\id$ generator than the uniform distribution, with the goal of
producing shorter $\id$s. If the proof of
Proposition~\ref{prop-alg-main}, one can observe that we do not need
that the identifiers are all different, but only that there is a
unique process with the minimum one. 
In this sense, using a uniform distribution may be unnecessarily
strong, and one can then hope to produce shorter identifiers with a unique minimum in order to decrease  the message complexity.

\section{Constant Messages in One Direction: Uniform vs Non-Uniform}\label{sec:knowingn}

In this section, we investigate the possibility of designing leader election algorithms that send only a constant number of messages in a given direction. We first show that no uniform algorithm can achieve this. Then, we demonstrate that, when the algorithm is non-uniform, it is possible to elect a leader by sending just three messages counter-clockwise.

\subsection{Uniform Algorithms: No Constant Messages in a Single Direction}\label{sec:imp}
 We show that no uniform algorithm can send a constant number of messages in a single direction; without loss of generality, the counter-clockwise direction.

\boundth*

In the following, we consider a fixed uniform leader election algorithm $\cA$ for oriented rings. Without loss of generality, $\cA$ is an event-driven algorithm: initially, the algorithm starts with an initialization event and subsequently reacts to delivery events. When it reacts to an event, it is able to send messages in any direction.
Notice that such formalism to represent the algorithm is different from the one that we used in the algorithms that we propose in our paper. However, this difference is only a presentation difference and does not restrict the validity of our result. 

We assume that ${\cA}$ violates the hypothesis of our theorem, that
is, each process executing ${\cA}$ sends fewer than $b$
counter-clockwise messages on any ring, regardless of its identity,
the identities of the other processes, or the execution.

We now show, by contradiction, that ${\cA}$ cannot exist. As
in~\cite{frei2024content}, we define a solitude pattern for each identifier by
considering an execution of the algorithm by a single process with
this identifier. Our definition of the solitude pattern associated to
each identifier is however slightly different from the one of~\cite{frei2024content}.

\begin{definition}
Consider an execution of ${\cA}$ on an oriented ring with a single process ($n = 1$) with identifier $\myID$, where messages are delivered one by one prioritizing the delivery of clockwise messages (i.e., the process receives a counter-clockwise message only when all sent clockwise messages have been received). We call this execution the solitary execution of the process with identifier $\myID$. 

Let $t(\myID)$ be the number of counter-clockwise messages received
before the termination of this execution of ${\cA}$. The solitude
pattern of $\myID$ is the sequence
$\sol(\myID)=(cw_0(\myID),cw_1(\myID), \ldots,$
$ cw_{t(\myID)}(\myID))$ where $cw_{i}(\myID)$ (with $i < t(\myID)$)
is the number of clockwise messages received by the process before
receiving the $(i+1)$-th counter-clockwise message, and
$cw_{t(\myID)}(\myID)$ is the number of clockwise messages received
before termination. When $\myID$ is clear from context, we  also
write $t$ for $t(\myID)$ and $cw_i$ for $cw_i(\myID)$.
\end{definition}

We remark that $cw_i(\myID)$ is the total number of clockwise messages
received by the process from the beginning up to the receipt of the
$(i+1)$-th counter-clockwise message. This implies that each solitude
pattern is a non-decreasing sequence.

Since in a solitary execution a process receives only messages sent by itself, it must have sent at least $cw_{i}$ clockwise messages before receiving the $(i+1)$-th counter-clockwise message. Moreover, by the end of the execution, it has sent at least $t$ counter-clockwise messages and $cw_{t}$ clockwise messages.

It is also easy to observe that, at the end of a solitary execution, the process should elect itself as the leader, as if it is on a ring of size $1$.

Note that the solitude pattern $\sol(\myID)$ of $\cA$ depends only on
$\myID$ and $\cA$. Moreover, it is well-defined (that is, $t(\myID)$
and all $cw_i(\myID)$ are finite) since $\cA$ must terminate on a ring
of size $1$, where there is a single process with identifier
$\myID$. By our assumption, $t \leq b$, which implies that there
exists an infinite number of identifiers that have solitude patterns
of the same length. In the following, we assume that this length is
$k+1$, i.e., in the solitary execution of each of these identifier,
the unique process sends $k$ counter-clockwise messages .

As observed by \cite[Lemma 19]{frei2024content}, two distinct identifiers have
different solitude patterns, since otherwise the algorithm fails to elect a leader in a ring with two processes.


\begin{lemma}\label{lemma:particular-infinite-set}
  There exists an index $0 \leq \ell \leq k$ and an infinite
  sequence $(\myID_j)_{j \in \N}$ of identifiers such that (1)
  $cw_{\ell}(\myID_{j+1}) > cw_{k}(\myID_{j})$, for every $j \in \N$,
  and (2) for each $0 \leq i < \ell$,
    $cw_i(\myID_{j}) = cw_i(\myID_{j'})$ for all $j,j' \in \N$.
\end{lemma}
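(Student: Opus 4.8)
The plan is to extract the desired infinite sequence from the infinite pool of identifiers whose solitude patterns all have length $k+1$ by a pigeonhole/Ramsey-type stabilization argument on a per-coordinate basis. Start with an infinite set $S_0$ of identifiers all having solitude patterns of length exactly $k+1$ (this exists by the discussion preceding the lemma). I process the coordinates $i = 0, 1, 2, \dots$ in order and maintain an infinite subset of identifiers on which an initial segment of the solitude pattern has ``stabilized''.

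First I would handle coordinate $0$. Among the identifiers in $S_0$, either infinitely many share a common value of $cw_0$, or the values $cw_0(\myID)$ over $\myID \in S_0$ are unbounded. In the latter case I can choose $\ell = 0$ and extract a subsequence $(\myID_j)$ with $cw_0(\myID_{j+1}) > cw_k(\myID_j)$ directly: this is possible because each solitude pattern is finite and non-decreasing, so $cw_k(\myID_j)$ is a finite number, and since the $cw_0$ values are unbounded I can greedily pick $\myID_{j+1}$ with $cw_0(\myID_{j+1})$ larger than the (finite) value $cw_k(\myID_j)$; condition (2) is vacuous for $\ell = 0$. In the former case, I pass to an infinite subset $S_1 \subseteq S_0$ on which $cw_0$ is a fixed constant, and I move on to coordinate $1$, repeating the same dichotomy with $S_1$ in place of $S_0$: if $cw_1$ is unbounded on $S_1$, take $\ell = 1$ and extract the subsequence exactly as before (now condition (2) holds for $i = 0$ because $cw_0$ is constant on $S_1$); otherwise pass to an infinite $S_2 \subseteq S_1$ on which both $cw_0$ and $cw_1$ are constant.

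Continuing this way, the key observation is that the ``unbounded'' case must occur at some coordinate $\ell \le k$. Indeed, if the ``bounded/constant'' case occurred at every coordinate $0, 1, \dots, k$, then I would obtain an infinite set $S_{k+1}$ on which all of $cw_0, cw_1, \dots, cw_k$ are simultaneously constant; but then every identifier in $S_{k+1}$ would have the \emph{same} solitude pattern (both the length $k+1$ and every entry agree), contradicting the fact — noted from \cite[Lemma 19]{frei2024content} — that distinct identifiers have distinct solitude patterns. Hence there is a smallest index $\ell$ with $0 \le \ell \le k$ where the values $cw_\ell$ are unbounded over the current infinite set $S_\ell$, and on $S_\ell$ all of $cw_0, \dots, cw_{\ell-1}$ are constant. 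Then the greedy extraction described above produces the required sequence $(\myID_j)_{j \in \N}$ satisfying both (1) and (2).

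The main obstacle — really the only subtlety — is making sure the greedy extraction in the ``unbounded'' case is valid: I need that for each already-chosen $\myID_j$ the quantity $cw_k(\myID_j)$ is a finite number (so that ``pick something with $cw_\ell$ exceeding it'' makes sense), which is guaranteed because the solitary execution terminates, and I need that infinitely many identifiers remain available with arbitrarily large $cw_\ell$, which is exactly the hypothesis of the unbounded case. Everything else is a routine iterated pigeonhole over finitely many ($k+1$) coordinates, and the termination of the iteration is forced by the distinctness of solitude patterns.
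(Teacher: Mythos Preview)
Your proposal is correct and follows essentially the same approach as the paper: both arguments process the coordinates one at a time, at each step either fixing the current coordinate via pigeonhole or, when that coordinate is unbounded on the current infinite set, extracting the desired sequence greedily. The paper packages this as an abstract inductive statement about infinite sets of distinct $(k+1)$-tuples (so that distinctness of solitude patterns is used once, up front, to ensure the tuples are distinct), whereas you iterate directly on identifiers and invoke distinctness only at the end to rule out stabilizing all $k+1$ coordinates; these are the same argument in slightly different clothing.
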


\begin{proof}
  We prove by induction on $k$ that in any infinite set $U$ of
  distinct $(k+1)$-tuples $a = (a_0, a_1, \ldots, a_k)$, there exists an
  index $0 \leq \ell \leq k$, integers
  $c_0 \leq c_1 \leq \ldots \leq c_{\ell-1}$, and an infinite sequence
  $(a_j)_{j \in \N}$ with $a_j = (a_{j,0}, a_{j,1}, \ldots, a_{j,k})$
  such that for every $j \in \N$, (1) $a_{j+1,\ell} > a_{j,k}$, and
  (2) $a_{j,i} = c_i$ if $i < \ell$.
  
  Let $U_0 = \{a_0 \mid \exists (a_0, a_1, \ldots, a_k) \in U\}$ be
  the projection of $U$ on the first coordinate. We distinguish two
  cases depending on whether $U_0$ is finite or not. Note that if
  $k = 0$, $U_0 = U$ is necessarily infinite.

  Suppose first that $U_0$ is infinite, i.e., for any value
  $c \in \N$, there exists $(a_0, a_1, \ldots, a_k) \in U$ such that
  $a_0 > c$. In this case, we construct the sequence
  $(a_j)_{j \in \N}$ iteratively as follows. Start with $S_0 = U$ and
  for each $j \in \N$ , let $a_j = (a_{j,0}, \ldots, a_{j,k})$ be an
  arbitrary element from $S_j$ and let
  $S_{j+1} = \{a' = (a_0',\ldots, a_k') \in S_j \mid a_0' >
  a_{j,k}\}$. Since $U_0$ is infinite, we know that each $S_{j+1}$ is
  non-empty (and infinite).  Consequently, $(a_j)_{j \in \N}$ is
  well-defined and satisfies the conditions with $\ell = 0$.

  Suppose now that $U_0$ is finite. Then, by the pigeonhole principle,
  there exists an infinite subset $U'$ of $U$ and a value $c_0$ such
  that for every $(a_0, a_1, \ldots, a_k) \in U'$, $a_0 = c_0$.  Now
  consider the set
  $U'' = \{a''=(a_1, \ldots, a_k) \mid \exists a=(a_0=c_0, a_1, \ldots,
  a_k) \in U'\}$. Since $U'$ is infinite and since all elements of
  $U'$ have the same first coordinate, necessarily $U''$ is an
  infinite set of distinct $k$-tuples. By the inductive
  hypothesis, there exists $1 \leq \ell \leq k$, integers
  $c_1\leq \ldots \leq c_{\ell-1}$, and an infinite sequence
  $(a''_j)_{j \in \N}$ with
  $a''_j = (a''_{j,1}, a''_{j,2}, \ldots, a''_{j,k})$ such that for
  every $j \in \N$:, (1) $a'' _{j+1,\ell} > a''_{j,k}$, and (2)
  $a''_{j,i} = c_i$ if $1 \leq i < \ell$.
  Consequently, the sequence $(a_j)_{j \in \N}$ where
  $a_j = (c_0, a''_{j,1}, a''_{j,2}, \ldots, a''_{j,k})$ for each
  $j \in \N$ is a sequence of elements of $U' \subseteq U$ satisfying
  the condition.
\end{proof}
\color{black}



We are now ready to prove our theorem
\boundth*

\begin{proof}
  Suppose there exists an algorithm $\cA$ for oriented rings such that
  each process executing $\cA$ sends fewer than $b$ counter-clockwise
  messages, regardless of the ring size, its identity, or the
  execution. We will construct a ring and an execution in which all
  processes execute $\cA$, but two of them will be unable to
  distinguish this execution from their respective solitary
  executions. As a result, both will enter the leader state and
  terminate, violating correctness.

  \begin{figure}
    \centering
    \includegraphics[scale=0.8]{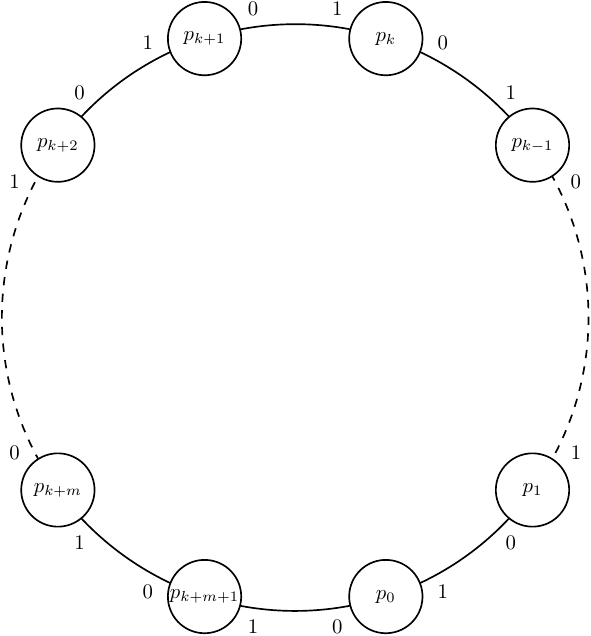}
    \caption{Arrangement of processes in our impossibility proof}
    \label{fig:enter-label}
  \end{figure}

  By Lemma~\ref{lemma:particular-infinite-set}, there exists an
  integer $0 \leq \ell \leq k$, a sequence of integers
  $cw_0 \leq cw_1 \leq \ldots \leq cw_{\ell-1}$, and an infinite
  sequence of identifiers $(\myID_j)_{j \in \N}$ such that for each
  $j \in \N$, (1) $cw_{\ell}(\myID_{j+1}) > cw_{k}(\myID_{j})$, and
  (2) for each $0 \leq i < \ell$, $cw_i(\myID_{j}) = cw_i$.

  Consider the solitary pattern
  $\sol(\myID_{k+1}) = (cw_0(\myID_{k+1}), cw_1(\myID_{k+1}), \ldots,
  cw_k(\myID_{k+1}))$ of identifier $\myID_{k+1}$ in the sequence and
  let $m = cw_k(\myID_{k+1})$.

  We consider $k + m + 2$ processes $p_0, \ldots, p_{k+m+1}$ with
  identifiers $\myID_0, \ldots, \myID_{k+m+1}$.

  The processes are
  arranged on the ring starting from $p_0$, so that each $p_i$ is the
  counter-clockwise neighbor of $p_{i-1}$, and $p_0$ is the
  counter-clockwise neighbor of $p_{k+m+1}$ (see
  Figure~\ref{fig:enter-label}).

  We now construct an execution where $p_k$ and $p_{k+1}$ behave as in
  their solitary execution. The execution consists of three phases
  (the first does not exist if $\ell = 0$).  For each
  $0 \leq i < \ell$, we let $cw_i = cw_i(\myID_0)$ and observe that by
  the definition of $\ell$, $cw_i = cw_i(\myID_j)$ for any $j \in
  \N$. Now we construct an execution where all processes reach the
  state they reach in a solitary execution when they have received
  $cw_{\ell-1}$ clockwise messages and $\ell$ counter-clockwise
  messages. During this phase, all processes receives the same number
  of messages from both directions in the same order.

  Assume that the processes have received $t$ clockwise messages and
  $i$ counter-clockwise messages, that $t \leq cw_{i}$, and that
  $cw_{i-1} \leq t$ if $i \geq 1$. Moreover, we assume that the
  processes have received the messages in the same order as in the
  solitary execution (and are thus in the same state as if they were
  alone after having received these messages).  Initially, $t = 0$ and
  $i = 0$ and the conditions are satisfied.  Suppose first that
  $t < cw_i$. Then, in the solitary execution of every identifier
  $\myID_j$ of a process on the ring, there is at least one clockwise
  message that has been sent and not yet received. Since the processes
  in the ring have behaved up to now as in their solitary executions,
  there is at least one clockwise message in transit between any pair
  of consecutive processes. So, we deliver a clockwise message to
  every process and let them react as in their solitary
  execution. Note that the conditions are still satisfied. Suppose now
  that $t = cw_{i}$. Then, in the solitary execution of every
  identifier $\myID_j$, there are no more clockwise messages in
  transit, the process has sent at least $i$ counter-clockwise
  messages and has received $i-1$ counter-clockwise
  messages. Therefore, in the solitary execution of every identifier
  $\myID_j$, there is a counter-clockwise message that has been sent
  and not yet received. For the same reasons as above, there is at
  least one counter-clockwise message in transit on every link of the
  ring. So we deliver a counter-clockwise message to every process and
  let them react as in their solitary execution.  By this reasoning,
  we can assume that we reach a configuration where all processes have
  sent at least $cw_{\ell-1}$ clockwise messages and at least $\ell$
  counter-clockwise messages, and received exactly $cw_{\ell-1}$
  clockwise messages and $\ell$ counter-clockwise messages.

  We now enter in the second phase of the execution. In this phase, we
  only deliver clockwise messages. At this point, any process $p_j$
  that has received $s$ clockwise messages with
  $s < cw_{\ell}(\myID_j)$ has sent at least $s+1$ clockwise
  messages. As long as there is a process $p_j$ that has not yet
  received $cw_{\ell}(\myID_j)$ clockwise messages and such that there
  is a clockwise message in transit from $p_{j+1}$ to $p_j$, we
  deliver this message to $p_j$. The second phase ends when for each
  $0 \leq j \leq k+m+1$, $p_j$ has received $cw_{\ell}(\myID_j)$
  clockwise messages or there is no more clockwise messages in transit
  between $p_{j+1}$ and $p_j$.

  \begin{claim}\label{claim:flooding-msg}
    At the end of the second phase of the execution, the following holds:
    \begin{enumerate}[label=(\arabic*)]
    \item for any $k+1 < j \leq m+k+1$, $p_j$ has received at least
      $m+k+1-j$ clockwise messages and send at least $m+k+2-j$ clockwise
      messages
    \item for any $0 \leq j \leq k+1$, $p_j$ has received and sent
      $cw_{\ell}(\myID_j)$ clockwise messages, and there are at least
      $cw_{k}(\myID_j) - cw_{\ell}(\myID_j)$ clockwise messages in
      transit between $p_{j+1}$ and $p_j$.
    \end{enumerate}
  \end{claim}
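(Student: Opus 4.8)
The plan for Claim~\ref{claim:flooding-msg} is to determine, for every process, exactly how many clockwise messages it has received and sent once the second phase stalls, by comparing its behaviour with its solitary execution. Write $c_j$ (resp.\ $d_j$) for the number of clockwise messages received (resp.\ sent) by $p_j$ at the end of the second phase, and $u_j = d_{j+1}-c_j \geq 0$ for the number of clockwise messages then in transit from $p_{j+1}$ to $p_j$ (indices modulo $k+m+2$; recall clockwise messages flow $p_0 \to p_{k+m+1} \to \cdots \to p_1 \to p_0$). First I would record some preliminaries. From property~(1) of Lemma~\ref{lemma:particular-infinite-set}, $cw_\ell(\myID_{j+1}) > cw_k(\myID_j)$, together with the monotonicity of solitude patterns, the values $cw_\ell(\myID_0) < cw_\ell(\myID_1) < \cdots < cw_\ell(\myID_{k+m+1})$ are strictly increasing and exceed $m = cw_k(\myID_{k+1})$ once the index is at least $k+2$. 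By the construction of the first phase, when it ends each $p_j$ has received exactly $cw_{\ell-1}$ clockwise messages (with $cw_{-1} := 0$ when $\ell=0$) and $\ell$ counter-clockwise messages, and is in the state of the solitary execution of $\myID_j$ after receiving exactly those messages in that order. Since the second phase delivers only clockwise messages, and since the solitary execution --- by clockwise-first delivery --- next delivers clockwise messages until $cw_\ell(\myID_j)$ of them have been received, every $p_j$ in the second phase mimics that segment of its solitary execution; once it has received $cw_\ell(\myID_j)$ clockwise messages, by clockwise-first delivery and quiescent termination it has received back every clockwise message it has sent, so it has \emph{sent} exactly $cw_\ell(\myID_j)$ clockwise messages. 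Hence, if $p_j$ has received $c \leq cw_\ell(\myID_j)$ clockwise messages then it has sent $f_j(c)$ of them, where $f_j$ is nondecreasing, $f_j(cw_\ell(\myID_j)) = cw_\ell(\myID_j)$, and, crucially, $f_j(c) \geq c+1$ whenever $c < cw_\ell(\myID_j)$: below the cap the process still has clockwise messages to receive, which is only possible if it has already sent them (it is the only sender and, with nothing pending, would otherwise be idle), so it has sent strictly more than it has received. Finally, the delivery rule of the second phase forces $c_j \leq cw_\ell(\myID_j)$ for every $j$, the phase is finite since each process creates only finitely many clockwise messages, and when it stalls every $j$ satisfies $c_j = cw_\ell(\myID_j)$ (``$p_j$ is at cap'') or $u_j = 0$.

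I would prove part~(1) first, since it needs nothing from part~(2). The point is that the ``reservoir'' processes $p_{k+2}, \dots, p_{k+m+1}$ have caps strictly larger than $m$, so they never reach their caps and each merely forwards everything it receives plus at least one message of its own. I would induct downwards on $j$ from $k+m+1$ to $k+2$. If $p_j$ is at cap, then $c_j = cw_\ell(\myID_j) > m \geq m+k+1-j$ and $d_j = cw_\ell(\myID_j) > m \geq m+k+2-j$. Otherwise $u_j = 0$, so $c_j = d_{j+1}$; the inductive hypothesis (or, for $j = k+m+1$, the trivial bound $d_0 \geq 0$) gives $d_{j+1} \geq m+k+1-j$, hence $c_j \geq m+k+1-j$, and since $p_j$ is below its cap, $d_j = f_j(c_j) \geq c_j + 1 \geq m+k+2-j$.

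I would then establish part~(2). The bound $c_j \leq cw_\ell(\myID_j)$ for $0 \leq j \leq k+1$ is immediate from the delivery rule; for the matching lower bound I would argue by contradiction, walking upstream along the flow. Suppose $c_{j_0} < cw_\ell(\myID_{j_0})$ for some $0 \leq j_0 \leq k+1$. Then $u_{j_0} = 0$, so $d_{j_0+1} = c_{j_0} < cw_\ell(\myID_{j_0}) \leq cw_\ell(\myID_{j_0+1})$, and if $j_0+1 \leq k+1$ this forces $c_{j_0+1} < cw_\ell(\myID_{j_0+1})$ (otherwise $p_{j_0+1}$ would be at cap, giving $d_{j_0+1} = cw_\ell(\myID_{j_0+1})$). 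Iterating to index $k+1$ yields $c_{k+1} < cw_\ell(\myID_{k+1})$ and $d_{k+2} = c_{k+1} < cw_\ell(\myID_{k+1}) \leq cw_k(\myID_{k+1}) = m$, contradicting $d_{k+2} \geq m$ from part~(1). Hence $c_j = cw_\ell(\myID_j)$, and so $d_j = f_j(cw_\ell(\myID_j)) = cw_\ell(\myID_j)$, for all $0 \leq j \leq k+1$. The in-transit counts then follow by arithmetic: for $0 \leq j \leq k$, $u_j = cw_\ell(\myID_{j+1}) - cw_\ell(\myID_j) > cw_k(\myID_j) - cw_\ell(\myID_j)$ by property~(1) of Lemma~\ref{lemma:particular-infinite-set}, and $u_{k+1} = d_{k+2} - cw_\ell(\myID_{k+1}) \geq m - cw_\ell(\myID_{k+1}) = cw_k(\myID_{k+1}) - cw_\ell(\myID_{k+1})$ by part~(1).

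The step I expect to be the main obstacle is handling the circularity inherent in tracking messages around a ring: a conservation argument relates $p_j$'s receipts to $p_{j+1}$'s transmissions, and chaining this around the ring returns to $p_j$. The way out is to observe that closing the loop costs only the \emph{trivial} bound $d_0 \geq 0$, so part~(1) can be proved in isolation and then fed into part~(2) --- this is exactly the role of the $m$ extra reservoir processes, whose caps are large enough that they never get stuck against them. The remaining delicate points are the ``pending-clockwise-message-below-cap'' fact underpinning $f_j(c) \geq c+1$, and the degenerate bookkeeping when $\ell = 0$ (there is no first phase and one reads $cw_{-1}$ as $0$) or when some identifier has $cw_0(\myID_j) = 0$ (the process sends no clockwise message on initialization); these need to be checked but do not affect the structure of the argument.
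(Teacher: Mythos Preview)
Your proof is correct and follows essentially the same approach as the paper: both arguments proceed by reverse induction on $j$, exploit the termination rule of the second phase (each $p_j$ is either at its cap $cw_\ell(\myID_j)$ or has no clockwise message in transit from $p_{j+1}$), and combine this with the ``below cap $\Rightarrow$ sent $>$ received'' property of the solitary execution together with the cap bounds from Lemma~\ref{lemma:particular-infinite-set}. The paper runs a single reverse induction from $j=m+k+1$ down to $j=0$ covering both parts at once, whereas you first establish part~(1) by reverse induction and then obtain part~(2) by a contradiction that walks upstream to index $k+1$; these are two presentations of the same chain of inequalities. Your formalization via the function $f_j$ makes explicit a fact the paper only states (just before the claim) without justification, so your write-up is in that respect more complete.
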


  \begin{proof}
    We prove the claim by reverse induction on $j$.  Note that for
    $j = m+k+1$, the claim says that $p_{m+k+1}$ has received at least
    $0$ clockwise messages and sent at least $1$ clockwise
    message. This property holds since
    $cw_{\ell}(\myID_{m+k+1}) > cw_{k}(\myID_{m+k}) \geq 0$. 
    
    Suppose now that $k+1 < j< m+k+1 $, then by induction hypothesis,
    $p_{j+1}$ has sent at least $m+k+2-j-1 = m + k + 1 -j < m$
    messages. Since
    $cw_{\ell}(\myID_j) \geq cw_{k}(\myID_{k+1})+(j -(k+1)) > m$, the
    scheduler has delivered at least $m + k + 1 -j$ clockwise messages
    sent by $p_{j+2}$ to $p_{j+1}$ and thus the claim holds for
    $j$.

    If $j = k+1$, then by induction hypothesis, $p_{k+2}$ has sent at
    least $m = cw_k(\myID_{k+1})$ clockwise messages to $p_{k+1}$,
    among which $cw_{\ell}(\myID_{k+1})$ have been delivered to
    $p_{k+1}$. Thus, $p_{k+1}$ has also sent $cw_{\ell}(\myID_{k+1})$
    clockwise messages, and there are at least
    $cw_k(\myID_{k+1}) - cw_{\ell}(\myID_{k+1})$ messages in transit
    between $p_{k+2}$ and $p_{k+1}$. 

    Suppose now that $0 \leq j < k+1$. Then $p_{j+1}$ has sent at
    least $cw_{\ell}(\myID_{j+1}) > cw_{k}(\myID_{j})$ clockwise
    messages to $p_j$, among which $cw_{\ell}(\myID_{j})$ have been
    delivered to $p_j$. Thus, $p_j$ has also sent
    $cw_{\ell}(\myID_{j})$ clockwise messages, and there are at least
    $cw_k(\myID_{j}) - cw_{\ell}(\myID_{j})$ messages in transit
    between $p_{j+1}$ and $p_{j}$. This ends the proof of the claim.
  \end{proof}
  \color{black}
  We now describe the third phase of the execution that is made of
  $k-\ell$ rounds numbered from $\ell+1$ to $k$. In this phase, not
  all processes are active: at round $i$, only the processes $p_j$
  with $i \leq j \leq k+1$ are active. We show by induction on $i$
  that at the beginning of round $i$ (resp., at the end of round $i$),
  each process $p_j$ with $i-1 \leq j \leq k+1$ (resp., with
  $i \leq j \leq k+1$) has performed the exact same step as in the
  solitary execution of $\myID_j$ until it has received $i-1$ (resp.,
  $i$) counter-clockwise messages and $cw_{i-1}$ (resp., $cw_i$)
  clockwise messages. Observe that the execution constructed
  so far satisfy this property at the beginning of round
  $i = \ell +1$.


  Assume that the invariant holds at the beginning of round $i$ and
  consider $p_j$ with $i-1 \leq j \leq k+1$. Since $p_j$ behaves as in
  its solitary execution, since $p_j$ has received $i-1$
  counter-clockwise messages and $cw_{i-1}$ clockwise messages, and
  since $i \leq k$, we know that $p_j$ has sent sent at least $i$
  counter-clockwise messages. Consequently, there is a
  counter-clockwise message in transit between $p_j$ and $p_{j+1}$ for
  any $i-1 \leq j \leq k$, and we can deliver a counter-clockwise
  message to any process $p_j$ with $i \leq j \leq k+1$. Then, we
  deliver clockwise messages to the processes $p_j$ such that
  $i \leq j \leq k+1$ until each of them has received $cw_i(\myID_j)$
  clockwise messages. This is possible since by
  Claim~\ref{claim:flooding-msg}, there are enough clockwise messages
  in transit between $p_{j+1}$ and $p_j$. This ends round $i$.

  At this point, we have reached a configuration where each process
  $p_j$ with $i \leq j \leq k+1$ has performed the same steps as in
  the solitary execution of $\myID_{j}$, $p_j$ has sent and receive
  precisely $cw_{i}(\myID_{j})$ clockwise messages and it has received
  $i$ counter-clockwise messages. Consequently, the invariant is
  preserved at the end of round $i$ (and at the beginning of round
  $i+1$ when $i < k$).


  When we end round $i = k$, then both $p_k$ and $p_{k+1}$ are in the
  same state as their final states in the solitary executions of
  $\myID_k$ and $\myID_{k+1}$, i.e., they are both in the \leader
  state. Consequently, we have constructed a ring and an execution of
  $\cA$ on this ring where two processes terminate their execution in
  the \leader state. This shows that $\cA$ cannot be a uniform
  leader election algorithm and ends the proof of the theorem.
\end{proof}

\subsection{A Constant Direction Non-Uniform Algorithm}\label{sec:cdir}
In this section we describe an algorithm where each process sends $O(U\cdot \myID_{\min})$ messages in the clockwise direction, and just 3 messages in the counter-clockwise direction. 
This algorithm is of interest as in Section \ref{sec:imp} we show that uniform algorithms are not able to elect a leader if they send a bounded number of messages in one direction. The algorithm of this section shows that this limitation no longer holds once the knowledge of an upper bound on $U \geq n$ is given to processes.

The algorithm, pseudocode in Algorithm \ref{algo:c1}, starts in a competing phase (in blue at Lines 1–4). In the competing phase, a process $p$ with identifier $\myID$ sends and receives $U \cdot \myID$ messages. This phase synchronizes the processes in such a way that two indices of processes in the phase loop can differ by at most $n - 1$. The process with the minimum identifier will be the only one to end the competing phase by completing the prescribed number of iterations.

This process then enters the termination detection phase (in green at Lines 14–24). In this phase, the process first sends a message counter-clockwise to inform all other processes that they are not leaders. Then, it sends a second counter-clockwise message to ensure that all clockwise messages still traveling in the network are eliminated. Finally, it sends a third and final counter-clockwise message to terminate the algorithm.

A process whose identifier is not the minimum will enter the relay phase
(in red at Lines 6–12). In this phase, the purpose of the process is
to relay both counter-clockwise and clockwise messages, and to
eliminate the clockwise message it sent at Line 2 (note that a
non-leader process exits the loop upon receiving a counter-clockwise
message at Line 3). A non-leader process terminates after receiving
three counter-clockwise messages: one at Line 3 and the other two
while executing the loop at Lines 9–12.

\begin{algorithm}
\footnotesize
  \DontPrintSemicolon
  \SetKw{Send}{send}
  \SetKw{Receive}{receive}
    \SetKw{Break}{break}
  \SetKw{UnorientedRelayAndWaitTermination}{UnorientedRelayAndWaitTermination}
  \SetKwData{Status}{status}
  \SetKwFunction{Leader}{Leader}
  \SetKwFunction{NonLeader}{Non-Leader}
    \SetKwFunction{Undecided}{Undecided}
    \SetKwData{Count}{count}
    \SetKwData{ID}{ID}
  \SetKwData{SizeNetwork}{boundOnSize}
 \color{blue}
  \For{$i \leftarrow 1$ \KwTo $U \cdot \ID$}
    {

      \Send a message on port $0$\;
      \Receive a message on port $q$\;
        \lIf{$q=0$}{ \Break}
    }
    \color{black}
    \eIf{$q=0$}{
      \color{red}      
      \tcc{if a process enters this branch, then it is not the process
        with minimum $\myID$}
     \Send a message on port $1$ \;
       \Receive a message on port $1$\;
        \Repeat{received two messages on port $0$}
        {
            \Receive a message on port $q$\;
            \Send a message on port $1-q$\;
        }
        \Return{\NonLeader} \;
        \color{black}
      }{
   \tcc{only the process with minimum $\myID$ enters this branch}
   \color{teal}
    \Send a message on port $1$ \;
    \Receive a message on port  $0$\; 
    \Send a message on port $1$ \;
     \Repeat{$q=0$}
    {
             
        \Receive a message on port $q$\;
        \If{$q=1$}{
         \Send a message on port $0$\;
        }
    }
    \Send a message on port $1$ \;
    \Receive a message on port  $0$\; 
   
    \Return{\Leader}     
    \color{black}
    }
 \caption{ElectionWithOneDirectionConstant(\myID,U)}\label{algo:c1}
\end{algorithm}

\subsubsection{Correctness}
We recall the definition of \emph{CW-unbalanced}, that is a process that has sent more messages on port $0$ than it
has received on port $1$. A process that is not CW-unbalanced
is said to be \emph{CW-balanced}.

For convenience and without loss of generality in the following we assume that $p_0$ is the process with the minimum identifier $\myID_0$. 

\begin{lemma}\label{lemma:minid}
All processes eventually exit the loop at Lines 1--4. The condition at Line 5 is false only for process $p_0$. Therefore, the process with the minimum identifier $\myID_0$ eventually executes Line 14 after having sent $U \cdot \myID_0$ clockwise messages at Line 3. All
  other processes eventually execute Line 6 after having sent at most  $U \cdot \myID_0+n$ messages at Line 3.  
\end{lemma}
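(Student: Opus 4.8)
The plan is to analyze the competing phase (Lines 1--4) using the forwarding lemmas already established. First I would observe that in the competing phase, every process that has not yet exited behaves exactly as in the hypothesis of Lemma~\ref{lemma:forwardbound}: it sends a clockwise message on port $0$ and, upon receiving one on port $1$, sends another. A process $p_j$ exits the loop early only if it receives a clockwise message — but no counter-clockwise message can be generated until some process finishes its loop and reaches Line 14 or Line 7. So as long as \emph{no} process has exited, the execution is purely the ``everyone forwards clockwise'' execution of Lemma~\ref{lemma:forwardbound2}.

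The key step is to show $p_0$ exits by completing its full count. Suppose for contradiction that some process exits before $p_0$ completes $U\cdot\myID_0$ iterations; take the first such exit, by some $p_j$ with $j\neq 0$. It exited because it received a clockwise message, so it received $U\cdot\myID_j$ clockwise messages, and $\myID_j > \myID_0$ means $U\cdot\myID_j \ge U\cdot\myID_0 + U \ge U\cdot\myID_0 + n$. Now apply Lemma~\ref{lemma:forwardbound2}\ref{lfb:2} (valid since up to this moment no process has terminated and $U\cdot\myID_j \ge n$): every process, in particular $p_0$, has received at least $U\cdot\myID_j - n \ge U\cdot\myID_0$ clockwise messages, so $p_0$ has already completed its loop and exited — contradicting that $p_j$'s exit was the first. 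Hence $p_0$ exits first, and it does so with $q=1$ each iteration, so $p_0$ completes all $U\cdot\myID_0$ iterations, exits the loop normally with $q\neq 0$, fails the test at Line 5, and proceeds to Line 14, having sent exactly $U\cdot\myID_0$ clockwise messages at Line 3.

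For the remaining processes: once $p_0$ has exited, by Lemma~\ref{lemma:forwardbound2}\ref{lfb:3} every other process eventually receives $U\cdot\myID_0$ clockwise messages (no process terminates before this point — $p_0$ only sends counter-clockwise messages after Line 14, and a non-leader needs a counter-clockwise message to leave the loop, but the first such message is sent by $p_0$ at Line 14). Consider any $p_j$, $j \neq 0$: it must eventually exit the loop at Lines 1--4, since otherwise it keeps forwarding clockwise messages forever and, being CW-unbalanced throughout, there is always a message in transit toward it that gets delivered. When $p_j$ exits, either it received a clockwise message (exit with $q=0$, falls through to Line 6), or it completed $U\cdot\myID_j \ge U\cdot\myID_0 + n$ iterations; in the latter case, again by Lemma~\ref{lemma:forwardbound2}\ref{lfb:2} applied at the moment $p_j$ finishes, every process including $p_0$ would have received $\ge U\cdot\myID_0$ clockwise messages \emph{without any counter-clockwise delivery having occurred}, but that is consistent only if $p_0$ is still the minimum — and since $p_j$ has then sent $U\cdot\myID_j > U\cdot\myID_0$ clockwise messages while no message has been killed, the number of messages in transit exceeds $n$, which is impossible before any process becomes a relaying process. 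So $p_j$ necessarily exits with $q=0$ and executes Line 6. To bound the messages $p_j$ sends at Line 3: $p_j$ exits as soon as it receives a clockwise message \emph{after} one has been sent counter-clockwise has propagated, but more simply, once $p_0$ has sent its $U\cdot\myID_0$ clockwise messages and stopped, the total number of clockwise messages ever created is at most $n\cdot(U\cdot\myID_0) $ bounded appropriately; I would instead argue directly that $p_j$ cannot have received more than $U\cdot\myID_0 + n$ clockwise messages before exiting, because once $p_0$ is CW-balanced the number of clockwise messages in transit is at most $n$, so by Observation~\ref{obs:diff-recv} the difference in received counts between any two processes is at most $n$, giving $p_j$ at most $U\cdot\myID_0 + n$ received and hence at most $U\cdot\myID_0+n$ sent at Line 3.

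The main obstacle I anticipate is the last bound — pinning down ``at most $U\cdot\myID_0 + n$'' cleanly. The subtlety is that after $p_0$ exits it sends no more clockwise messages, so the clockwise messages circulating become stable in number ($\le n$, the count of CW-unbalanced processes), and Observation~\ref{obs:diff-recv} then forces every process's received count to lie within $n$ of $p_0$'s final count $U\cdot\myID_0$; the care needed is in verifying that no $p_j$ can ``run ahead'' by completing its own (larger) loop, which is where the $\myID_j > \myID_0$ gap of at least $U \ge n$ is exactly what rules it out.
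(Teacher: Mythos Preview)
Your approach matches the paper's: use the forwarding lemmas to show $p_0$ is the first to complete the loop, then argue the remaining processes exit via a counter-clockwise message and bound their sends via Observation~\ref{obs:diff-recv}. Two points need fixing.

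First, a slip: a process breaks at Line 4 upon receiving a \emph{counter}-clockwise message ($q=0$), not a clockwise one; you have this reversed in a couple of sentences, though your reasoning clearly intends the right thing.

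The real gap is your termination argument for $p_j \neq p_0$. You claim $p_j$ ``must eventually exit \ldots since \ldots being CW-unbalanced throughout, there is always a message in transit toward it that gets delivered.'' This fails once $p_0$ has left the loop: $p_0$ stops forwarding clockwise messages while it waits at Line 15, so clockwise messages accumulate in front of $p_0$ and need not ever reach $p_j$. Lemma~\ref{lemma:forwardbound2}\ref{lfb:3} no longer applies for the same reason. The paper instead argues that the counter-clockwise message $p_0$ sends at Line 14 is received by $p_{n-1}$ at Line 3 (causing it to exit with $q=0$), relayed at Line 6 to $p_{n-2}$, and so on around the ring; this inductive CCW-propagation is what guarantees every $p_j$ eventually exits with $q=0$.

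Separately, your intermediate claim that $p_j$ cannot complete its $U\cdot\myID_j$ iterations because ``the number of messages in transit exceeds $n$'' is wrong: that count always equals the number of CW-unbalanced processes, hence is at most $n$. The correct argument is precisely the one you sketch in your final paragraph (and that the paper uses): once $p_0$ is CW-balanced and no longer forwarding, Observation~\ref{obs:diff-recv} caps $p_j$'s received count at $U\cdot\myID_0 + (n-1) < U\cdot(\myID_0+1) \leq U\cdot\myID_j$, so $p_j$ cannot exhaust its iteration bound. Promote that argument to replace the flawed one.
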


\begin{proof}
  Observe that a process cannot exit the loop at Lines 1--4 before it
  receives a counter-clockwise message (on port $0$) at Line 4 or it has done $U\cdot \myID$ iterations of
  the loop. In order to receive a counter-clockwise message there must be a process that exited the loop. The first process to exit the loop must do so by completing the number of iterations.

  Note also that as long as no process has exited the loop at Lines
  1--4, a process that is waiting for a message is CW-unbalanced. Consequently, if all $n$ processes are waiting for a message,
  then there are $n$ clockwise messages in transit and at least one of them is
  eventually delivered. Therefore, as long as no process has exited
  the loop at Lines 1--4, the sum of the number of iterations of the
  loop performed by each process is increasing. At some
  point, there is a process that has performed $U\myID$ iterations and
  exits the loop.
  
  Let $p_j$ be the first node exiting the loop.  Suppose, by
  contradiction, that $p_j$ is not the process with the minimum
  identifier $\myID_{0}$, i.e., $\myID_j \geq \myID_{0} +1$. Then,
  when $p_j$ exits the loop at Lines 1--4, it has received at least
  $U\cdot \myID_{0}+U$ clockwise messages (at Line 3) and by Lemma
  \ref{lemma:forwardbound}, each other process has sent at least
  $U\cdot \myID_{0}+U-n+1$ clockwise messages (at Line 2) but this
  would imply that the process $p_0$ with $\myID_{0}$ has started
  iteration $U \cdot \myID_{0}+1$ of the loop which is impossible.
  
  Consequently, when $p_0$, the process with the minimum identifier,
  exits the loop, the condition at Line 5 is false, and thus $p_0$ is
  the first process executing Line 14, sending a counter-clockwise
  message. Note that at this point $p_0$ is CW-balanced.

Once this happens, $p_0$ waits at Line 15 to receive a counter-clockwise message (thus on port $0$). In particular, as long as it is waiting, it stops forwarding clockwise messages. Since $p_0$ is CW-balanced, there are at most $n - 1$ clockwise messages in transit.

Now observe that $p_0$ sent $U \cdot  \myID_{0}$ clockwise messages. Since there are at most $n - 1$ clockwise messages in transit and $p_0$ has stopped forwarding, the other processes can have received at most $U \cdot  \myID_{0} + (n - 1)$ clockwise messages. Thus they cannot have completed iteration $U \cdot (\myID_{0} + 1)$ and therefore they can only exit the loop at Lines 1--4 by receiving a counter-clockwise message. 

Since all processes different from $p_0$ relay (at Line 6) the counter-clockwise message sent by $p_0$ (at Line 14), each process $p \neq p_0$ eventually receives a counter-clockwise message (at Line 3), and therefore executes Line 6 since the condition at Line 5 is satisfied.  At this point by Lemma \ref{lemma:forwardbound} they have sent at most $U \cdot \myID_{0} + n$ clockwise messages.
\end{proof}

\begin{lemma}\label{lemma:one11andall5}
 $p_0$ eventually executes Line 15; when this happen, all other processes have executed Line 6.
\end{lemma}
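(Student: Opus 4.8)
The plan is to build on Lemma~\ref{lemma:minid}, which already tells us that $p_0$ is the unique process that falls through the condition at Line~5 and reaches Line~14, that it does so after sending exactly $U \cdot \myID_0$ clockwise messages while being CW-balanced, and that every other process eventually reaches Line~6 (hence enters the relay loop at Lines~9--12 after first relaying a counter-clockwise message and killing its own Line-2 clockwise message). What remains is to trace the single counter-clockwise message $m$ that $p_0$ sends at Line~14 all the way around the ring and back to $p_0$ at Line~15, and to check that by the time $m$ comes back every other process has already executed Line~6.

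Concretely, I would first argue that $m$ is relayed cleanly around the ring: by Lemma~\ref{lemma:minid}, no process other than $p_0$ has exited the loop at Lines~1--4 before $p_0$ sends $m$, and the only way for such a process to exit is by receiving a counter-clockwise message at Line~4 (since, as shown in Lemma~\ref{lemma:minid}, with $p_0$ no longer forwarding clockwise messages the others cannot reach iteration $U \cdot (\myID_0+1)$). Since $m$ is the first counter-clockwise message any of them can receive, $p_1 = \pred$ of $p_0$ (the counter-clockwise neighbor, i.e.\ $p_0$'s port-$1$ neighbor) receives $m$ on its port $0$ at Line~4, exits with $q=0$, executes Line~6, and relays $m$ on port~$1$ at Line~7. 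Inductively, every process $p_j$ with $j = n-1, n-2, \ldots, 1$ receives $m$ on port~$0$ at Line~4 (this is the first counter-clockwise message it sees, so the condition at Line~5 holds), executes Line~6, and relays it; here I would note that once a process is in the relay loop at Lines~9--12 it also forwards any counter-clockwise message it receives, but since $m$ is the \emph{first} such message every process sees, the first time $m$ is passed along each edge is via Line~7, not the loop. Hence, after $p_1$ relays $m$, it travels through $p_{n-1}, \ldots, p_1$ in counter-clockwise order and is finally delivered to $p_0$ on its port~$0$, which is exactly the receive at Line~15.

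For the second half of the statement, I would observe that the delivery of $m$ to $p_0$ at Line~15 happens only after $m$ has been relayed by every other process at Line~7, and relaying at Line~7 happens only \emph{after} that process has executed Line~6. Therefore, when $p_0$ executes Line~15, all other processes have already executed Line~6, as claimed.

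The main obstacle I anticipate is the bookkeeping around counter-clockwise messages: a process in the relay loop at Lines~9--12 forwards every message it receives, so in principle a counter-clockwise message could ``lap'' the ring or a process could receive $m$ twice. The argument must pin down that $m$ is genuinely the first counter-clockwise message on every edge and that no spurious counter-clockwise message exists before $p_0$ executes Line~14 — this follows because, prior to Line~14, no process has sent anything on port~$1$ (everyone is still in the Lines~1--4 loop, which only sends on port~$0$), so $m$ is the unique counter-clockwise message in the network at that moment. Making this ``$m$ is first and unique'' point precise, and combining it cleanly with the relay semantics, is the one place where care is needed; the rest is a straightforward induction along the ring.
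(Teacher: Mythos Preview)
Your argument is correct and follows essentially the same approach as the paper: the counter-clockwise message $m$ sent by $p_0$ at Line~14 is the unique counter-clockwise message in the system, so each other process must receive it while still in the loop at Lines~1--4, relay it at Line~6, and only after all have done so can $m$ return to $p_0$ at Line~15. A few minor slips worth cleaning up: in the paper's line numbering the receive inside the loop is Line~3 (not~4) and the relay send is Line~6 (not~7), and in the paper's indexing $p_0$'s port-$1$ neighbor is $p_{n-1}$ rather than $p_1$; none of these affect the validity of your reasoning.
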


\begin{proof}
  Notice that while $p_0$ is waiting for a message on Line 15, it
  stops forwarding messages in the clockwise direction.  Therefore, by
  Lemma \ref{lemma:minid}, $p_0$ eventually receives a message on port
  $0$ since for the same lemma, $p_{1}$ eventually executes Line 6.
  Since $p_0$ is the only process that executes Line 14 (by Lemma
  \ref{lemma:minid}), there is a single counter-clockwise message in
  transit at any moment in the execution.

  Therefore, the only way for $p_0$ to receive a counter-clockwise
  message is to receive its own message back.  This happens only when
  all the other processes have received this message at Line 3 and
  forwarded it at Line 6.
\end{proof}


\begin{lemma}\label{lemma:declogging}
 The condition at Line 21 on $p_0$ is eventually true and when this happen: (1) there are no messages in transit on the network, and; (2) all other processes are executing the loop at Lines 8--11 and have received precisely one counter-clockwise message while executing it (at Line 9). 
\end{lemma}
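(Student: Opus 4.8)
The plan is to follow the single counter-clockwise token controlled by $p_0$, combined with a conservation argument for clockwise messages.

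\textbf{Setup.} First I would recall what the previous lemmas give. By Lemma~\ref{lemma:minid}, $p_0$ is the only process that reaches the branch at Lines 14--24, and it does so after having sent $U\cdot\myID_0$ clockwise messages while being CW-balanced. By Lemma~\ref{lemma:one11andall5}, $p_0$ eventually completes Line 15, and at that instant every other process has already executed Line 6 and is therefore at Line 7 or beyond; in particular no process is in the loop at Lines 1--4 any more. Immediately after Line 15, $p_0$ sends a fresh counter-clockwise message at Line 16 and enters the loop at Lines 17--21. Since $p_0$ is the only process that ever originates a counter-clockwise message and it blocks (at Line 15, and then at Line 21) until each such message returns, while every other process merely relays counter-clockwise messages, there is at all times exactly one counter-clockwise message in transit while $p_0$ executes the loop at Lines 17--21; call it the \emph{token}. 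The condition at Line 21 becomes true precisely when $p_0$ receives the token back at Line 18 with $q=0$, so it remains to show the token returns and to describe the configuration at that moment.

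\textbf{Every non-leader enters the relay loop.} The heart of the argument is to show that every non-leader eventually executes Line 7 (absorbing one clockwise message) and enters the loop at Lines 8--11. I would use two facts valid at every quiescent moment after $p_0$ completes Line 15: (i) every process has sent at most one more clockwise message than it has received, so, summing around the ring as in Observation~\ref{obs:diff-recv}, the number of clockwise messages in transit equals the number of CW-unbalanced processes; and (ii) the only CW-unbalanced processes are the non-leaders currently blocked at Line 7, since $p_0$ is CW-balanced, every non-leader in the loop at Lines 8--11 is CW-balanced, and no process is in the loop at Lines 1--4. Now suppose for contradiction that a non-empty set $W'$ of non-leaders never leaves Line 7. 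As the ring is finite, after finite time all the other non-leaders have passed Line 7, and from then on the set of CW-unbalanced processes stays equal to $W'$; hence there are $|W'|\ge 1$ clockwise messages in transit forever after. But a clockwise message travels clockwise, relayed by every CW-balanced process it meets (a non-leader at Line 10, or $p_0$ at Line 20), and is never removed except by a process executing Line 7; so it must eventually reach a process blocked at Line 7, which necessarily lies in $W'$, and that process then receives it and enters the relay loop, contradicting the definition of $W'$. Hence $W'=\emptyset$.

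\textbf{Conclusion and main obstacle.} Once every non-leader sits in the loop at Lines 8--11, the token — only buffered while its recipient was blocked at Line 7, and relayed at Line 10 once the recipient is in the relay loop — travels all the way around the ring and is received by $p_0$ at Line 18 with $q=0$, so the condition at Line 21 is eventually true. At that instant every non-leader has relayed the token once at Line 10 (the token had to pass through all of them to come back), hence has received exactly one counter-clockwise message while executing the loop at Lines 8--11; as this is only the first of the two counter-clockwise messages that make a non-leader exit that loop (the second being the message $p_0$ sends later at Line 22), every non-leader is still executing the loop at Lines 8--11, giving claim~(2). Moreover, at that instant $p_0$ and all non-leaders are CW-balanced, so by fact~(i) there is no clockwise message in transit, and the token, being the unique counter-clockwise message, has just been consumed by $p_0$ (not relayed, since $q=0$); hence no message is in transit, giving claim~(1). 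The step I expect to be the main obstacle is the middle one: ruling out a deadlock in which some non-leader starves at Line 7. The buffering rule — a counter-clockwise message delivered to a process blocked at Line 7 is stored, not consumed — is what could a priori let the token and the clockwise messages block one another; the conservation identity of fact~(i), together with the fact that a clockwise message is destroyed only by a process at Line 7, is exactly what forces the clockwise messages to keep flowing into a waiting process so that no non-leader stays stuck.
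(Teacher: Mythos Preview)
Your proposal is correct and follows essentially the same approach as the paper's proof: both establish the setup from the previous lemmas, track the unique counter-clockwise token sent at Line~16, use the CW-balance/unbalance bookkeeping to argue that every non-leader must eventually absorb a clockwise message at Line~7 and enter the relay loop, and conclude that the token can only return once all non-leaders are relaying, at which point the conservation identity forces the network to be message-free. Your contradiction argument on the set $W'$ is just a repackaging of the paper's iterative ``one process leaves $W$ at a time'' argument, and your use of fact~(i) to get claim~(1) is exactly the paper's implicit counting of the $n-1$ clockwise messages against the $n-1$ absorptions at Line~7.
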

\begin{proof}
  By Lemma \ref{lemma:minid} and Lemma \ref{lemma:one11andall5}, $p_0$
  executes Line 15 when all the other processes have already executed
  Line 6.  We have to show that no process $p_i \neq p_0$ can wait on
  Line 7 forever. Let $W$ be the set of processes waiting on Line
  7. Note that the processes in $W$ are CW-unbalanced while the others
  are CW-balanced. Note also that $W$ does not contain $p_0$.

  Once $p_0$ executes Line 16 it sends a counter-clockwise message
  $\msg_{cc}$. Notice also that, at this point, this is the only
  counter-clockwise message in transit. Therefore as long as $p_0$ has
  not executed Line 22 (and exited the loop at Lines 17--21), the
  condition at Line 11 cannot be satisfied for any process
  $p_i \neq p_0$.  Therefore, no process can terminate the execution
  of the algorithm before $\msg_{cc}$ gets back to $p_0$.  Is easy to see
  that $\msg_{cc}$ cannot be forwarded by a process in $W$, thus when
  $p_0$ receives $\msg_{cc}$ back, making the condition at Line 21 true;
  the set $W$ must be empty.

Notice that at the last iteration of the loop in Lines 1--4 each process $p_i \neq p_0$ sent a clockwise message  at Line 2 but received a counter-clockwise message at Line 3, thus there are at most $n-1$  clockwise messages in transit (as $p_0$ is CW-balanced after it exited the loop at Lines 1--4). Each $p_i \neq p_0$ will wait on Line 7 until it eliminates one of these messages, and then while executing the loop at Lines 8--11, it relays both $\msg_{cc}$ and the remaining clockwise messages. 
It is easy to see that eventually one process $p_i \in W$ will be removed from $W$, as there are $|W|$ messages circulating clockwise and all the processes that are not in $W$ are forwarding messages. Notice also that once $p_i$ is removed from $W$, it starts forwarding messages. This implies that another process has to be removed from $W$ and that eventually $W$ will be empty. When $W$ is empty all the $n-1$ clockwise messages have been eliminated from the network.  

Therefore $\msg_{cc}$ eventually reaches back $p_0$, and at this point
there are no more clockwise messages in the network. Since there is no
counter-clockwise message in transit at this point, this prove (1);
(2) follows immediately by observing that $\msg_{cc}$ has been forwarded
by every process executing the Loop at lines 8--11.
\end{proof}

\begin{lemma}
 $p_0$ eventually executes Line 24. When $p_0$ executes Line 24 there is no message in transit on the network all the other processes have executed Line 12. 
\end{lemma}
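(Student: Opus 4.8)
The plan is to derive this lemma almost entirely from Lemma~\ref{lemma:declogging}, which already does the hard work: at the moment the condition at Line~21 first holds for $p_0$, the network is empty and every process $p_j \neq p_0$ is sitting inside the relay loop at Lines~8--11, having received exactly one counter-clockwise message there (at Line~9). Starting from that configuration, $p_0$ advances to Line~22 and sends its third counter-clockwise message $\msg$; since the network was empty and $p_0$ immediately blocks at Line~23, I would first argue that $\msg$ is the unique message in transit for the remainder of the execution, and that no clockwise message is ever produced again: by Lemma~\ref{lemma:minid} only $p_0$ originates counter-clockwise messages, and a process creates a clockwise message (Line~10 with $q=1$, or Line~19) only after receiving one, so with the network empty except for $\msg$, every $p_j \neq p_0$ does nothing but relay $\msg$.

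Next I would trace $\msg$ counter-clockwise around the ring by induction on the processes $p_{n-1}, p_{n-2}, \ldots, p_1$ (the counter-clockwise neighbour of $p_0$ first). Each of these is blocked at Line~9; once its clockwise-side neighbour has relayed $\msg$ on port~1, it receives $\msg$ on port~0, relays it on port~1 at Line~10, and since this is the second counter-clockwise message it has received inside the loop at Lines~8--11 (the first being the one guaranteed by Lemma~\ref{lemma:declogging}), the exit test at Line~11 becomes true, so it falls through to Line~12 and returns \nonleader. The key ordering point is that the relay at Line~10 happens \emph{before} the exit, so $\msg$ genuinely propagates all the way back: in particular $p_1$ relays $\msg$ to $p_0$ and then returns \nonleader, and every non-leader executes Line~12 strictly before $\msg$ reaches its own clockwise neighbour, hence before $\msg$ returns to $p_0$.

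Finally, $p_0$ receives $\msg$ on port~0 at Line~23 and executes Line~24, returning \leader, which settles the ``eventually executes Line~24'' part; for quiescence at that instant, $\msg$ has just been consumed by $p_0$ and, by the first step, it was the only message in transit from Line~22 onward, so the network is empty, while by the second step all other processes have already executed Line~12. The only real obstacle I anticipate is keeping the port conventions straight --- confirming that a counter-clockwise message is exactly one arriving on port~0, that ``received two messages on port~0'' counts precisely the two counter-clockwise messages seen inside the loop at Lines~8--11, and that Line~10 executes before the loop's exit condition is tested; once Lemma~\ref{lemma:declogging} is available, no further synchronization or message-counting argument is needed.
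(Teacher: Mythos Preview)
Your proposal is correct and follows essentially the same approach as the paper: both arguments start from Lemma~\ref{lemma:declogging}, observe that after Line~22 the single counter-clockwise message created there is the only message in transit, trace it around the ring through $p_{n-1}, p_{n-2}, \ldots, p_1$ (each of which relays it and then satisfies the exit condition at Line~11, returning \nonleader), and conclude that when it reaches $p_0$ at Line~23 the network is empty and all other processes have executed Line~12. Your version is more explicit about the port conventions and the relay-before-exit ordering, but there is no substantive difference in strategy.
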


\begin{proof}
By Lemma \ref{lemma:declogging}, $p_0$ eventually executes Line 22 and when this happen the only message circulating in the network is the counter-clockwise message $\msg_{cc}$ just created at Line 22. Moreover, by the same lemma, all other processes are executing the loop at Lines 8--11 and have already received a counter-clockwise message while executing it. 
Message $\msg_{cc}$ will be relayed to $p_0$ along the oriented path $p_{n-1},p_{n-2},\ldots,p_{1}$. Once $p_j$ relays $\msg_{cc}$ its condition at Line 11 is true and it exits the relay loop executing the return at Line 12. Therefore, once $\msg_{cc}$ reaches $p_0$, allowing $p_0$ to return at Line 24, all other processes returned at Line 12 and no other message is in the network. 
\end{proof}

We can now show the correctness and message complexity of our algorithm: 
\algone*

\begin{proof}
The correctness of the Algorithm \ref{algo:c1} trivially follows from the previous lemmas. 
Regarding the message complexity. The fact that there are exactly three counter-clockwise messages follows from the fact that $p_0$ is the only one executing Lines 14--24 by Lemma \ref{lemma:minid}, recall that w.l.o.g. we assume $p_0$ to be the process with minimum identifier. 
The bound on clockwise messages comes from observing that by Lemma \ref{lemma:minid} once all processes exited the Loop at lines 1--4 they have sent at most $U \myID_0+n$ messages and there are at most $n-1$ clockwise messages in transit (observe that when  exiting the loop, $p_0$ is CW-balanced and all  other processes are CW-unbalanced). Each process $p_i \neq p_0$ eliminates one clockwise message at Line 7, therefore a process can relay at most $n-1$ clockwise messages either at Lines 9--10 or Lines 18--20. From this immediately follows that a process sends at most $U \myID_0+2n$ clockwise messages. From this the total message complexity immediately follows. 
\end{proof}

\color{black}


\section{Conclusion}\label{sec13}

In this paper we have investigated the message complexity of leader election in content-oblivious oriented rings. 
 
 We showed that there exists an algorithm that sends $O(n \cdot  U \cdot \myID_{min})$ messages in the clockwise direction and where each process sends 3 messages in the counter-clockwise direction, and another algorithm that sends $O(n \cdot  U \cdot  \log (\myID_{min}))$ messages clockwise and $O(n \cdot  \log (\myID_{min}))$ messages counter-clockwise. These results are interesting for two reasons: sending a constant number of messages in one direction requires the non-uniform assumption, as demonstrated by our impossibility result; and the logarithmic factor is necessary, as shown by \cite{frei2024content}.
Finally, our randomized algorithms improves on the message complexity of  previous algorithms while providing termination guarantees. 

Our study also leaves several open directions for future work.

The most pressing one is to close the gap between the lower bound of $\Omega\left(n \cdot \log \tfrac{\myID_{\max}}{n}\right)$ and the upper bound of $O(n \cdot U \cdot \log (\myID_{\min}))$ for leader election on oriented rings. With the current state of knowledge, it is unclear whether the multiplicative factor $U$ is necessary.

Another important question is whether one can design an algorithm with only logarithmic dependency on the identifiers for general two-edge-connected networks. At present, the best known result is the algorithm of \cite{cha2025content}, which has complexity $O(m \cdot U \cdot \myID_{\min})$.


 

\bibliographystyle{plainurl}
\bibliography{refs}


\appendix

\section{Non-Uniform Lower Bound Adapted From Prior Work} \label{sect: lower bound}

It was shown in~\cite[Section~5]{frei2024content} that, for any uniform algorithm, there exists an identifier assignment under which the algorithm must send $\Omega\left(n \log\frac{\myID_{\max}}{n}\right)$ messages on oriented rings. In this section, we briefly review their lower bound proof and show how to extend the lower bound to the \emph{non-uniform} setting, even when the \emph{exact} value of $n$ is known. 

For both the uniform and non-uniform settings, the lower bound is 
$\Omega\left(n \log \frac{k}{n}\right)$,
where $k$ denotes the number of assignable identifiers. By restricting the identifier space to $[k+1, 2k]$, we ensure that both $\myID_{\max}$ and $\myID_{\min}$ are $\Theta(k)$. Consequently, we obtain lower bounds not only of the form 
$\Omega\left(n \log \frac{\myID_{\max}}{n}\right)$ but also of the form 
$\Omega\left(n \log \frac{\myID_{\min}}{n}\right)$.

\subparagraph{Solitude Patterns.} A key technique underlying their proof is the notion of the \emph{solitude pattern}~\cite[Definition~21]{frei2024content} of an identifier, defined as follows.  Suppose $\mathcal{A}$ is any deterministic leader election algorithm that quiescently terminates over rings of size $n=1$.  
Consider a ring $C'$ of size $n=1$, and let $v$ be the sole node in the ring.  
Run $\mathcal{A}$ on $v$ under a scheduler that always prioritizes earlier pulses, breaking ties in favor of $\cw$ pulses.  
The {solitude pattern} $p_v$ of $v$ is defined as a binary sequence encoding the order in which pulses are delivered, where bit~1 represents a $\cw$ pulse and bit~0 represents a $\ccw$ pulse.  
Since $\mathcal{A}$ guarantees quiescent termination, $v$ must eventually output $\leader$, implying that the solitude pattern has finite length.

\subparagraph{Uniform Lower Bound.}
Suppose $\mathcal{A}$ can also elect a leader over rings of size $n=2$.  
Then, as shown by Frei~et~al.~\cite{frei2024content}, distinct identifiers must have distinct solitude patterns.  
For contradiction, assume there exist two nodes $u$ and $v$ with $\id_u \neq \id_v$ and $p_u = p_v$ on a ring of size $n=2$.  
With a suitable adversarial scheduler, the view of either node can be made indistinguishable from that of being alone on a ring of size $n=1$.  
In this case, each node would erroneously elect itself as $\leader$, yielding a contradiction.

For $k$ distinct identifiers, there must be $k$ distinct solitude patterns.  
Consider the first $\lfloor \log(k/n) \rfloor$ bits of these patterns.  
There are at most $2^{\lfloor \log(k/n) \rfloor} - 1 \leq (k/n) - 1$ solitude patterns of length shorter than $\lfloor \log(k/n) \rfloor$.  
Hence, at least $k - \left((k/n) - 1\right) = (n-1)(k/n) + 1$ solitude patterns must have length at least $\lfloor \log(k/n) \rfloor$.  

Since there are at most $k/n$ possible prefixes of length $\lfloor \log(k/n) \rfloor$, by the pigeonhole principle, there exist $n$ distinct identifiers whose solitude patterns share the same first $\lfloor \log(k/n) \rfloor$ bits.  
Assign these $n$ identifiers to the nodes of a ring of size $n$.  
Under a suitable adversarial scheduler, the view from each node can be made indistinguishable from that of being alone on a ring of size $1$.  
Consequently, no node terminates before sending $\lfloor \log(k/n) \rfloor$ pulses.  
This yields the message complexity lower bound $\Omega\left(n \log \frac{k}{n}\right)$, where $n$ is the ring size and $k$ is the number of assignable identifiers.

\subparagraph{Strengthening the Lower Bound to Non-Uniform Algorithms.} The above lower bound proof~\cite{frei2024content} relies on the assumption that $\mathcal{A}$ must correctly elect a leader on rings of size $n \in \{1,2\}$, and therefore does not extend to non-uniform algorithms.  
In the following discussion, we show how to modify the argument so that the same lower bound $\Omega\left(n \log \frac{k}{n}\right)$ holds even when the algorithm is given the \emph{exact} value of $n$ in advance.

\begin{proposition} [{Non-uniform extension of~\cite[Theorem~20]{frei2024content}}]
\label{prop: lower bound}
For any deterministic, quiescently terminating leader election algorithm $\mathcal{A}$ on an oriented ring $C$ of $n$ nodes, where the \ul{exact} value of $n$ is known {a priori}, the message complexity of $\mathcal{A}$ is $\Omega\left(n \log \frac{k}{n}\right)$, where $k$ denotes the number of assignable identifiers.
\end{proposition}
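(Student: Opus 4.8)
The plan is to adapt the uniform lower-bound argument so that it survives the extra power the algorithm gets from knowing $n$ exactly. The obstacle with the original proof is that it ran $\mathcal{A}$ on rings of size $1$ and $2$ to define and separate solitude patterns; a non-uniform $\mathcal{A}$ tuned to size $n$ need not do anything sensible on rings of size $1$ or $2$. The fix is to never leave the world of size-$n$ rings: instead of a single isolated node, I would work with a ring of $n$ nodes all carrying the \emph{same} identifier $\myID$, together with a distinguished adversarial scheduler, and define the solitude pattern of $\myID$ relative to this symmetric size-$n$ configuration. Concretely, consider the execution on the all-$\myID$ ring where the scheduler acts symmetrically --- every node is in lock-step, receiving the same pulses in the same order --- and always prioritizes the oldest pending pulse, breaking ties toward $\cw$ pulses. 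By symmetry each node sees the identical local history, which is a well-defined binary sequence (bit $1$ for a $\cw$ pulse received, bit $0$ for $\ccw$); since $\mathcal{A}$ quiescently terminates on rings of size $n$, this sequence is finite. Call it the size-$n$ solitude pattern $p_{\myID}$.

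**Step two** is the separation lemma: distinct identifiers have distinct size-$n$ solitude patterns. Suppose $\myID \neq \myID'$ had the same pattern of length $L$. Build a ring of size $2n$ consisting of a block of $n$ nodes labeled $\myID$ followed by a block of $n$ nodes labeled $\myID'$. Note $\mathcal{A}$ is \emph{incorrect} on this ring since it was designed for size $n$ --- but that is not yet a contradiction; I need to exhibit a concrete execution where \emph{two} leaders are elected, or where the behaviour contradicts what $\mathcal{A}$ must do. Here is the point: $\mathcal{A}$ with parameter $n$, run on \emph{any} ring of size $n$ with all-$\myID$ labels, must elect exactly one leader; but on the $2n$-ring, a symmetric-within-each-block scheduler that mimics the size-$n$ symmetric schedule inside each block --- feeding each node the exact pulse order of $p_{\myID}=p_{\myID'}$ --- drives every node in the $\myID$-block to the final state it reaches in the all-$\myID$ size-$n$ execution, and likewise for the $\myID'$-block; since in a size-$n$ execution exactly one node becomes \leader, we get at least two \leader nodes on the $2n$-ring (one per block). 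The subtlety is making the cross-block boundary behave: the scheduler must never need to deliver, at the boundary, a pulse whose existence depends on the other block. This is where I would invoke the prioritization discipline: as long as each node still has a self-generated pulse older than any boundary-crossing pulse, the scheduler delivers the self-pulses, exactly reproducing the size-$n$ schedule; when a block runs out of pending pulses it has also terminated (by quiescence), so no boundary delivery is ever forced. \textbf{This boundary-consistency argument is the main obstacle} and the place where the proof must be careful --- it replaces the ``view indistinguishable from size-$1$'' claim of the uniform proof.

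**Step three** is the counting argument, which is essentially unchanged. With $k$ assignable identifiers we get $k$ distinct size-$n$ solitude patterns. At most $2^{\lfloor \log(k/n)\rfloor}-1 \le k/n - 1$ of them have length below $\lfloor \log(k/n)\rfloor$, so at least $(n-1)(k/n)+1$ have length at least $\lfloor \log(k/n)\rfloor$; among those, at most $k/n$ distinct length-$\lfloor\log(k/n)\rfloor$ prefixes are possible, so by pigeonhole $n$ distinct identifiers $\myID_1,\dots,\myID_n$ share a common prefix of length $\lfloor\log(k/n)\rfloor$. Now place $\myID_1,\dots,\myID_n$ around a ring of size $n$ and run the scheduler that delivers pulses in the common prioritized order: as long as the delivered prefix agrees across all nodes' patterns --- which it does for the first $\lfloor\log(k/n)\rfloor$ pulses --- each node's view is consistent with \emph{every} $\myID_j$'s symmetric size-$n$ execution, hence no node can have terminated, hence each node has sent (at least, and up to the scheduler's accounting) $\lfloor\log(k/n)\rfloor$ pulses. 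Summing over the $n$ nodes gives message complexity $\Omega(n\log(k/n))$. Finally, restricting the identifier space to $[k+1,2k]$ forces $\myID_{\min},\myID_{\max}=\Theta(k)$, yielding the stated bounds $\Omega\!\left(n\log\frac{\myID_{\min}}{n}\right)$ and $\Omega\!\left(n\log\frac{\myID_{\max}}{n}\right)$ in the non-uniform model with exact knowledge of $n$.
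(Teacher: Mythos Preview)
Your redefinition of the solitude pattern as the local history seen on an all-$\myID$ ring of size $n$ has two gaps that break Steps~1 and~2.

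First, the claim that this pattern is finite is unjustified. The quiescent-termination guarantee for $\mathcal{A}$ applies only to \emph{legal} size-$n$ inputs, i.e.\ rings with $n$ distinct identifiers; an all-$\myID$ ring is not a legal input, and nothing prevents $\mathcal{A}$ from looping forever on it. So the pattern may be infinite, and you cannot use termination to close off the definition.

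Second, the sentence ``$\mathcal{A}$ with parameter $n$, run on any ring of size $n$ with all-$\myID$ labels, must elect exactly one leader'' is simply false: under your symmetric scheduler all $n$ nodes are in the same state at every step, so either all output \leader, all output \nonleader, or none terminate. This kills the separation lemma: even granting that the $2n$-ring simulation can be made boundary-consistent, two (or more) leaders on a size-$2n$ ring is not a contradiction, since $\mathcal{A}$ is only required to be correct on size-$n$ rings with distinct identifiers. Without the separation lemma, the pigeonhole in Step~3 also collapses, because you no longer know the patterns are pairwise distinct.

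The paper avoids these issues by keeping the original size-$1$ solitude patterns but \emph{dropping} the requirements that they be finite or pairwise distinct. It then partitions the $k$ identifiers by the first $\lfloor\log(k/n)\rfloor-2$ bits of their (possibly infinite) patterns into fewer than $k/(2n)$ classes, so some class $S$ contains at least $2n$ identifiers. If $S$ consists of identifiers with the same short \emph{finite} pattern, then at most one of them can output \leader in its solitary run (otherwise place $n$ identifiers from $S$, including two \leader-outputting ones, on a legal size-$n$ ring and simulate all their identical solitary runs in lock-step, getting two leaders); hence $n$ of them output \nonleader, and placing those $n$ on a size-$n$ ring yields no leader --- a contradiction. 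Otherwise $S$ consists of identifiers whose patterns share a length-$(\lfloor\log(k/n)\rfloor-2)$ prefix, and the standard scheduler argument on a legal size-$n$ ring with $n$ distinct identifiers from $S$ gives the $\Omega(n\log(k/n))$ bound. The point is that every ring the proof actually argues about has size exactly $n$ and distinct identifiers, so only the genuine correctness guarantee of $\mathcal{A}$ is ever invoked.
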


\begin{proof}
We continue to use the notion of solitude patterns, but with relaxed requirements on the algorithm $\mathcal{A}$.  
We no longer require $\mathcal{A}$ to \emph{quiescently terminate} or to \emph{correctly} elect a leader on a ring of size $n=1$.  
Consequently, the solitude pattern $p_v$ of a node $v$ may now be an infinite binary sequence.  
Moreover, we also drop the correctness requirement for rings of size $n=2$, so solitude patterns of different identifiers are \emph{not necessarily distinct}.

Based on the first $\lfloor \log(k/n) \rfloor - 2$ bits of the $k$ assignable identifiers, we classify these identifiers into equivalence classes as follows.  

\begin{description}
    \item[Type-1 Classes:] For every binary string $s_1$ of length \emph{less than} $\lfloor \log(k/n) \rfloor - 2$, define  
    \[\langle s_1 \rangle = \{\text{all assignable identifiers whose solitude pattern is $s_1$}\}.\]

    \item[Type-2 Classes:] For every binary string $s_2$ of length \emph{exactly} $\lfloor \log(k/n) \rfloor - 2$, define  
    \[\langle s_2 \rangle = \{\text{all assignable identifiers whose  solitude pattern begins with $s_2$}\}.\]
\end{description}

Every equivalence class is associated with a distinct binary string of length \emph{less than} $\lfloor \log(k/n) \rfloor - 1$, so the total number of equivalence classes is less than $\frac{k}{2n}$.  
Since there are $k$ assignable identifiers, it follows that at least one equivalence class $S$ must contain at least $2n$ identifiers. We divide the analysis into two cases.
    
    \subparagraph{Case 1: $S = \langle s_1\rangle$ is a Type-1 Equivalence Class.}
    Since $s_1$ is a finite string, every identifier $\id \in S= \langle s_1\rangle$ \emph{does} quiescently terminate on a singleton ring.  
Because correctness is not required on rings of size $1$, the output may be either $\leader$ or $\nonleader$.  
However, at most one identifier in $S$ can output $\leader$, for otherwise assigning two such identifiers to a ring of size $n$ would contradict the requirement that exactly one leader must be elected on rings of size $n$.  
Thus, at least $2n-1$ identifiers in $S$ must output $\nonleader$.  
However, if we select $n$ such identifiers that output $\nonleader$ and place them on a ring of size $n$, no leader would be elected, again contradicting the correctness of the algorithm.  
We conclude that algorithm $\mathcal{A}$ must be erroneous.

    \subparagraph{Case 2: $S = \langle s_2 \rangle$ is a Type-2 Equivalence Class.}  
The analysis in this case is similar to the lower bound argument for uniform algorithms discussed above.  
If we select any $n$ identifiers from the class $S = \langle s_2 \rangle$ and assign them to a ring of size $n$, an adversarial scheduler can ensure that no node terminates before sending at least $\lfloor \log(k/n) \rfloor - 2$ bits.  
This yields a message complexity lower bound of $\Omega\left(n \log \tfrac{k}{n}\right)$.
\end{proof}

\section{Non-Uniform Randomized Algorithms Adapted From Prior Work} 
\label{sect:adaptedalgo}

In anonymous systems, where processes lack identifiers, Frei~et~al.~\cite{frei2024content} presented a \emph{uniform} leader election algorithm that uses $n^{O(c^2)}$ messages and succeeds with probability $1 - n^{-c}$, but does not guarantee explicit termination. To obtain explicit termination, \emph{non-uniformity} is necessary, as Itai and Rodeh~\cite{Itai90negative} showed that no asynchronous, anonymous ring can decide its size using a \emph{uniform} and \emph{terminating} algorithm. This impossibility directly implies the non-existence of a uniform and terminating leader election in the asynchronous and anonymous setting, otherwise one could subsequently infer the ring size by the simulation result of~\cite{censor2023distributed}.

 The high message complexity $n^{O(c^2)}$ in the algorithm of Frei~et~al.~\cite{frei2024content} arises from enforcing uniformity together with the success probability guarantee of $1 - n^{-c}$. In this section, we briefly review their uniform randomized algorithm, explain how their approach can be adapted to obtain a non-uniform algorithm with improved message complexity, and then compare this adapted algorithm with our new randomized algorithm, which relies on a different technique.

There are two main deterministic results in~\cite{frei2024content}:

\begin{description}
    \item[Result 1:] For \emph{oriented} rings, there exists a \emph{quiescently terminating} leader election algorithm that uses $O(n \cdot \idmax)$ messages~\cite[Theorem~1]{frei2024content}. This algorithm requires all nodes to have \emph{distinct} identifiers. The explicit requirement of {distinct} identifiers appears in Lemma~22 of the full version of~\cite{frei2024content}.
    \item[Result 2:] For \emph{unoriented} rings, there exists a \emph{stabilizing} leader election algorithm that uses $O(n \cdot \idmax)$ messages~\cite[Theorem~2]{frei2024content}. In this case, the algorithm does not guarantee explicit termination. The algorithm  only requires that the node holding the maximum identifier is unique, and does not require all identifiers to be distinct.
\end{description}

Since in the anonymous setting nodes are not assigned $\id$s a priori, using the above deterministic algorithms as a black box in designing randomized algorithms requires generating $\id$s via randomness.

\subparagraph{Uniform Randomized Algorithms.} Using Result~2, Frei~et~al.~\cite{frei2024content} presented a \emph{uniform} leader election algorithm that uses $n^{O(c^2)}$ messages and succeeds with probability $1 - n^{-c}$, but does not guarantee explicit termination. Each node samples a value $\mathsf{BitCount}$ from a geometric distribution with parameter $1 - 2^{-\Theta(1/c)} = \Theta(1/c)$, so that $\mathbb{E}[\mathsf{BitCount}] = \Theta(c)$ and $\mathsf{BitCount} = O(c^2 \log n)$ with probability $1 - n^{-c}$. The node then samples its $\id$ uniformly at random from the set of identifiers of length at most $\mathsf{BitCount}$. With probability $1 - n^{-c}$, the maximum identifier satisfies $\idmax = n^{\Omega(c)}$ and also $\idmax = n^{O(c^2)}$. The lower bound guarantees that the maximum identifier is unique with probability $1 - n^{-c}$, while the upper bound ensures that the overall message complexity is $n^{O(c^2)}$.

We remark that the high message complexity $n^{O(c^2)}$ is partly due to enforcing a high success probability of $1 - n^{-c}$. If one is satisfied with a success probability $1 - 1/R$ for some \emph{constant} $R > 1$, then it suffices to sample $\id$s from a geometric distribution with mean $\Theta(R)$, which ensures that the maximum identifier is unique with probability $1 - 1/R$, yielding an overall success probability of $1 - 1/R$. With this modification, the overall message complexity improves to $O(n \log n)$, as $\idmax = O(R \log n)$ with high probability. This result can be compared with our randomized algorithm, which also achieves near-linear message complexity: Our algorithm further guarantees a high success probability of $1 - U^{-c} \geq 1 - n^{-c}$ and explicit termination, though at the cost of non-uniformity.

Due to the requirement of distinct identifiers, Result~1 cannot be applied in the uniform setting. In fact, when the sample size $n$ is large, no fixed discrete distribution $\mathcal{D}$ independent of $n$ can guarantee \emph{distinct} samples with constant probability. The argument is straightforward: let $X_1, X_2, \ldots, X_n$ be i.i.d.~random variables drawn from $\mathcal{D}$, and suppose $\Pr[X_1 = k] = p > 0$ for some $k$. By independence, $\Pr[X_2 = k] = p$ as well, so $\Pr[X_1 = X_2] = p^2$. The same holds for disjoint pairs $(X_3, X_4)$, $(X_5, X_6)$, and so on. For even $n$, this gives $\Pr\left[\bigwedge_{j=1}^{n/2} (X_{2j-1} \neq X_{2j})\right] = (1 - p^2)^{n/2}$. As $n$ grows, this probability approaches $0$, so for sufficiently large $n$, the probability of obtaining all distinct samples falls below any constant $\varepsilon > 0$. This shows that without an upper bound on $n$, Result~1 cannot be adapted to the randomized setting. Indeed, as discussed above, no quiescently terminating leader election can succeed with constant probability $\varepsilon$ in the anonymous setting; see Corollary~26 of the full version of~\cite{frei2024content} and \cite[Theorem~4.2]{Itai90negative}.

\subparagraph{Non-Uniform Randomized Algorithms.} For \emph{non-uniform} algorithms, the above obstacle can be circumvented. Given an upper bound $U$ on $n$, it suffices to ensure that for each random variable $X_i \sim \mathcal{D}$, the probability $\Pr[X_i = k]$ is at most $O(U^{-c})$ for every possible outcome $k$. This guarantees that the samples are \emph{distinct} with probability at least $1 - \binom{n}{2} U^{-c} \geq 1 - U^{2-c}$. Such a distribution can be realized by choosing identifiers uniformly at random from the set of all identifiers of length at most $\lceil c \log U \rceil$. In this way, one can directly apply the deterministic algorithm of Result~1. Since $\idmax = O(U^c)$, the overall message complexity is $O(n \cdot \idmax) = U^{O(c)}$. As with our randomized algorithm, this yields a non-uniform quiescently terminating leader election algorithm with a high success probability guarantee. However, the message complexity $O(U^c)$ of this approach is significantly higher than the $O(n \log^2 U)$ message complexity of our algorithm.

\end{document}